\newtheorem{problem}[theorem]{Problem}
\newcommand{\eqdef}{\stackrel{\text{def}}=}
\newcommand{\eqext}[1]{\stackrel{\text{\tiny(#1)}}=}
\newcommand{\dL}{\mathtt{{\scriptstyle L}}}
\newcommand{\dR}{\mathtt{{\scriptstyle R}}}
\newcommand{\dD}{\mathtt{{\scriptstyle D}}}
\newcommand{\fun}[3]{\ensuremath{#1\colon #2 \to #3}}
\newcommand{\mathcalsym}[1]{\ensuremath{\mathcal{#1}}\xspace}
\newcommand{\CA}{\mathcalsym{A}}
\newcommand{\CD}{\mathcalsym{D}}
\newcommand{\CF}{\mathcalsym{F}}
\newcommand{\CG}{\mathcalsym{G}}
\newcommand{\CL}{\mathcalsym{L}}
\newcommand{\CO}{\mathcalsym{O}}
\newcommand{\CP}{\mathcalsym{P}}
\newcommand{\CQ}{\mathcalsym{Q}}
\newcommand{\CR}{\mathcalsym{R}}
\newcommand{\CS}{\mathcalsym{S}}
\newcommand{\stany}{\mathrm{states}}
\newcommand{\Gw}{\omega}
\newcommand{\domain}[1]{\ensuremath{\mathbb{#1}}\xspace}
\newcommand{\N}{\domain{N}}
\newcommand{\R}{\domain{R}}
\newcommand{\limit}{\mathrm{limit}\xspace}
\newcommand{\defPlayer}[1]{\ensuremath{{\boldsymbol{#1}}}\xspace}
\newcommand{\eve}{\defPlayer{\exists}}
\newcommand{\adam}{\defPlayer{\forall}}
\newcommand{\dom}{\mathrm{dom}\xspace}
\newcommand{\powerset}{\mathsf{P}}
\newcommand{\init}{\mathrm{I}\xspace}
\newcommand{\lang}{\mathrm{L}\xspace}
\newcommand{\trees}{\mathrm{Tr}\xspace}
\newcommand{\runs}{\mathrm{Runs}\xspace}
\newcommand{\restr}{{\upharpoonright}}
\newcommand{\Fdis}{\CD}
\newcommand{\BC}{\mathrm{BC}\xspace}
\newcommand{\Fmu}{\vec{\mu_0}}
\newcommand{\evalInt}[2]{%
	\pgfmathparse{int(#2)}%
	{\global\edef#1{\pgfmathresult}}%
}
\newcommand{\evalFloat}[2]{%
	\pgfmathparse{#2}%
	{\global\edef#1{\pgfmathresult}}%
}
\title{Computing measures of weak-MSO definable sets of trees} %TODO Please add
\titlerunning{Computing measures of weak-MSO definable sets of trees}%optional, please use if title is longer than one line
\author{Damian Niwiński}{Institute of Informatics, University of Warsaw}{niwinski@mimuw.edu.pl}{}{}
\author{Marcin Przybyłko}{Fachbereich Informatik, University of Bremen}{przybyl@uni-bremen.de}{}{}
\author{Michał Skrzypczak}{Institute of Informatics, University of Warsaw}{mskrzypczak@mimuw.edu.pl}{0000-0002-9647-4993}{Polish National Science Centre 	grant 2016/22/E/ST6/00041.}
\authorrunning{D. Niwiński, M. Przybyłko, and M. Skrzypczak}%TODO mandatory. First: Use abbreviated first/middle names. Second (only in severe cases): Use first author plus 'et al.'
\keywords{infinite trees, weak alternating automata, coin-flipping measure}%TODO mandatory; please add comma-separated list of keywords
\begin{document}

\maketitle

\begin{abstract}
%
%This work addresses the problem of computing probability of sets of
%infinite trees with respect to the standard coin-flipping measure.
%An algorithm is provided  to compute the probability of tree
%languages recognizable by weak alternating automata, or equivalently
%definable in weak monadic second-order logic.  This class of
%languages, although smaller than all regular tree languages, is still
%fairly comprehensive. In particular, the solution works for tree
%languages definable in first\=/order logic as well as those definable
%in CTL.  Thus the new algorithm enhances the tollbox of quantitative
%model checking.
%
%\noindent
This work addresses the problem of computing measures of recognisable sets of infinite trees. An~algorithm is provided  to compute the probability measure of a~tree language recognisable by a~weak alternating automaton, or equivalently definable in weak monadic second\=/order logic. The measure is the uniform coin\=/flipping measure or more generally it is generated by a~branching stochastic process. The class of tree languages in consideration, although smaller than all regular tree languages, comprises in particular the languages definable in the alternation\=/free $\mu$\=/calculus or in temporal logic CTL. Thus, the new algorithm may enhance the toolbox of  probabilistic model checking.
\end{abstract}
\newpage

\section{Introduction}
\label{sec:intro}

%\MS{I like this new introduction very much. It is to the point and makes the result seem strong :)}

The non\=/emptiness problem asks if~an~automaton accepts at~least one object. From a~logical perspective, it~is an~instance of~the consistency question: does a~given specification have a~model? Sometimes it is also relevant to ask a~quantitative version of~this question: whether a~\emph{non\=/negligible} set of models satisfy the specification. When taken to the realm of probability theory, this boils down to estimating the probability that a~random object is~accepted by~a~given automaton. In~this paper, models under consideration are infinite binary trees labelled by~a~finite alphabet. Our main problem of~interest is~the following.
\begin{problem}
\label{pro:comp-mso}
Given a~regular tree language $L$, compute the probability that a~randomly generated tree belongs to~$L$.
\end{problem}

\noindent
In~other words, we~ask for the probability measure of~$L$.
Here, the tree language~$L$ might be~given by~a~formula of~monadic second\=/order logic, but for complexity reasons it~is~more suitable to~present it~by
a~tree automaton or~by~a~formula of~modal $\mu$\=/calculus, see e.g.~\cite{gradel_automata,demri-temporal-book}.
By default, the considered measure is~the uniform \emph{coin\=/flipping} measure, where each letter is~chosen independently at~random; but also more specific measures are of~interest. If~the computed probability is~rational    then it~can 
be
represented explicitly, but the measure can~be irrational, see e.g.~\cite{michalewski_comp_measure}, and may require more complex representation.  One of~the possible choices, exploited in~this paper, is~a~formula over the field of~reals~$\R$.

Chen et~al.~\cite{chen_model_checking} addressed Problem~\ref{pro:comp-mso} in~the case where the tree language~$L$ is~recognised by~a~deterministic top\=/down automaton and the measure is~induced by~a~stochastic branching process, which then makes also a~part of~the input data. Their algorithm compares the probability with any~given rational number in~polynomial space and with $0$ or~$1$ in~polynomial time. The limitations of~this result come from the deterministic nature of~the considered automata: deterministic top\=/down tree automata are known to~have limited expressive power within all regular tree languages.

Michalewski and Mio~\cite{michalewski_comp_measure} stated Problem~\ref{pro:comp-mso} explicitly and solved it~for languages~$L$ given by~so\=/called \emph{game automata} and the coin\=/flipping measure.
This class of~automata subsumes deterministic ones and captures some important examples including the game languages, cf.~\cite{murlak_game_auto}, but even here the strength of~non\=/determinism is~limited; in particular, the class is not closed under finite union. The algorithm from~\cite{michalewski_comp_measure} reduces the problem to~computing the value of~a~Markov branching play, and uses Tarski's decision procedure for the theory of~reals. These authors also discover that the measure of~a~regular tree language can be~irrational, which stays in~contrast with the case of~$\omega$\=/regular languages, i.e.~regular languages of~infinite words, where the coin\=/flipping measure is~always rational, cf.~\cite{stochastic-parity-soda-04}.
%\smallskip

%\MS{Ja bym byl za stroceniem kolejnego akapitu by zasygnalizowac wyniki~\cite{przybylko_first_order} a potem powiedziec onkretniej tylko o non-det safety. Ale nie wiem co myslicie?}

%\DN{Sprobuje, w pierwszej wersji tak bylo.}

%\todo{referencja do Irańczyka}

Another step towards a~solution to Problem~\ref{pro:comp-mso} was 
%given 
made
by the second author of~the present article, who proposed an~algorithm to~compute the coin\=/flipping measure of~tree languages definable in fragments of first\=/order logic~\cite{przybylko_first_order}. 
% The algorithm settles 
%in triple exponential time 
%the cases when the language~$ L$ is~given by~a~formula involving monadic predicates and  child relation but not descendant relation, or~by~a~%and in exponential space the case of 
%Boolean combinations of~conjunctive queries (with descendant relation) ~\cite{przybylko_first_order}.  The respective classes of~tree languages,  although relatively limited, are incomparable with those definable by~game automata.
%The author also demonstrates an~example of~a~tree language defined by~a~first\=/order formula (with descendant) having irrational probability.
This work is~subsumed in~a~report~\cite{przybylko_fo_arxiv} (accepted for publication in~a~journal) co\=/authored with~the third author, where a~new class of languages~$L$ is also resolved:
tree languages recognised by safety automata, i.e.~non\=/deterministic automata with a~trivial accepting condition.

An~analogue of~Problem~\ref{pro:comp-mso} can be stated for $\omega$\=/regular languages.  As~noted by~\cite{chen_model_checking}, the problem then reduces to~a~well\=/known question in~verification solved by~Courcoubetis and Yannakakis~\cite{courcou-yanna-acm} already in~the 1990s,  namely whether a~run of a~finite\=/state Markov chain satisfies an~$\omega$\=/regular property. The algorithm runs in~single\=/exponential time w.r.t.~the automaton (and linear w.r.t.~the Markov chain). A~related question was also studied by~Staiger~\cite{staiger_computable_98}, who gave an~algorithm to~compute Hausdorff dimension and Hausdorff measure of a~given $\omega$\=/regular language.

In general, Problem~\ref{pro:comp-mso} remains unsolved. At~first sight, one may even wonder if it is well\=/stated, as~regular tree languages need not in~general be Borel, cf.~\cite{niwinski_gap}.
% (in contrast to the MSO\=/definable sets of infinite words, which are on the 3rd level of the Borel hierarchy~\cite{landweber_69}); it 
However, due to~\cite{mio_game_semantics,michalewski_measure_final}, we~know that regular languages of~trees are always universally measurable.

In~the present paper, we~solve Problem~\ref{pro:comp-mso} in~the case where the language~$L$ is~recognised by~a~weak alternating automaton or,~equivalently, defined by~a~formula of~weak monadic second\=/order logic, cf.~\cite{muller_alternating_complexity}.  
The class of~tree languages in~consideration is~incomparable with the one considered by~Michalewski and Mio~\cite{michalewski_comp_measure}, but subsumes those considered in~\cite{przybylko_first_order,przybylko_fo_arxiv}. Yet another presentation of~this class can be~given in~terms of~alternation\=/free fragment of modal $\mu$\=/calculus, see~\cite{arnold_fixed_weak} for details. This fragment is known to be useful in~verification and model checking, 
%For instance, the present result applies to the logic~CTL as~it~can be~encoded in~modal $\mu$\=/calculus.
in particular, temporal logic CTL embeds into this fragment.  

We~consider the coin\=/flipping measure as~our primary case, but we~also show how to~extend our approach to~measures generated by~stochastic branching processes, as~in~\cite{chen_model_checking}. The computed probability is~presented by~a~first\=/order formula in~prenex normal form over the field of~reals. The provided formula is~exponential in~the size of~the automaton and polynomial in~the size of~the branching process. Moreover, the quantifier alternation of~the computed formula is constant (equal $4$). Combined with the known decision procedures for the theory of~reals, this gives the following.
\begin{theorem}
\label{thm:main-branching}
There 
%exists 
is
an~algorithm that inputs a~weak alternating parity automaton $\CA$, a~branching process $\CP$, and a~rational number $q$ encoded in binary; and decides if~the measure generated by~$\CP$ of~the language recognised by~$\CA$ is~equal, smaller, or greater than $q$. The algorithm works in~time polynomial in $q$, doubly exponential in~$\CA$, and singly exponential in~$\CP$.
\end{theorem}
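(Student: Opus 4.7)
The plan is to reduce the decision problem to the first-order theory of the ordered field of reals, via an intermediate construction of a formula $\varphi(x)$ in prenex normal form, with a single free variable, whose unique solution over $\R$ equals the measure $\mu_\CP(L(\CA))$. Given such a formula and a rational input $q = a/b$, each of the queries $\mu < q$, $\mu = q$, $\mu > q$ can be rephrased as a first-order sentence, for instance $\exists x.\, \varphi(x) \wedge bx < a$; an off-the-shelf decision procedure for the theory of reals with bounded quantifier alternation, such as those of Renegar or Basu--Pollack--Roy, then resolves the question.

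The heart of the argument is building $\varphi$ with the claimed size and a constant number of quantifier alternations. I would start from the fact that the acceptance game of a weak alternating parity automaton on a fixed tree is positionally determined, so the probability of acceptance equals $\sup_\sigma \inf_\tau \Pr[\text{Eve wins}]$ ranging over positional strategies of the two players; this gives the outer sup-inf structure that translates into two quantifier alternations over $\R$. For each pair consisting of a state of $\CA$ and a state of $\CP$ I would introduce a real variable $x_{q,p}$ representing the probability of acceptance starting from that configuration, and write down the constraints linking these values through the transition function of $\CA$ and the branching probabilities of $\CP$.

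The key technical step is expressing the joint value vector $(x_{q,p})_{q,p}$ in first-order logic. Since $\CA$ is weak, its state space stratifies by priority, and within each stratum the automaton behaves as a safety or reachability subautomaton relative to the values at higher strata. Inside a stratum the probabilities form the least or greatest solution of a system of polynomial fixed-point equations, depending on whether the stratum is rejecting or accepting. The quantification ``least solution'' is itself expressible in the first-order theory of reals by naming a feasible point existentially and universally asserting that no smaller feasible point exists; handling all strata in parallel adds two further alternations on top of the sup-inf block, keeping the total alternation at $4$. The size of the resulting formula is dominated by the product of state spaces of $\CA$ and $\CP$, with the alternating transition structure of $\CA$ contributing an exponential factor while $\CP$ contributes only polynomially.

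Feeding the formula into the constant-alternation decision procedure for the reals then yields a running time polynomial in the bit-size of $q$, doubly exponential in $|\CA|$ (the formula has size exponential in $|\CA|$ and the decision procedure contributes another exponential on top), and singly exponential in $|\CP|$. The main obstacle I anticipate is the correctness of the fixed-point encoding for each weak stratum: establishing measure-theoretically that the probability of acceptance, restricted to a single stratum and conditional on the already-computed higher-stratum values, is indeed the least or greatest solution of the polynomial system, and then checking that the recursive coupling across strata produces a formula with a unique real solution equal to $\mu_\CP(L(\CA))$ and of exactly the stated size and alternation depth.
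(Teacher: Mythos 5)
Your overall architecture---encode $\mu_\CP\big(\lang(\CA)\big)$ as the unique solution of a prenex formula of constant quantifier alternation over the reals, decompose the weak automaton into priority strata treated as safety or reachability and characterised by extremal fixed points of polynomial systems, then invoke a singly exponential decision procedure for bounded-alternation sentences---is exactly the paper's. But the central technical choice would fail. One real variable $x_{q,p}$ per pair of states, holding ``the probability of acceptance from that configuration,'' does not carry enough information to close the recursion for an alternating (or even nondeterministic) automaton: if $\delta(q,a)=(\dL,q_1)\land(\dL,q_2)$, the probability that the left subtree is accepted from both $q_1$ and $q_2$ is not a function of the two marginal probabilities, because both events concern the \emph{same} random subtree and are correlated. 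The paper therefore tracks the full joint distribution of the random set $\{q\in Q \mid t\in\lang(\CA,q)\}$, i.e.\ works in $\Fdis{\powerset(Q)}$ with one variable per \emph{subset} of $Q$; independence of the left and right subtrees (Remark~\ref{rem:measure}) is what then makes the quadratic operator $\CF$ correct. This powerset construction, not the transition structure, is the true source of the exponential in $|\CA|$, and it forces the fixed points to be extremal with respect to the stochastic order $\preceq$ (defined via upward-closed families, which must themselves be quantified over in the formula), not the coordinatewise order on acceptance probabilities.

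Two further points. The opening reduction to $\sup_\sigma\inf_\tau\Pr[\text{Eve wins}]$ over positional strategies is unjustified: for each fixed tree the acceptance game is determined, and the quantity to compute is the measure of the set of trees on which Eve wins, not the value of a stochastic game played against a partially revealed tree; equating the two would require a measurable-selection or imperfect-information argument you do not supply, and the paper avoids this entirely by reasoning about the languages $\CS^n_i(p)$, $\CR^n_i(p)$ directly. Second, ``least/greatest solution of the polynomial system'' must be anchored: the single operator $\CF$ has one fixed point per stratum, so the correct characterisation is \emph{greatest fixed point below} the stratum's initial distribution $\Fmu(\CS^n_0)$ (resp.\ least above $\Fmu(\CR^n_0)$), and the fact that the limit distribution is a fixed point at all---the obstacle you flag at the end---is resolved in the paper by a K\"onig's-lemma argument showing that the step-indexed approximations converge to the limit language uniformly enough for the measure to pass to the limit.
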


\begin{comment}
Similarly to the approach taken in~\cite{przybylko_fo_arxiv}, we~reduce the problem to~computation of~an~appropriate probability distribution over the powerset of~the automaton's states. To~do~so, we~use the structure $\langle \Fdis{\powerset(Q)}, \preceq \rangle$ of~all such distributions with a~suitable ordering. The key idea is~to~approximate the target  language~$L$ by~two families of~series of~tree languages, representing safety and reachability properties,  respectively. The measures of~these approximations are computed by~mutual induction
with a~repeated use of~limit constructions. Our ability to~perform these limit constructions is~based on a~synergy between the order and topological properties of~$\Fdis{\powerset(Q)}$.
%
Similarly to the approach taken in~\cite{przybylko_fo_arxiv}, we~reduce the problem to~computation of~an~appropriate probability distribution over the powerset of~the automaton's states. To~do~so, we 
consider 
 the set of~all such distributions $\Fdis{\powerset(Q)} $ with  a suitable ordering $\preceq  $.
The transformation of ordered sets $\langle \powerset(Q) , \subseteq \rangle $ $\mapsto $
$\langle \Fdis{\powerset(Q)}, \preceq \rangle$ 
can be in fact  seen  as an~instance of 
a~functor,
%related to  a more general
but we do not use the category-theoretic language in this paper (for related concepts see~\cite{Plotkin-proba-89}).  
\end{comment}

Similarly to the approach taken in~\cite{przybylko_fo_arxiv}, we~reduce the problem to~computation of~an~appropriate probability distribution over the powerset of~the automaton's states. To~do~so, we 
consider 
 the set of~all such distributions $\Fdis{\powerset(Q)} $ with  a suitable ordering $\preceq  $.
The structure is in fact a finitary case of a probabilistic powerdomain introduced by 
Saheb-Djahromi~\cite{saheb-powerdomain} (see also~\cite{Plotkin-proba-89}),
but we do not exploit  category-theoretic concepts in this paper.
The key step is an approximation of the target  language~$L$ by~two families of~tree languages  representing safety and reachability properties,  respectively. Then we can apply  fixed-point constructions thanks to a kind of synergy between the order and topological properties of~$\Fdis{\powerset(Q)}$.

%\section{Basic concepts}
\section{Trees, topology, and measure}
\label{sec:basic}

The set of natural numbers $\{0, 1, 2, \ldots \}$  is denoted  by $\N$,
or by~$\omega$ whenever we treat it~as an~ordinal.  
A~finite non\=/empty set $A$ is called an~\emph{alphabet}. By $\powerset(X)$ we denote the family of all subsets of a~set $X$.
The set of \emph{finite words} over an~alphabet~$A$ (including the \emph{empty
word} $\varepsilon$) is denoted by $ A^*$,  and the set of $\omega
$\=/words by~$A^{\omega }$. The \emph{length} of a~finite word $w\in A^*$ is denoted by~$|w|$.
A~\emph{full infinite binary tree} over an~alphabet~$A $ (or simply a~\emph{tree} if
confusion does not arise) is a~mapping~$\fun{t}{\{ \dL , \dR \}^*}{A}$.
The set of all such trees, denoted by  $\trees_A $, 
can be~equipped with a~topology induced by~a~metric 
\begin{equation*} \label{distance}
d( t_1, t_2 ) = \left\{\begin{array}{ll}
0
&\mbox{ if $t_1 = t_2 $}\\
2^{-n} \mbox{ with } n = \mbox{min} \{ |w| \mid t_1 (w) \neq t_2 (w) \}
& \mbox{ otherwise. }
\end{array}\right.
\end{equation*}
It is well\=/known that this topology coincides with the product
topology on $A^{\omega} $, where $A$ is~a~discrete topological space. 
The topology can be generated by~a~basis consisting of all the sets $U_f$, where $\fun{f}{\dom(f)}{A}$ is a function with a finite domain $ \dom(f)\subset \{ \dL , \dR \}^*$,
and $U_f$ consists of all trees $ t$ that coincide with $ f$  on  $
\dom(f)$.  If $ A$ has at
least $2$ elements then this topology is homeomorphic to the Cantor
discontinuum $\{0,1\}^{\omega } $ (see, e.g.~\cite{perrin_pin_words}).

The set of trees can be further equipped with a~probabilistic measure $\mu_0
$, which is~the standard Lebesgue measure on the product space 
defined on the basis by $\mu_0 \left(  U_f \right) =  \big|A\big|^{-| \dom(f)|}$.

We note a  useful property   of this measure, which intuitively
amounts to saying that
events happening  in incomparable nodes are
independent.  For $t\in\trees_A$ and $ v \in \{ \dL , \dR \}^*$, 
the subtree of~$ t$ induced by $ v$  is a tree $t\restr_v
\in\trees_A$ defined by $t\restr_v (w) = t
(vw)$,   for $w \in  \{ \dL , \dR \}^*$.
\begin{remark}\label{rem:measure}
If $ v_1, \ldots , v_k \in \{ \dL , \dR
\}^* $  are pairwise incomparable  nodes (i.e., none is a prefix of another)  
and
$ V_1, \ldots , V_k \subseteq \trees_A $  are  Borel sets then
\begin{eqnarray}\label{equality:measure}
\mu_0 \left(   
\{  t\in\trees_A  \mid   t\restr_{v_i} \in V_i \mbox{ for  } 
i = 1, \ldots, k
\} \right)& = &
\mu_0 (V_1) \cdot \ldots \cdot \mu_0 (V_k).
\end{eqnarray}
\end{remark}
\begin{comment}
%\todo{Review 1: ``124--133 could be omitted''}
Note that, for $ k=1$, the above equality means that
$\mu_0 \left(   
\{  t\in\trees_A  \mid   t\restr_{v_1} \in V_1 
\} \right) = \mu_0 (V_1)$, which  can be easily verified for the basis sets
and then shown  by induction for all Borel sets thanks to the continuity of
the measure.   Using this case,  the right-hand side of 
Equality~\eqref{equality:measure}  amounts to the product of the measures
$\mu_0 \left(   
\{  t\in\trees_A  \mid   t\restr_{v_i} \in V_i
\} \right) $, for $i = 1,\ldots ,k $.  Then Equality~\eqref{equality:measure} can 
 be shown,  again by induction on Borel
sets, where the basic  case relies on the following observation.
If $\fun{f_1}{\dom(f_1)}{A}$ and
$\fun{f_2}{\dom(f_2)}{A}$ have disjoint (finite) domains, then
\begin{eqnarray*}
\mu_0\left(U_{f_1}\cap U_{f_2}\right)=
\mu_0 \left( U_{f_1 \sqcup f_2} \right) =
\big|A\big|^{- (  |\dom (f_1)| + |\dom (f_2)|)} = 
\mu_0 \left( U_{f_1} \right) \cdot \mu_0 \left( U_{f_2} \right) .
\end{eqnarray*}

%Moreover, we require $\mu_0$ to treat particular positions of the
%tree independently: if $\fun{f_1}{\dom(f_1)}{A}$ and
%$\fun{f_2}{\dom(f_2)}{A}$ have disjoint domains, then
%$\mu_0\left(U_{f_1}\cap U_{f_2}\right)=\mu_0\big(U_{f_1}\big)\cdot
%\mu_0\big(U_{f_2}\big)$.

\noindent
\end{comment}
\noindent
We refer to e.g.~\cite{michalewski_measure_final} for more 
detailed considerations of~measures on sets of~infinite trees.

%\paragraph{Partial orders.}
%By $X$, $Y$, $Z$ we will denote finite partially orders sets, where the order relation is ${\leq}$. A function $\fun{f}{X}{Y}$ is \emph{monotone} if $\forall x\leq y\in X.\ f(x)\leq f(y)$. A~subset $U\subseteq X$ is called \emph{upward\=/closed} if $\forall x\leq y\in X.\ x\in U\rightarrow y\in U$. For a~pair of functions $\fun{f,g}{X}{Y}$ we say that $f\leq g$ if $\forall x\in X.\ f(x)\leq g(x)$.

\section{Tree automata and games}

An~{\em alternating parity automaton\/} over infinite trees can be~presented
as~a~tuple $\CA=\langle A, Q, q_\init, \delta, \Omega\rangle$, where
$A$ is~a~finite alphabet; $Q$  a~finite set of~\emph{states}; $q_\init\in Q$
an~\emph{initial state};  $\fun{\delta}{Q\times
  A}{\BC^{+}\big(\{\dL,\dR\}\times Q\big)}$  a~\emph{transition function}
that assigns to~a~pair~$(q,a)\in Q\times A$ a~finite positive Boolean
combination of~pairs~$(d,q')\in\{\dL,\dR\}\times Q$; and finally
$\fun{\Omega}{Q}{\N}$ is a~\emph{priority mapping}.% For the sake of succinctness, we allow expressions $\bot$ and $\top$ in the Boolean combinations within $\delta$, but it does not matter much.

In~this paper, we~assume that automata are {\em weak\/},
i.e.~the priorities~$\Omega(q)$ are non\=/increasing along
transitions. More precisely, if~$(d,q')$ is~an~atom that appears in~the
formula~$\delta(q,a)$ then $\Omega(q)\geq\Omega(q')$. Given
$n\in\N$, we~denote by~$Q_{<n}$ and $Q_{\geq n}$ the subsets of~$Q$
consisting of~those states whose priority is~respectively strictly smaller or~greater than~$n$.

The semantics of~an~automaton can be~given in~a~terms of~a~game
%$\CG(t,p)$ 
played by~two players \eve and \adam over a~tree~$t$ in~$\trees_A $ from a~state~$p \in Q$.
Let $\Gamma$ be~the set of~all sub\=/formulae of~the formulae in~$\delta(q,a)$,
for all~$(q,a)\in Q\times A$. 
%Then  
The set of~\emph{positions}
%$\CG(t,p)$ 
of~the game is~the set~$\big(Q\sqcup \Gamma\big)\times \{\dL,\dR\}^\ast$ and the
\emph{initial position} is~$\big(p,\varepsilon\big)$. The positions of~the
form $\big(q,v\big)$, $\big(\phi_1{\lor}\phi_2, v\big)$, and
$\big((d,q), v\big)$ are controlled by~\eve, while the positions of~the
form~$\big(\phi_1{\land}\phi_2, v\big)$ are controlled by~\adam.
The edges connect the following types of~positions:

\begin{itemize}
\item $\big(q,v\big)$ and $\big(\delta(q, t(v)), v\big)$,
\item $\big(\phi_1{\lor}\phi_2, v\big)$ and $\big(\phi_i, v\big)$ for $i=1,2$,
\item $\big(\phi_1{\land}\phi_2, v\big)$ and $\big(\phi_i, v\big)$ for $i=1,2$,
\item $\big((d,q), v\big)$ and $\big(q, v\cdot d\big)$.
\end{itemize}

\noindent
We~assume that every formula in~the image~$ \delta(Q\times A)$ is~non\=/trivial and, thus, every position is~a~source of~some edge.
\smallskip

The directed graph described above forms  the  {\em arena\/} of~our game
that we~denote by~$\CG(t,p) $.  A~{\em play\/} in~the arena is~any
infinite path starting from the initial position~$\big(p,\varepsilon\big)$.
We~call the positions of~the form~$(q,v) $  {\em state positions\/}.
Given a~play~$ \pi$,
the \emph{states} of~the play denoted $\stany \, (\pi )$  is~the
sequence of~states $(q_0, q_1, \ldots)\in Q^\Gw$ such that the
successive state positions visited during $\pi$ are $\big(q_i, v_i\big)$,  for
$i=0,1,\ldots$,  and some 
%$(v_i)_{i\in\N}$. 
$(v_i)_{i \in  \omega }$. 
\medskip

To complete the definition of~the game, we~specify a~winning
criterion for~\eve.  The default  is~the \emph{parity condition}, but we~will also  consider other
criteria. 
Let 
\begin{eqnarray*}
\runs\eqdef\{\rho\in Q^\Gw\mid \forall i\in\omega.\
\Omega(\rho(i))\geq \Omega(\rho(i{+}1))\big\}
\end{eqnarray*}
be~the set that contains 
all sequences of states that induce non\=/increasing sequences of priorities.
Notice that since $\CA$ is weak, only such sequences may arise in the
game.
In~general, a~\emph{winning condition} is~any set~$W\subseteq \runs$. That is,  a~play~$\pi$ is~\emph{winning} for \eve  with
respect to~$W$ if, and only if, 
% the sequence of states $(q_0,q_1, \ldots)\in Q^\Gw$ of $\pi$ satisfies
$\stany \, (\pi ) \in W$. The game with a~winning set~$W$ is~denoted by~$\CG(t,p,W)$.

\smallskip
The \emph{parity condition} $W_P \subseteq \runs $ for a~weak
automaton  amounts to saying that
$(q_0,q_1,
\ldots)\in W_P$ if~$\lim_{i\to\infty} \Omega(q_i)\equiv 0 \mod 2$,
i.e.~the limit priority of~states visited in~a~play is~even. 
Let
$\lang(\CA,p)$ be~the set
 of~trees such that \eve has a~winning strategy in~$\CG(t,p, W_P)$. 
Then, the language  of~an~automaton~$\CA$ is~the set
$\lang(\CA) \eqdef \lang(\CA,q_\init)$, where $q_\init$ is~the initial state of~$\CA$.
\medskip

As~mentioned above, we~will consider  games with various winning
criteria. The following simple observation is~useful.

\begin{remark}
\label{rem:win-monotone}
If $W\subseteq W'\subseteq \runs$ then the following implication holds: if \eve wins $\CG(t,p,W)$ then \eve wins also $\CG(t,p,W')$.
\end{remark}

Since the winning criteria in~consideration 
will always be~$\omega $\=/regular languages of~infinite words, we~implicitly rely on~the following classical
fact~(cf.~\cite{gradel_automata}).

\begin{proposition}
\label{pro:determinacy}
Games on graphs with $ \omega$\=/regular winning conditions are finite
memory determined.
\end{proposition}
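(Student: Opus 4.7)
The plan is to reduce the claim to the well-known positional determinacy of parity games, using the standard determinization-and-product construction. The overall architecture: given an arbitrary arena $G$ and an $\omega$-regular winning set $W \subseteq Q^\omega$, I recognize $W$ by a deterministic parity automaton $\CB = \langle Q, S, s_\init, \delta_\CB, \Omega_\CB\rangle$ on $\omega$-words (with $\CB$ existing by McNaughton/Safra's theorem applied to the $\omega$-regular language $W$), form the synchronous product $G \times \CB$, and replace the winning condition by the parity condition on the priorities of $\CB$. The result is a parity game whose positions are pairs (position of $G$, state of $\CB$), and a play in the product projects to a play in $G$ that is winning iff the $\CB$-component meets the parity condition, i.e., iff the induced sequence of states lies in $W$.

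For the product game, I invoke the classical theorem of Emerson--Jurdziński (or Mostowski) that parity games on arbitrary graphs are positionally determined: from every position exactly one player has a positional winning strategy. Since the two players' roles are preserved by the product (ownership of a position $(v,s)$ is just the ownership of $v$ in $G$), positional determinacy of $G\times \CB$ transfers back to $G$ as follows. A positional strategy $\sigma^\ast$ on $G\times \CB$ depends only on the current product position, hence on the pair (current position of $G$, current state of $\CB$); since $\CB$ is deterministic, the state component is a function of the history of states visited, computed on the fly by a finite transducer with state space $S$. This gives a finite-memory strategy in $G$ with memory $S$; correctness of the translation is immediate from the fact that a play in $G$ is winning iff its projection in $G\times\CB$ satisfies the parity condition.

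The determinacy half (as opposed to the finite-memory half) could alternatively be obtained from Martin's Borel determinacy theorem, since $\omega$-regular sets are Borel; but the above reduction yields determinacy and the finite-memory bound in one stroke, which is what the proposition asserts. I should also note Remark~\ref{rem:win-monotone}-style care: the construction is applied with $W$ viewed as a subset of $\runs$, and determinization works verbatim on the restricted set of non-increasing sequences that actually arise in the weak setting.

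The main obstacle — or rather the only nontrivial external input — is the positional determinacy of parity games, which I am treating as a cited black box. Secondarily, one has to be mildly careful that the $\omega$-regular condition is specified on sequences of states (not edges) and that player ownership is unambiguously inherited by the product; both are routine but worth stating explicitly. Everything else (McNaughton's theorem, product of arena and automaton, translation of memoryless to finite-memory) is a standard package in infinite-game theory, so the proof is essentially a bookkeeping exercise assembling these ingredients.
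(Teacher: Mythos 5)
The paper does not prove this proposition at all---it is stated as a classical fact with a citation to the literature---and your argument (determinise $W$ into a deterministic parity automaton, take the product with the arena, invoke positional determinacy of parity games, and fold the automaton state into finite memory) is precisely the standard proof behind that citation, so it is correct and matches the paper's intent. Two minor points: the arenas $\CG(t,p)$ here are infinite, so you need positional determinacy of parity games over arbitrary (not just finite) arenas with finitely many priorities, which does hold; and the finite-arena result is due to Emerson--Jutla and Mostowski, not ``Emerson--Jurdzi\'nski''.
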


\noindent 
We~will also use the following fact, cf.\ e.g.\ \cite{muller_alternating_complexity,skurczynski_borel_infinite}.

\begin{proposition} 
\label{pro:measurability}
For a~weak alternating parity automaton~$ \CA$, all tree languages
$\lang(\CA,p)$ are Borel and, consequently, measurable with respect to~the uniform measure~$ \mu_0$ (and also any other Borel measure on $\trees_A$).
\end{proposition}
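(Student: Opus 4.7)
The plan is to establish Borelness of $\lang(\CA,p)$ by induction on the number of distinct values taken by $\Omega$ on the states reachable from $p$; measurability is then automatic, as every Borel set is $\mu_0$-measurable (and measurable for any Borel measure on $\trees_A$).

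For the base case of a single priority $m$, every element of $\runs$ has constant priority $m$, so the parity condition $W_P$ equals $\runs$ if $m$ is even and $\emptyset$ if $m$ is odd. In the first case \eve wins every play and $\lang(\CA,p)=\trees_A$; in the second she wins none and $\lang(\CA,p)=\emptyset$. Both are trivially Borel.

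For the inductive step, let $n$ be the maximum value of $\Omega$ on states reachable from $p$, and let $Q_{=n}=Q_{\geq n}\setminus Q_{>n}$. Since $\CA$ is weak, the complement $Q_{<n}$ is invariant under transitions: no play that enters $Q_{<n}$ can ever leave it. Thus the restriction of $\CA$ to $Q_{<n}$ is itself a weak parity automaton with strictly fewer distinct priorities, and by the induction hypothesis the language $\lang(\CA,q)$ is Borel for every $q\in Q_{<n}$. Any play of $\CG(t,p,W_P)$ falls into one of two cases: (a) it eventually enters a state $q\in Q_{<n}$ at some node $v\in\{\dL,\dR\}^\ast$, after which it continues as a play of $\CG(t\restr_v,q,W_P)$ entirely within the restricted automaton; or (b) it stays in $Q_{=n}$ forever, with limit priority $n$, so it is winning for \eve iff $n$ is even. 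Using Proposition~\ref{pro:determinacy} on the finite-state abstraction that treats entry into each $q\in Q_{<n}$ at each node $v$ as a terminal position with winning value $[t\restr_v\in \lang(\CA,q)]$, and treating "stay in $Q_{=n}$ forever" as an $\omega$-regular safety condition, one can express $\lang(\CA,p)$ as a Boolean combination of: preimages $\{t:t\restr_v\in\lang(\CA,q)\}$ for $q\in Q_{<n}$ and $v\in\{\dL,\dR\}^\ast$, which are Borel as continuous preimages of Borel sets; countably many clopen reachability conditions stating that a chosen positional strategy forces a given exit $(q,v)$; and countably many closed safety conditions stating that the play is kept in $Q_{=n}$ forever. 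A countable Boolean combination of Borel sets is Borel.

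The main technical obstacle is the bookkeeping between the infinite arena $\CG(t,p)$ and its finite-state abstraction: one must confirm that the partition of plays according to "enters $Q_{<n}$ at $(q,v)$" versus "stays in $Q_{=n}$ forever" interacts with positional strategies in a way that genuinely decomposes the winning region as the stated Boolean combination. This is precisely where the nonincreasing-priority property of weak automata is used, preventing the induction from spiralling and keeping the decomposition finitary.
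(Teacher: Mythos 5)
The paper does not actually prove this proposition; it cites \cite{muller_alternating_complexity,skurczynski_borel_infinite}. Your skeleton --- induction on the number of distinct priorities, using weakness to make $Q_{<n}$ absorbing, and splitting plays into those that eventually drop below the top priority $n$ and those that stay at priority $n$ forever --- is the standard route and is essentially the same decomposition the paper itself exploits in Section~\ref{sec:recognition} via the families $\CS^n_i$ and $\CR^n_i$. The base case and the strategy-composition step reducing to the abstracted game are fine.

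The gap is at the one step that carries all the load: you assert that the set of trees on which \eve \emph{has a winning strategy} in the abstracted game is a countable Boolean combination of Borel sets, and among the pieces you list are ``clopen reachability conditions stating that a chosen positional strategy forces a given exit''. A positional strategy on the arena $\CG(t,p)$ is a function on a countably infinite set of positions, so there are uncountably many of them; a disjunction indexed by positional strategies is not a countable Boolean combination, and without further argument ``\eve has a winning strategy'' is only a $\Sigma^1_1$-style condition on $t$. What closes the gap is a finite-horizon approximation justified by K\H{o}nig's lemma on the finitely branching arena --- precisely the content of the paper's Lemma~\ref{lem:konig-for-strat} and Lemma~\ref{lem:limit-steps-S-R}, which you never invoke. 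Concretely: for $n$ even the abstracted condition is a safety condition, and by determinacy plus K\H{o}nig's lemma applied to \adam's reachability strategy, \eve wins iff for every $J$ she wins the depth-$J$ truncation; for $n$ odd it is a reachability condition, and K\H{o}nig's lemma applied to the tree of non-exited $\sigma$-consistent play prefixes shows \eve wins iff for some $J$ she forces exit within $J$ steps. Each depth-$J$ winning region is a finite Boolean combination, over the \emph{finitely many} depth-$J$ strategy fragments, of clopen sets and of the sets $\{t\mid t\restr_v\in\lang(\CA,q)\}$ with $|v|\leq J$, which are Borel by the induction hypothesis and continuity of $t\mapsto t\restr_v$; the countable intersection (resp.\ union) over $J$ is then Borel. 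With that step supplied, your induction goes through; as written, the key claim is asserted rather than proved, and the quantification over strategies as you describe it does not yield a countable combination.
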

Note that measurability holds for
non\=/weak automata as well~\cite{michalewski_measure_final}.

\section{Approximations}
\label{sec:recognition}

For the sake of~this section we~fix a~weak alternating parity automaton~$\CA$.
Our aim is~to~provide some useful  approximations
for the tree languages~$\lang(\CA, p)$.
The approximations are simply some~families of~tree languages indexed by~states~$p\in Q$.
Those families, called \emph{$Q$\=/indexed families}, or \emph{$Q$\=/families} for short,
are~represented by~functions~$\fun{\CL}{Q}{\powerset(\trees_A)}$.
% that assigns to each state $q\in Q$ a~set of trees $\CL(q)\subseteq\trees_A$.  
By the construction, we will guarantee that the tree languages~$\CL(q)$ 
will themselves be recognisable by~some weak alternating automata.
%By an obvious correspondence between the function spaces $\left(  2^X\right)^Y $ and $\left( 2^Y\right)^X $, 
Each $ Q$\=/family naturally possesses
a~dual representation by~a~mapping~$\trees_A \to \powerset(Q) $ that we~denote by~the
same letter (but with different brackets)
\[\CL[t]\eqdef \{q\in Q\mid t\in \CL(q)\}\in\powerset(Q).\]
%
%Take $p\in Q$ and $n, i\in \N$. [ p does not occur below ]
If~$\rho\in \runs\subseteq
Q^\Gw$ is~an~infinite sequence of~states then $\lim_{i\to\infty} \Omega(\rho(i))$ (denoted by $\limit(\rho)$) exists, because
by the definition of~$\runs$ the priorities are non\=/increasing and bounded.
Recall that $W_P\subseteq \runs$ is~the set of~runs satisfying the
parity condition, i.e.~$W_P=\{\rho\in\runs\mid\limit(\rho)\equiv 0\mod
2\}$. 
For $i,n \in \N $, consider the following subsets of~$\runs$:
\begin{align*}
S^n_i&\eqdef W_P\cup\big\{\rho\in \runs\mid \Omega(\rho(i))\geq n\big\},\\
S^n_\infty&\eqdef W_P\cup\big\{\rho\in \runs\mid \limit(\rho)\geq n\big\},\\
R^n_i&\eqdef W_P\cap\big\{\rho\in \runs\mid \Omega(\rho(i))< n\big\},\\
R^n_\infty&\eqdef W_P\cap\big\{\rho\in \runs\mid \limit(\rho)< n\big\}.
\end{align*}
Connotatively, the name of~the sets~$S^n_i$ comes from the condition of~\emph{safety}, while the sets~$R^n_i$ are named after \emph{reachability}. More precisely, $S^n_i$ is an~over\=/approximation of $W_P$, that makes~\eve win also if she manages to reach a~priority $\geq n$ in the $i$th visited node of a~given tree. Analogously, $R^n_i$ is an~under\=/approximation of $W_P$ that makes~\adam win in the above case.

Based on~the above definitions, we~define the respective
$Q$\=/families. 
%Take $n\in \N$, $p\in Q$, and $i\in\N$. 
For~$p\in Q $,  let $\CS^n_i(p)$, $\CS^n_\infty(p)$, $\CR^n_i(p)$, and
$\CR^n_\infty(p)$ be~the sets of~trees such that \eve has a~winning
strategy 
in~the game~$\CG(t,p,W)$, where $W$ is~respectively $S^n_i$,
$S^n_\infty$, $R^n_i$, and~$R^n_\infty$. Figure~\ref{fig:construction} 
below depicts the way these $Q$\=/families are used in the general construction.

%\begin{remark}
%All the above properties are regular, so each of the above
%definitions gives a~$Q$\=/family of languages.
It~is~easy to~see that all the tree languages above can be~recognised by~weak parity alternating automata.
%\end{remark}

\begin{lemma}
\label{lem:S-R-monotone}
For every $n\in \N$ and $i\in\N$,  we have
\[ S^n_i\supseteq S^n_{i+1} \supseteq S^n_\infty\text{ and } R^n_i\subseteq R^n_{i+1} \subseteq R^n_\infty.\]
Analogously, for every $p\in Q$,
\[\CS^n_i(p) \supseteq \CS^n_{i+1}(p)\supseteq \CS^n_\infty(p)\text{ and }\CR^n_i(p) \subseteq \CR^n_{i+1}(p)\subseteq \CR^n_\infty(p).\]
\end{lemma}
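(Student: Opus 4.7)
The statement splits into two layers: the inclusions between subsets of $\runs$ (first line) and the inclusions between the associated $Q$\=/families of tree languages (second line). The plan is to dispatch the first layer by a direct unpacking of the definitions, then derive the second layer from the first by invoking Remark~\ref{rem:win-monotone}.

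For the first layer, the only ingredient needed is the defining property of $\runs$: along any $\rho\in\runs$ the sequence $\Omega(\rho(0)),\Omega(\rho(1)),\ldots$ is non\=/increasing. Since these values are natural numbers, this sequence is eventually constant, so $\limit(\rho)=\lim_i\Omega(\rho(i))$ exists and satisfies $\Omega(\rho(i))\geq\Omega(\rho(i+1))\geq\limit(\rho)$ for every $i$. From this I would read off, one by one, the four inclusions. For $S^n_i\supseteq S^n_{i+1}$: if $\rho\in S^n_{i+1}\setminus W_P$ then $\Omega(\rho(i+1))\geq n$, hence by monotonicity $\Omega(\rho(i))\geq n$ and $\rho\in S^n_i$. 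For $S^n_{i+1}\supseteq S^n_\infty$: if $\rho\in S^n_\infty\setminus W_P$ then $\limit(\rho)\geq n$, so again $\Omega(\rho(i+1))\geq\limit(\rho)\geq n$. The two inclusions for $R$ are dual: if $\rho\in R^n_i$ then $\rho\in W_P$ and $\Omega(\rho(i+1))\leq\Omega(\rho(i))<n$, so $\rho\in R^n_{i+1}$; and if $\rho\in R^n_{i+1}$ then $\limit(\rho)\leq\Omega(\rho(i+1))<n$, so $\rho\in R^n_\infty$.

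The second layer is then purely formal. All four $Q$\=/families in the statement are defined through games that share the same arena $\CG(t,p)$ and differ only in the winning set $W\subseteq\runs$ passed to $\CG(t,p,W)$. By Remark~\ref{rem:win-monotone}, enlarging $W$ can only enlarge the set of trees from which \eve has a winning strategy. Applying this remark to each of the four inclusions established on $\runs$ gives precisely the four inclusions claimed on the $Q$\=/families: $S^n_{i+1}\subseteq S^n_i$ yields $\CS^n_{i+1}(p)\subseteq\CS^n_i(p)$, $S^n_\infty\subseteq S^n_{i+1}$ yields $\CS^n_\infty(p)\subseteq\CS^n_{i+1}(p)$, and symmetrically for the reachability families.

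I do not anticipate any genuine obstacle here; the lemma is essentially a monotonicity bookkeeping step used later to justify fixed\=/point/limit constructions. The only point that warrants a sentence of justification, rather than being left silent, is the existence of $\limit(\rho)$ as a genuine minimum of the priorities along $\rho$, since that is what lets a tail condition ($i+1$) imply the pointwise condition at $i$ and, in the limit direction, lets the $\infty$\=/condition imply the condition at any finite $i$.
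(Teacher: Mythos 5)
Your proof is correct and follows exactly the route the paper takes: the inclusions on $\runs$ are read off from the fact that priorities are non\=/increasing along any $\rho\in\runs$ (so $\Omega(\rho(i))\geq\Omega(\rho(i+1))\geq\limit(\rho)$), and the inclusions on the $Q$\=/families then follow by Remark~\ref{rem:win-monotone}. The paper states this in two sentences; your version merely spells out the four case checks explicitly.
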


\begin{proof}
The first property follows directly from the definition of~$\runs$, which guarantees that $\Omega(\rho(0))\geq \Omega(\rho(1))\geq \ldots \geq \limit(\rho)$. Then, the second property follows from Remark~\ref{rem:win-monotone}.
\end{proof}
It~is~straightforward to~see that $S^n_{\infty}= \bigcap_{i\in\N} S^n_i
$ and $ R^n_{\infty}= \bigcup_{i\in\N} R^n_i$.
However, it is not clear that these equalities imply the desired properties for the respective sets
of~trees.  Lemma~\ref{lem:limit-steps-S-R} below implies that
it~is~the case. The proof relies on~combinatorics of~binary trees, namely on~Kőnig's Lemma.

\begin{lemma}
\label{lem:konig-for-strat}
Take $n\in\N$ and $ p \in Q$.
Let 
$B^n_\infty=\{\rho\in\runs\mid \limit(\rho) < n\}$
and,
for $i\in\N$, let
$B^n_i=\{\rho\in\runs\mid \Omega(\rho(i)) < n\}$. If $\sigma$ is
a~winning strategy of \eve in $\CG(t,p,B^n_\infty)$ then there exists
a~number $J\in\N$, such that $\sigma$ is actually winning in
$\CG(t,p,B^n_{J})$. An analogous property holds if $ \sigma $ is
a~winning strategy for \adam.
\end{lemma}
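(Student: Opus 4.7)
The plan is to apply Kőnig's Lemma to a tree of ``bad'' finite play prefixes, i.e.~prefixes along which every visited state position has priority at least~$n$. Fix the Eve case first; the Adam case will turn out to be essentially immediate. Let $\sigma$ be Eve's winning strategy in $\CG(t,p,B^n_\infty)$ and consider the tree $T$ whose nodes are all finite $\sigma$\=/consistent play prefixes starting at $(p,\varepsilon)$, ordered by the prefix relation. Let $T_0\subseteq T$ be the subtree of those prefixes along which every visited state position $(q,v)$ satisfies $\Omega(q)\geq n$. Since $T_0$ is closed under taking prefixes, it is indeed a subtree, and it is finitely branching: at every position of the arena there are at most two outgoing edges (two at boolean\=/connective positions, one elsewhere), and this property is inherited after restricting to~$\sigma$.

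My first key step is to show that $T_0$ is finite. Otherwise, Kőnig's Lemma yields an infinite branch in $T_0$, corresponding to an infinite $\sigma$\=/consistent play $\pi$ all of whose visited state positions have priority at least~$n$. Since the priorities along $\stany(\pi)$ are non\=/increasing and bounded below by~$n$, we would have $\limit\big(\stany(\pi)\big)\geq n$, so $\stany(\pi)\notin B^n_\infty$, contradicting that $\sigma$ is winning for~\eve in $\CG(t,p,B^n_\infty)$. Hence $T_0$ has some finite depth $D\in\N$.

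The next step extracts the uniform bound~$J$. Every $\sigma$\=/consistent play $\pi$ must leave $T_0$ after at most~$D$ arena steps, so at least one of the state positions visited within that initial segment has priority smaller than~$n$. If~$k$ denotes the index of such a state position in $\stany(\pi)$, then $k\leq D$, because the $i$\=/th state position is visited at arena step~$\geq i$. Non\=/increasingness of priorities along $\runs$ then yields $\Omega\big(\stany(\pi)(D)\big)\leq \Omega\big(\stany(\pi)(k)\big)<n$, i.e.~$\stany(\pi)\in B^n_D$. Setting $J\eqdef D$ completes the Eve case.

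The Adam case is almost a tautology: if $\sigma$ is \adam's winning strategy in $\CG(t,p,B^n_\infty)$ then every $\sigma$\=/consistent play $\pi$ satisfies $\stany(\pi)\notin B^n_\infty$, i.e.~$\limit\big(\stany(\pi)\big)\geq n$; by non\=/increasingness, $\Omega\big(\stany(\pi)(i)\big)\geq n$ for every~$i$, so $\sigma$ already wins $\CG(t,p,B^n_J)$ for \emph{every} $J$. The only delicate point of the whole argument is therefore the Eve direction: the passage from the pointwise ``every $\sigma$\=/play eventually visits a state of priority $<n$'' to a uniform bound valid across all plays. This is exactly what the synergy of finitary arena branching and Kőnig's Lemma delivers, and this is the place I would check most carefully when writing the full proof, in particular the bookkeeping relating the depth of $T_0$ measured in arena steps to the index~$J$ measured in state positions.
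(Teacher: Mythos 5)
Your proof is correct and follows essentially the same route as the paper's: Kőnig's Lemma applied to the finitely branching tree of $\sigma$\=/consistent prefixes along which every visited state has priority at least~$n$, with the uniform bound~$J$ read off from the finite depth of that tree. Your bookkeeping converting the arena\=/step depth~$D$ into a bound on the index in $\stany(\pi)$ is sound, and is in fact slightly more careful than the paper, which builds the tree directly over $Q\times\{\dL,\dR\}$ and so never has to relate arena steps to state indices. One caveat on the \adam case: when the paper later invokes this lemma (in the proof of Lemma~\ref{lem:limit-steps-S-R}), ``winning for \adam in $\CG(t,p,B^n_\infty)$'' means \adam forces the play \emph{into} $B^n_\infty$, not that he avoids it; under that reading the \adam case is not a tautology but a verbatim repetition of your Eve argument, which is player\=/agnostic, so nothing is lost.
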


\begin{proof}
Let $\sigma$ be~a~winning strategy of~\eve in~$\CG(t,p,B^n_\infty)$  (the case of~\adam is~completely analogous).
Let $T\subseteq\big(Q\times\{\dL,\dR\}\big)^\ast$ be~the set of~sequences
$(q_i,d_i)_{i\leq \ell}$, with $ q_0 = p$,  such that there exists a~play consistent with
$\sigma$ that visits all the positions $(q_i,d_0\cdots d_{i-1})$ for
$i=0,1,\ldots,\ell$,  and additionally $\Omega(q_\ell)\geq n$. Observe that
$T$ is~prefix\=/closed. Thus, we~can treat $T$ as a~tree. Moreover, as~$Q\times\{\dL,\dR\}$ is~finite, $T$ is~finitely branching. If $T$ is~finite then there exists~$J$ such that all the sequences in~$T$ have length at~most~$J$. In~that case
$\sigma$ is~winning in~$\CG(t,p,B^n_{J})$, and we~are done.

%Assume contrarily, 
For the sake of~contradiction, suppose
that $T$ is~infinite. Apply K\"onig's Lemma to~obtain an~infinite path
$(q_i,d_i)_{i\in\omega }$ in~$T$. By~the definition of~$T$, it~implies that there exists an~infinite play consistent with~$\sigma$ such that $(q_i)_{i\in\omega}$ is~the sequence of~states visited during the play. But this is~a~contradiction, because $\limit\big((q_i)_{i\in\omega}\big)\geq n$ by~the definition of~$T$ and, therefore, the considered play is~losing for~\eve.
\end{proof}

\begin{lemma}
\label{lem:limit-steps-S-R}
Using the above notions, for every state $p\in Q$,  we~have
\[\CS^n_{\infty}(p) = \bigcap_{i\in\N}\CS^n_i(p)\quad\text{and}\quad \CR^n_{\infty}(p) = \bigcup_{i\in\N}\CR^n_i (p).\]
\end{lemma}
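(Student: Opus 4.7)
The plan is to prove each equality as two inclusions. The containments $\CS^n_\infty(p)\subseteq\bigcap_i\CS^n_i(p)$ and $\bigcup_i\CR^n_i(p)\subseteq\CR^n_\infty(p)$ are immediate from Lemma~\ref{lem:S-R-monotone}, so the real work lies in the two opposite inclusions. In both I would extract a uniform step index~$J$ from a given strategy using the Kőnig\=/style analysis already packaged in Lemma~\ref{lem:konig-for-strat}.

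For the reachability inclusion $\CR^n_\infty(p)\subseteq\bigcup_i\CR^n_i(p)$, pick $t\in\CR^n_\infty(p)$ and a winning strategy $\sigma$ of \eve in $\CG(t,p,R^n_\infty)$. Then $\sigma$ forces every resulting play $\pi$ to satisfy both $\stany(\pi)\in W_P$ and $\limit(\stany(\pi))<n$, and the second conjunct alone makes $\sigma$ a winning strategy of \eve in the auxiliary game $\CG(t,p,B^n_\infty)$. Lemma~\ref{lem:konig-for-strat} then supplies an index~$J$ for which $\sigma$ additionally guarantees $\Omega(\stany(\pi)(J))<n$. Combined with the first conjunct, this gives $\stany(\pi)\in W_P\cap B^n_J=R^n_J$, so $\sigma$ wins $\CG(t,p,R^n_J)$ and $t\in\CR^n_J(p)$.

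For the safety inclusion $\bigcap_i\CS^n_i(p)\subseteq\CS^n_\infty(p)$, I would argue by contradiction. Suppose $t\in\bigcap_i\CS^n_i(p)$ but $t\notin\CS^n_\infty(p)$; since $S^n_\infty$ is $\omega$\=/regular, Proposition~\ref{pro:determinacy} furnishes a winning strategy $\tau$ of \adam in $\CG(t,p,S^n_\infty)$, and $\tau$ forces every resulting play to satisfy both $\stany(\pi)\notin W_P$ and $\limit(\stany(\pi))<n$. The second conjunct is exactly the hypothesis driving the Kőnig argument of Lemma~\ref{lem:konig-for-strat}, and rerunning that argument for $\tau$ yields an index~$J$ for which $\tau$ additionally forces $\Omega(\stany(\pi)(J))<n$. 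Together with the first conjunct, $\tau$ then witnesses that \adam wins $\CG(t,p,S^n_J)$, i.e.\ $t\notin\CS^n_J(p)$, contradicting the assumption.

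The main subtlety is that Lemma~\ref{lem:konig-for-strat} is phrased for strategies in the specific game $\CG(t,p,B^n_\infty)$, whereas in the safety step $\tau$ is \adam's strategy in $\CG(t,p,S^n_\infty)$ that happens to force $\limit<n$ rather than the opposite inequality treated in that lemma. Inspecting the proof, the construction of the finitely\=/branching tree $T$ and the invocation of Kőnig's Lemma use only that every play consistent with the strategy satisfies $\limit<n$, regardless of which player owns the strategy or which game it wins; any further membership conditions carried by the strategy (here $\stany(\pi)\notin W_P$) persist unchanged. So the argument from Lemma~\ref{lem:konig-for-strat} transfers to $\tau$ essentially verbatim, and this is the only point where I would peek inside the previous lemma rather than use it as a black box.
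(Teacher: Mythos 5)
Your proof is correct and follows essentially the same route as the paper: the non-trivial reachability inclusion is obtained by applying Lemma~\ref{lem:konig-for-strat} to \eve's strategy, and the safety inclusion by contradiction via determinacy (Proposition~\ref{pro:determinacy}) together with the \adam variant of that lemma. The ``subtlety'' you flag at the end is already anticipated by the paper, since Lemma~\ref{lem:konig-for-strat} explicitly asserts the analogous property for \adam's strategies (and its proof uses only that all consistent plays satisfy $\limit<n$), so no re-derivation is needed.
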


\begin{proof}
Consider the first claim and take a~tree $t\in\trees_A$ such that for
every $i\in\N$ we have $t\in\CS^n_i(p)$. We~need to~prove that
$t\in\CS^n_\infty(p)$. Assume contrarily, that
$t\notin\CS^n_\infty(p)$. By~determinacy, see
Proposition~\ref{pro:determinacy}, it~means that there exists
a~strategy~$\sigma'$ of~\adam such that for every play~$\pi$
consistent with~$\sigma'$,
% that visits states $(q_0,\ldots)$ 
we~have 
%$ \limit\big((q_j)_{j\in\N}\big)<n$
$\limit ( \stany \, (\pi ) ) < n $
and $\limit (\stany \, (\pi )) $ is~odd. 
Hence, in~particular,  $ \sigma'$ is~winning for \adam 
in~$\CG(t,p,B^n_\infty)$.   Therefore, by~Lemma~\ref{lem:konig-for-strat},
we~know that there exists a~number~$J\in \N$ such that, for every $
\pi $ consistent with $\sigma' $  with
$\stany \, (\pi ) = (q_0,q_1,\ldots ) $, 
we~have $\Omega(q_{J}) < n$. Therefore, the strategy $\sigma'$ witnesses that $t\notin \CS^n_{J}(p)$, a~contradiction.
\smallskip

We 
%will 
now prove the second claim. Take a~tree~$t\in\CR^n_\infty(p)$. We~need
to~prove that $t\in \CR^n_i(p)$ for some $i\in \N$. Let $\sigma$ be~a~strategy
of~\eve witnessing that $t\in\CR^n_\infty(p)$. Again,
Lemma~\ref{lem:konig-for-strat} guarantees that there exists a~number
$J\in\N$ such that if~$\pi$ is~a~play consistent with $\sigma$ 
with $\stany \, (\pi ) = (q_0, q_1, \ldots) $
then $\Omega(q_J)< n$. Thus, $t\in \CR^n_J(p)$.
\end{proof}

The following lemma provides another characterisation of the above $Q$\=/families.

\begin{lemma}
\label{lem:S-R-limit-steps}
For each $p\in Q$,  we~have $\CS^0_i(p)=\trees_A$ and
$\CR^0_i(p)=\emptyset$. Take $n>0$. If $\Omega(p)\geq n$ then
$\CS^n_0(p)=\trees_A$ and $\CR^n_0(p)=\emptyset$. 
%On the other hand, if 
If~$\Omega(p)<n$ then
\begin{equation*}
\begin{split}
\lang(\CA,p)&=\CS^n_0(p),\\
\lang(\CA,p)&=\CS^{n-1}_\infty(p)&\text{for odd $n$,}
\end{split}
\qquad\quad\ 
\begin{split}
\lang(\CA,p)&=\CR^n_0(p),\\
\lang(\CA,p)&=\CR^{n-1}_\infty(p)&\text{for even $n$.}
\end{split}
\end{equation*}
\end{lemma}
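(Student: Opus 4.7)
The plan is to exploit the fact that in any game $\CG(t,p,W)$, only sequences of states $\rho\in\runs$ with $\rho(0)=p$ can arise as $\stany(\pi)$ for a play $\pi$. Two winning conditions $W,W'\subseteq\runs$ that agree on the set $\runs\cap\{\rho:\rho(0)=p\}$ therefore yield the same winners. So for each equality stated in the lemma, I only need to compare the winning condition in question with $W_P$ (or with $\runs$ or $\emptyset$) on runs starting at $p$. By the definition of $\runs$ every such run satisfies $\Omega(p)=\Omega(\rho(0))\geq\Omega(\rho(1))\geq\cdots\geq\limit(\rho)$.

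First I would dispose of the trivial cases. For $n=0$, priorities being non\-/negative makes ``$\Omega(\rho(i))\geq 0$'' automatic and ``$\Omega(\rho(i))<0$'' impossible, so $S^0_i\cap\runs=\runs$ and $R^0_i=\emptyset$; thus $\CS^0_i(p)=\trees_A$ and $\CR^0_i(p)=\emptyset$ for every $p$. For $n>0$ and $\Omega(p)\geq n$, the equality $\Omega(\rho(0))=\Omega(p)\geq n$ gives $\rho\in S^n_0$ and $\rho\notin R^n_0$ for every $\rho$ with $\rho(0)=p$, so \eve trivially wins every $\CG(t,p,S^n_0)$ and \adam trivially wins every $\CG(t,p,R^n_0)$.

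For $n>0$ with $\Omega(p)<n$, the monotonicity $\Omega(\rho(i))\leq\Omega(p)<n$ makes the safety disjunct ``$\Omega(\rho(0))\geq n$'' of $S^n_0$ false and the reachability conjunct ``$\Omega(\rho(0))<n$'' of $R^n_0$ true on runs from $p$. Both conditions collapse to $W_P$ there, yielding $\CS^n_0(p)=\CR^n_0(p)=\lang(\CA,p)$.

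The heart of the argument is the equalities at $\infty$, and even here the parity arithmetic is elementary. Under $\Omega(p)<n$ every run from $p$ satisfies $\limit(\rho)\leq n-1$, so ``$\limit(\rho)\geq n-1$'' is equivalent to ``$\limit(\rho)=n-1$''. For odd $n$, the integer $n-1$ is even, hence any run with $\limit(\rho)=n-1$ already lies in $W_P$; the extra disjunct in $S^{n-1}_\infty$ is therefore redundant and $S^{n-1}_\infty$ agrees with $W_P$ on runs from $p$. For even $n$, the integer $n-1$ is odd, hence a run with $\limit(\rho)=n-1$ cannot lie in $W_P$ in the first place; the extra conjunct ``$\limit(\rho)<n-1$'' in $R^{n-1}_\infty$ is then implied by $W_P$ on runs from $p$, giving again agreement with $W_P$. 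Through the reduction of the first paragraph these translate into $\CS^{n-1}_\infty(p)=\lang(\CA,p)$ and $\CR^{n-1}_\infty(p)=\lang(\CA,p)$ respectively. I do not anticipate any real obstacle: the whole lemma is a bookkeeping exercise about which priorities can occur in a play and how the parity of $n-1$ determines whether $W_P$ is enlarged or shrunk by the auxiliary condition.
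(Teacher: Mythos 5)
Your proof is correct and follows essentially the same route as the paper's: reduce each equality to checking that the two winning conditions coincide on runs starting at $p$ (using $\Omega(p)=\Omega(\rho(0))\geq\cdots\geq\limit(\rho)$), then handle the $\infty$ cases by the observation that $\limit(\rho)\geq n-1$ forces $\limit(\rho)=n-1$ and the parity of $n-1$ decides whether the auxiliary clause is redundant. The paper's proof is just a terser version of the same bookkeeping.
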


\begin{proof}
The cases of~$n=0$ are trivial. The first two claims in~the case $\Omega(p)\geq n$ follow directly from the definitions. Take $p$ such that $\Omega(p)<n$. Notice that in~that case the sequence of~states $\rho$ in~a~play in~$\CG(t,p)$ satisfies
\begin{equation*}
\begin{split}
\rho\in W_P&\Longleftrightarrow\rho\in S^n_0,\\
\rho\in W_P&\Longleftrightarrow\rho\in S^{n-1}_\infty&\text{for odd $n$,}
\end{split}
\qquad
\begin{split}
\rho\in W_P&\Longleftrightarrow\rho\in R^n_0,\\
\rho\in W_P&\Longleftrightarrow\rho\in R^{n-1}_\infty&\text{for even $n$.}
\end{split}
\end{equation*}
where the first two equivalences follow from the fact that
$\Omega(\rho(0))=\Omega(p)<n$. The last two equivalences can
be~derived from the fact that $\limit(\rho)\leq \Omega(p)< n$. 
First, we~have
%if $n$ is odd then $n{-}1$ is even and therefore 
$\limit(\rho)\geq n{-}1\Leftrightarrow \limit(\rho)=n{-}1$. Thus,
if~$n$ is~odd and 
$\limit(\rho)\geq n{-}1$ then we~know that $\limit(\rho)$ is~even.
Analogously, if~$n$ is~even then $n{-}1$ is~odd and the fact
that $\limit(\rho)$ is~even guarantees that~$\limit(\rho)<n{-}1$.

Clearly,  the above equivalences imply that, under the assumption of~the lemma,
a~strategy winning for the
condition $ W_P$ is~winning for the respective conditions and vice\=/versa.
\end{proof}

Our aim now is~to~define a~function $\fun{\Delta}{\powerset(Q)\times
  A\times\powerset(Q)}{\powerset(Q)}$ that 
%represents strategies of \eve in parts of the games $\CG(t,q))$ over
%a~single letter. 
will allow us~to~form equations over $Q$\=/families.
An~ordered pair of~sets of~states $P_\dL,P_\dR\in\powerset(Q)$ induces
a~\emph{valuation} $v_{P_\dL,P_\dR}$ to~the atoms in~$\{\dL,\dR\}\times Q$ defined by: $v_{P_\dL,P_\dR}(d,p)$ is~true if~$p
\in P_d$. Now, consider additionally a~letter $a\in A$ and put
\[\Delta(P_\dL, a, P_\dR)=\big\{q\in Q\mid v_{P_\dL,P_\dR}\models \delta(q,a) \big\}.\]
%Intuitively, 
Equivalently,
$q\in\Delta(P_\dL,a,P_\dR)$ if~$\eve$ can play the finite game
represented by~$\delta(q,a)$ in~such a~way to~reach only such atoms~$(d,p)$ that
satisfy~$p \in P_d$.

\begin{lemma}
\label{lem:Delta-mono}
The function~$\fun{\Delta}{\powerset(Q)\times A\times\powerset(Q)}{\powerset(Q)}$ is~monotone, i.e.~if $P_\dL\subseteq P_\dL'$ and $P_\dR\subseteq P_\dR'$ then $\Delta(P_\dL, a, P_\dR)\subseteq \Delta(P_\dL',a,P_\dR')$.
\end{lemma}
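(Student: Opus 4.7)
The plan is to reduce the claim to the standard fact that positive Boolean formulas are monotone under the pointwise order on truth valuations. The key point is that, by the definition of an alternating automaton, $\delta(q,a)$ lies in $\BC^+\bigl(\{\dL,\dR\}\times Q\bigr)$, i.e.\ it is a \emph{positive} Boolean combination of atoms, built using only $\lor$ and $\land$.

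First I would check that the two valuations induced by the pairs $(P_\dL,P_\dR)$ and $(P_\dL',P_\dR')$ are ordered pointwise. Ordering truth values with $\text{false} < \text{true}$, if $P_\dL\subseteq P_\dL'$ and $P_\dR\subseteq P_\dR'$ then for every atom $(d,p)\in\{\dL,\dR\}\times Q$, $v_{P_\dL,P_\dR}(d,p)=\text{true}$ means $p\in P_d\subseteq P_d'$, hence $v_{P_\dL',P_\dR'}(d,p)=\text{true}$. Thus $v_{P_\dL,P_\dR}\le v_{P_\dL',P_\dR'}$ coordinatewise.

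Second I would invoke (or verify by a one-line induction on the structure of $\phi\in\BC^+$) the monotonicity of positive formulas: whenever $v\le v'$ pointwise and $v\models\phi$, also $v'\models\phi$. The atomic base case is immediate from the pointwise comparison above, and the inductive step for $\lor$ and $\land$ is trivial. This is precisely the place where the absence of negation in $\BC^+$ is used. Applying this to $\phi=\delta(q,a)$ for an arbitrary $q\in\Delta(P_\dL,a,P_\dR)$ yields $v_{P_\dL',P_\dR'}\models\delta(q,a)$, that is, $q\in\Delta(P_\dL',a,P_\dR')$, which is the desired inclusion.

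There is no real obstacle: the only subtle ingredient is the observation that positivity of the transition formulas — exactly what distinguishes $\BC^+$ from unrestricted Boolean combinations — gives the required monotonicity essentially for free.
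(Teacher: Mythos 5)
Your proof is correct and follows the same approach as the paper, which simply notes that the claim "follows directly from the fact that the Boolean formulae in $\delta(q,a)$ are positive." You merely spell out the details (the pointwise comparison of valuations and the structural induction over $\lor$ and $\land$) that the paper leaves implicit.
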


\begin{proof}
It~follows directly from the fact that the Boolean formulae in~$\delta(q,a)$ are positive.
\end{proof}

Recall that $t\restr_v \in\trees_A$ denotes the subtree of~$ t$
induced by~a~node~$ v$, cf.~Section~\ref{sec:basic}. The following lemma shows how to~increase the index~$i$ of~the above $Q$\=/families $\CS^n_i$ and $\CR^n_i$.

\begin{lemma}
\label{lem:ind-for-S-R-Delta}
Take $n\in\N$, $i\in\N$, and a~tree $t\in\trees_A$. Then we~have:
\begin{align*}
\CS^n_{i+1}[t]&= \Delta\big(\CS^n_i[t\restr_\dL], t(\varepsilon), \CS^n_i[t\restr_\dR]\big),\\
\CR^n_{i+1}[t] &= \Delta\big(\CR^n_i[t\restr_\dL], t(\varepsilon), \CR^n_i[t\restr_\dR]\big).
\end{align*}
\end{lemma}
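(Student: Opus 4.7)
The plan is to unfold one step of the game on~$t$ from state~$p$: the initial segment of any play resolves the Boolean combination~$\delta(p, t(\varepsilon))$, and then the play continues from some atom~$(d,q)$ exactly as a play of $\CG(t\restr_d, q)$, with the state sequence shifted by one position. It therefore suffices to verify that the winning conditions~$S^n_{i+1}$ and $R^n_{i+1}$ on runs from~$p$ translate to~$S^n_i$ and $R^n_i$ on the shifted runs, and then combine strategies on both sides of the equality.

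First I would make the shift observation precise. Any play~$\pi$ in~$\CG(t,p)$ starts at~$(p,\varepsilon)$, traverses positions inside the finite formula tree of~$\delta(p, t(\varepsilon))$, and upon reaching an atom $(d,q_1)$ moves to the state position $(q_1, d)$; the remainder of~$\pi$ is exactly a play~$\pi'$ in~$\CG(t\restr_d, q_1)$. If $\stany(\pi) = (p, q_1, q_2, \ldots)$ then $\stany(\pi') = (q_1, q_2, \ldots)$, hence $\limit\big(\stany(\pi)\big) = \limit\big(\stany(\pi')\big)$ and $\Omega\big(\stany(\pi)(i{+}1)\big) = \Omega\big(\stany(\pi')(i)\big)$. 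Consequently $\stany(\pi) \in S^n_{i+1}$ iff $\stany(\pi') \in S^n_i$, and similarly $\stany(\pi) \in R^n_{i+1}$ iff $\stany(\pi') \in R^n_i$.

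Next I would prove the $\CS$\nobreakdash-equality by double inclusion (the $\CR$\nobreakdash-case is identical with $R$ in place of $S$). For the inclusion $\subseteq$, assume $p \in \CS^n_{i+1}[t]$ and fix a winning strategy~$\sigma$ of \eve in~$\CG(t,p, S^n_{i+1})$. Let $P_d \subseteq Q$ collect those states~$q$ such that some $\sigma$\nobreakdash-consistent play reaches the atom $(d,q)$ while resolving $\delta(p, t(\varepsilon))$. Then the initial segment of~$\sigma$ witnesses $v_{P_\dL, P_\dR} \models \delta(p, t(\varepsilon))$, so $p \in \Delta(P_\dL, t(\varepsilon), P_\dR)$. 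Moreover, the tail of~$\sigma$ after reaching an atom $(d,q)$ yields, by the shift observation, a winning strategy for~\eve in~$\CG(t\restr_d, q, S^n_i)$, so $P_d \subseteq \CS^n_i[t\restr_d]$. Monotonicity of~$\Delta$ (Lemma~\ref{lem:Delta-mono}) then gives $p \in \Delta(\CS^n_i[t\restr_\dL], t(\varepsilon), \CS^n_i[t\restr_\dR])$.

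For the converse, assume $p \in \Delta(\CS^n_i[t\restr_\dL], t(\varepsilon), \CS^n_i[t\restr_\dR])$. Then \eve has a strategy~$\sigma_0$ in the finite Boolean subgame for~$\delta(p, t(\varepsilon))$ that reaches only atoms $(d,q)$ with $q \in \CS^n_i[t\restr_d]$; for each such atom fix a winning strategy~$\sigma_{d,q}$ of~\eve in~$\CG(t\restr_d, q, S^n_i)$. Glueing $\sigma_0$ with the appropriate~$\sigma_{d,q}$ produces a strategy~$\sigma$ in~$\CG(t,p)$, and by the shift observation every $\sigma$\nobreakdash-consistent play satisfies~$S^n_{i+1}$. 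Thus $p \in \CS^n_{i+1}[t]$, completing the proof. The only real care required is in the bookkeeping of the syntactic unfolding of~$\delta(p, t(\varepsilon))$ and the matching between finite Boolean plays and the atoms~$(d,q)$ they reach; apart from that, the argument is a direct combination of strategies.
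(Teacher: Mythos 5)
Your proposal is correct and follows essentially the same route as the paper's proof: the one\-/step shift equivalence $\stany(\pi)\in S^n_{i+1}\Leftrightarrow\stany(\pi')\in S^n_i$ followed by splitting and glueing strategies across the finite Boolean subgame for $\delta(p,t(\varepsilon))$. The only cosmetic difference is that in the forward inclusion you collect the reached atoms into sets $P_d$ and then invoke monotonicity of $\Delta$, whereas the paper takes $P_d=\CS^n_i[t\restr_d]$ directly; both are sound.
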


The proof of~this lemma is~based on~a~standard technique of~merging strategies: the game $\CG(t,p)$ can be~split into a~finite game corresponding to~the formula $\delta\big(p,t(\varepsilon)\big)$ that leads to~the sub\=/games $\CG(t\restr_\dL,p_\dL)$ and~$\CG(t\restr_\dR,p_\dR)$ for some states~$p_\dL,p_\dR\in Q$.

\begin{proof}
Take a~play $\pi$ in the arena $\CG(t,p)$ for some state $p\in Q$. 
Recall that, by the definition of the game, the initial position of the play is $(p, \varepsilon)$ and the next state position will
have the form $(q,d) $, for some $ q \in Q$ and $d\in\{\dL,\dR\}$.
Consider the suffix of the play $ \pi$ starting from that position. 
Clearly, this suffix induces a~play, say $\pi' $,  in the arena
$\CG(t\restr_d, q)$, starting from the position $(q, \varepsilon ) $
(technically,  to satisfy our definition,  we need also to replace
every position $(\alpha , d v) $  by $(\alpha , v) $ in
the original   play).
Moreover,   the sequence of states visited by $ \pi'$, $\stany \,
(\pi')$,  is a suffix of the
sequence $ \stany \, (\pi )$
obtained by removing just the first element.
By the definition of $S^n_i$ and
$R^n_i$ we have therefore 
\begin{equation}
\label{eq:S-and-R-shift}
\stany (\pi )\in S^n_{i+1}\Longleftrightarrow \stany (\pi') \in
S^n_{i}\text{, and }
\stany (\pi) \in R^n_{i+1}\Longleftrightarrow \stany(\pi') \in R^n_{i}.
\end{equation}

We will now provide the proof for $\CS^n_{i+1}$, the case of
$\CR^n_{i+1}$ is analogous. 
Let $P_\dL$ and $P_\dR$ equal respectively $\CS^n_i [t\restr_\dL]$ and
$\CS^n_i[ t\restr_\dR]$. Put $a = t(\varepsilon)$. Recall that by 
the duality of the two representations of $Q$\=/families,  
$p \in  \CS^n_{i+1}[t] $ iff $ t\in\CS^n_{i+1}(p)$.
So we need to prove
that for every $p\in Q$ we have $t\in\CS^n_{i+1}(p)$ if and only if 
$p\in\Delta(P_\dL,a,P_\dR)$.

Assume that $t\in\CS^n_{i+1}(p)$. Take a strategy $\sigma$ witnessing that. Notice that if a position of the form $(q,d)$ can be reached by $\sigma$ then by~\eqref{eq:S-and-R-shift} we know that $t\restr_d\in\CS^n_i(q)$, i.e.~$q\in P_d$. Thus, the strategy $\sigma$ witnesses that $p\in\Delta(P_\dL,a,P_\dR)$.

For the opposite direction, assume that
$p\in\Delta(P_\dL,a,P_\dR)$. This means that there exists a finite
strategy of \eve that allows her to resolve 
the formula $\delta(p,a)$ in such a way that for every atom $(d,q)$
that can be reached by this strategy, we have $(d,q)\in P_d$. 
%This means that \eve has a winning strategy in $\CG(t,p)$ from the
%position $\big(q,d\big)$ with the winning criterion $S^n_i$. 
The last means that \eve has a winning strategy in the game $\CG
(t\restr_d ,q, \CS^n_i ) $.
Now we can combine all above strategies in a~strategy in the game
$\CG(t,p,S^n_{i+1})$, which 
by~Equation~\eqref{eq:S-and-R-shift} is again winning for \eve.
Hence, $t\in\CS^n_{i+1}(p)$, as desired.
\end{proof}

\begin{comment}
The proof of~this lemma is~based on~a~standard technique of~merging strategies: the game $\CG(t,p)$ can be~split into a~finite game corresponding to~the formula $\delta\big(p,t(\varepsilon)\big)$ that leads to~the sub\=/games $\CG(t\restr_\dL,p_\dL)$ and~$\CG(t\restr_\dR,p_\dR)$ for some states~$p_\dL,p_\dR\in Q$. The crucial observation in~the proof is~as follows, if~$\rho\in\runs$ and $\rho'\in\runs$ satisfies $\rho'(k)=\rho(k{+}1)$ for each $k\in\N$ then $\rho\in S^n_{i+1}\Leftrightarrow \rho'\in S^n_i$ and $\rho\in R^n_{i+1}\Leftrightarrow \rho'\in R^n_i$. A~complete proof of~this lemma is~given in~Appendix~\ref{ap:pro-lem-inf-for-Delta}.
\end{comment}

\section{Measures and distributions}
\label{sec:measures}

\begin{figure}
\centering
\newcommand{\chain}[4]{
\evalFloat{\x}{#1}
\evalFloat{\dy}{#2}
\evalFloat{\ds}{0.8}
\evalFloat{\es}{1.5}
\evalInt{\ll}{#4}

\evalFloat{\s}{1.0}
\foreach \i in {0,...,5} {
	\node[nod, scale=\s*1.0] (u\ll\i) at (\x, \y) {$\Fmu(#3_{\i})$};

	\evalFloat{\y}{\y + \dy * \s}
	\evalFloat{\s}{exp(ln(\s)*\es) * \ds}
}

\evalFloat{\s}{1.0}
\foreach \i in {1,...,5} {
	\evalInt{\j}{\i-1}
	\draw (u\ll\j) edge[arr,{|[scale=\s*0.7]}-{Latex[scale=\s*0.7]}] node[left, scale=\s*0.7] {$\CF$} (u\ll\i);
	\evalFloat{\s}{exp(ln(\s)*\es) * \ds}
}

\evalFloat{\y}{\y + 0.3 * \dy}

\node[nod] (l\ll) at (\x, \y) {$\Fmu(#3_\infty)$};
}

\begin{tikzpicture}
\tikzstyle{nod} = [scale=0.9]
\tikzstyle{dtz} = [scale=1.2]
\tikzstyle{arr} = [draw, |-{Latex}]
\newcommand{\dX}{2.5}
\evalFloat{\y}{0}

\evalFloat{\maxY}{\y}

\chain{0*\dX}{-0.9}{\CS^0}{0}

\evalFloat{\minY}{\y}

\chain{1*\dX}{+0.9}{\CR^1}{1}

\draw (l0) edge[arr] node[above] {$\CQ_{<1}$} (u10);

\chain{2*\dX}{-0.9}{\CS^2}{2}

\draw (l1) edge[arr] node[below] {$\CQ_{\geq 2}$} (u20);

\chain{3*\dX}{+0.9}{\CR^3}{3}

\draw (l2) edge[arr] node[above] {$\CQ_{< 3}$} (u30);

\chain{4*\dX}{-0.9}{\CS^4}{4}

\draw (l3) edge[arr] node[below] {$\CQ_{\geq 4}$} (u40);

\node[nod] (u40) at (5*\dX-0.5, \minY) {};

\node[dtz] at (5*\dX, \maxY) {$\cdots$};
\node[dtz] at (5*\dX, \minY*0.5+\maxY*0.5) {$\cdots$};
\node[dtz] at (5*\dX, \minY) {$\cdots$};

\draw (l4) edge[arr] node[above] {$\CQ_{< 5}$} (u40);
\end{tikzpicture}
\caption{A~schematic presentation of~the relationship between the distributions used in~the proof. The vertical axis represents the order~${\preceq}$, i.e.~$\Fmu(\CS^0_0)\succeq \Fmu(\CS^0_\infty)$. The edges marked $\CF$, $\CQ_{<n}$, and $\CQ_{\geq n}$ represent applications of~the respective operations. The vertical convergence is~understood in~terms of~pointwise limits in~$\R^{\powerset(Q)}$.}
\label{fig:construction}
\end{figure}

%\todo{Review 2: suggested to move Figure~\ref{fig:construction} before Lemma 7}

Following an~approach started in~\cite{przybylko_fo_arxiv},  we~transfer
the problem of~computing measures of~tree languages 
$\lang(\CA, p)$ to~computing a~suitable probability distribution on~the sets of~the automaton states.
We~start with a~general construction.
%introduced in~\cite{przybylko_fo_journal}.
For a~finite set~$X$, consider the set of~probability distributions over~$X$,  $\Fdis{X} \eqdef
\big\{\fun{\alpha}{X}{[0,1]}\mid \sum_{x\in X} \alpha(x)=1\big\}$. Observe that, if~$X$ is~partially
ordered by~a~relation~${\leq}$ then $\Fdis X$ is~partially ordered by~a~relation~${\preceq}$ defined as~follows:
$\alpha\preceq \beta$ if~for every upward\=/closed%
\footnote{That is if $x\leq y$ and $x\in U$ then $y\in U$.}
set~$U\subseteq X$, 
we~have $\sum_{x\in U} \alpha(x)\leq \sum_{x\in U}\beta(x)$.
% Similarly, if $\fun{f}{X}{Y}$ is a~function then by $\fun{\Fdis{f}}{\Fdis{X}}{\Fdis{Y}}$ we denote the function defined as $\Fdis{f}(\alpha)(y)=\sum_{x\in f^{-1}(\{y\})} \alpha(x)$.
In~this article, we~are interested in~$\langle X, {\leq}\rangle$ being the powerset~$\powerset(Q)$ ordered by~inclusion~${\subseteq}$.

\begin{remark}
The relation ${\preceq}$ is~a~partial order on~$\Fdis{X}$ (as~an~intersection of~a~finite family of~partial orders).% The function $\Fdis{f}$ is well\=/defined, i.e.~$\Fdis{f}(\alpha)$ is in fact a probabilistic distribution.
\end{remark}

Every $Q$\=/family $\CL$ for a~weak alternating automaton~$\CA$
induces naturally a~member of~$\Fdis{\powerset(Q)}$, which is~a~distribution~$\Fmu\big(\CL\big)$
defined by \[\Fmu\big(\CL\big)(P)=\mu_0\big\{t\in\trees_A\mid \CL[t]=P\big\}.\]

\noindent
Here $\mu_0$ is~the uniform probability measure on~$\trees_A$. The sets in~consideration are measurable thanks to~Proposition~\ref{pro:measurability}.

Note that if~the language family is~exactly 
$\CL (q) = \lang(\CA, q) $  then
the probability assigned to a~set of~states~$ P$ amounts to~the
probability that a~randomly chosen tree, with respect to~$\mu_0 $,  is~accepted precisely from the states in~$P$.

\begin{lemma}
\label{lem:order-of-families}
If~for each $q\in Q$ we~have $\CL(q)\subseteq \CL'(q)$ then $\Fmu(\CL)\preceq \Fmu(\CL')$ in $\Fdis{\powerset(Q)}$.
\end{lemma}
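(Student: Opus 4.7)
The plan is to unwind the definition of ${\preceq}$ and reduce the statement to a pointwise containment of the underlying events. By definition, $\Fmu(\CL)\preceq \Fmu(\CL')$ means that for every upward-closed set $U\subseteq \powerset(Q)$ (with respect to~${\subseteq}$) one has
\[
\sum_{P\in U}\Fmu(\CL)(P)\ \leq\ \sum_{P\in U}\Fmu(\CL')(P).
\]
So I would fix an arbitrary upward-closed $U\subseteq\powerset(Q)$ and rewrite both sides using the definition of $\Fmu$. Since the sets $\{t\mid \CL[t]=P\}$ (as $P$ ranges over $\powerset(Q)$) partition $\trees_A$ into measurable pieces, the left-hand sum equals $\mu_0\{t\in\trees_A\mid \CL[t]\in U\}$, and analogously the right-hand sum equals $\mu_0\{t\in\trees_A\mid \CL'[t]\in U\}$.

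Next I would invoke the pointwise hypothesis: from $\CL(q)\subseteq \CL'(q)$ for every $q\in Q$, the dual representations satisfy $\CL[t]\subseteq \CL'[t]$ for every single tree~$t$, because $q\in\CL[t]$ means $t\in\CL(q)$ which implies $t\in\CL'(q)$, i.e.~$q\in\CL'[t]$. Combined with upward-closedness of $U$, this yields the event inclusion
\[
\{t\in\trees_A\mid \CL[t]\in U\}\ \subseteq\ \{t\in\trees_A\mid \CL'[t]\in U\}.
\]
Monotonicity of the measure $\mu_0$ (the two sets are Borel thanks to Proposition~\ref{pro:measurability}, which applies to the families produced in this paper) then gives the required inequality between the two sums.

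There is no real obstacle here: once the definition of~${\preceq}$ is spelt out, the argument collapses to the observation that a pointwise inclusion of $Q$-families lifts to an inclusion of the events cut out by an upward-closed predicate on $\powerset(Q)$. The only minor point worth flagging in the write-up is that the rewriting of $\sum_{P\in U}\Fmu(\CL)(P)$ as $\mu_0\{t\mid \CL[t]\in U\}$ uses countable additivity over the finite partition indexed by $P\in U$, which is immediate.
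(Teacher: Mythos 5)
Your proposal is correct and follows essentially the same route as the paper's proof: fix an upward-closed $U$, rewrite each sum as the measure of the event $\{t\mid \CL[t]\in U\}$, and conclude from the pointwise inclusion $\CL[t]\subseteq\CL'[t]$ together with upward-closedness of $U$ and monotonicity of $\mu_0$. The extra details you spell out (the derivation of $\CL[t]\subseteq\CL'[t]$ and the finite additivity over the partition) are exactly what the paper leaves implicit.
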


\begin{comment}
A~direct proof of~this lemma is~given in~Appendix~\ref{ap:pro-order-fam}.
\end{comment}

\begin{proof}
Take any upward\=/closed family $U\subseteq \powerset(Q)$. Then
%Notice that
%\begin{align*}
\begin{eqnarray*}
\sum_{P\in U}\ \Fmu\big(\CL\big)(P)  =\sum_{P\in U}\ \mu_0\big\{t\in\trees_A\mid\CL[t]=P \big\}
=\mu_0\big\{t\in\trees_A\mid \CL[t] \in U\big\}  \leq \\
\leq \mu_0\big\{t\in\trees_A\mid \CL'[t] \in U\big\}
= \sum_{P\in U}\ \mu_0\big\{t\in\trees_A\mid\CL'[t]=P
\big\}=\sum_{P\in U}\ \Fmu\big(\CL'\big)(P),
\begin{comment}
\sum_{P\in U}\ \Fmu\big(\CL\big)(P)& =\sum_{P\in U}\ \mu_0\big\{t\in\trees_A\mid\CL[t]=P \big\}\\
&=\mu_0\big\{t\in\trees_A\mid \CL[t] \in U\big\}\\
&\leq \mu_0\big\{t\in\trees_A\mid \CL'[t] \in U\big\}\\
&= \sum_{P\in U}\ \mu_0\big\{t\in\trees_A\mid\CL'[t]=P
\big\}=\sum_{P\in U}\ \Fmu\big(\CL'\big)(P),
\end{comment}
\end{eqnarray*}
%\end{align*}
where the middle inequality follows from the assumption that $\CL(q)\subseteq \CL'(q)$ and the fact that the family $U$ is upward\=/closed.
\end{proof}

We~now examine  the sequences of~distributions $\Fmu\big(\CS^n_i\big)$, $\Fmu\big(\CR^n_i\big)$, $\Fmu\big(\CS^n_\infty\big)$, and $\Fmu\big(\CR^n_\infty\big)$ arising from the $Q$\=/families introduced in~the previous section.
Our aim is~to~bind them by~equations computable within~$\Fdis{\powerset(Q)}$.
As~an~analogue to~the operation~$\Delta$, we~introduce a~function $\fun{\CF}{\Fdis{\powerset(Q)}}{\Fdis{\powerset(Q)}}$ defined for $\beta\in\Fdis{\powerset(Q)}$ and $P\in \powerset(Q)$ by
\begin{equation}
\label{eq:def-F}
\CF(\beta)(P)= \frac{1}{|A|}\cdot\sum_{(P_\dL,a,P_\dR)\in\Delta^{-1}(P)}\ \beta(P_\dL)\cdot \beta(P_\dR).
\end{equation}
Note that the formula guarantees that $\CF(\beta)$ is~indeed a~probabilistic distribution in~$\Fdis\powerset(Q)$.
The operator~$ \CF$  allows us~to~lift the inductive
definitions of~the $Q$\=/families $\CS^n_{i+1}$ and
$\CR^n_{i+1}$ given by~Lemma~\ref{lem:ind-for-S-R-Delta}, to~their counterparts in~the level
of~probability distributions.

\begin{lemma}
\label{lem:step-apply-F}
For each $n\in\N$ and $i\in\N$ we~have
\[\Fmu\big(\CS^n_{i+1}\big)=\CF\Big(\Fmu\big(\CS^n_i\big)\Big)\text{ and }\Fmu\big(\CR^n_{i+1}\big)=\CF\Big(\Fmu\big(\CR^n_i\big)\Big).\]
\end{lemma}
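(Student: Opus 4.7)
The plan is to prove both equalities by a single measure\=/theoretic computation, turning Lemma~\ref{lem:ind-for-S-R-Delta} into an identity on distributions by means of Remark~\ref{rem:measure}. Since the two claims are entirely parallel (only $\CS$ is replaced by $\CR$), I would write out only the $\CS$-case and note that the $\CR$-case is identical word-for-word.

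Fix $P \in \powerset(Q)$. By Lemma~\ref{lem:ind-for-S-R-Delta}, the set $\{t \in \trees_A : \CS^n_{i+1}[t] = P\}$ equals $\{t : \Delta(\CS^n_i[t\restr_\dL], t(\varepsilon), \CS^n_i[t\restr_\dR]) = P\}$. I would partition this set, indexed over all triples $(P_\dL, a, P_\dR) \in \Delta^{-1}(P)$, into the pairwise disjoint events $E_{P_\dL, a, P_\dR} \eqdef \{t : t(\varepsilon) = a,\ \CS^n_i[t\restr_\dL] = P_\dL,\ \CS^n_i[t\restr_\dR] = P_\dR\}$; disjointness is immediate because the triple $(P_\dL, a, P_\dR)$ is a function of~$t$.

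The heart of the argument is to show that
\[
\mu_0(E_{P_\dL, a, P_\dR}) = \tfrac{1}{|A|}\cdot\Fmu(\CS^n_i)(P_\dL)\cdot\Fmu(\CS^n_i)(P_\dR).
\]
The factor $1/|A|$ is the $\mu_0$-measure of the basic cylinder $\{t : t(\varepsilon) = a\}$. The product $\Fmu(\CS^n_i)(P_\dL)\cdot\Fmu(\CS^n_i)(P_\dR)$ comes from Remark~\ref{rem:measure} applied to the incomparable nodes $\dL$ and $\dR$, using that each event $\{s \in \trees_A : \CS^n_i[s] = P_d\}$ is Borel (it is a Boolean combination of the languages $\CS^n_i(q)$, which are Borel by Proposition~\ref{pro:measurability}). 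Summing over $(P_\dL, a, P_\dR) \in \Delta^{-1}(P)$ then reproduces the right\=/hand side of Equation~\eqref{eq:def-F}, giving $\Fmu(\CS^n_{i+1})(P) = \CF(\Fmu(\CS^n_i))(P)$ as desired.

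I do not foresee any serious obstacle. The only mild subtlety is that Remark~\ref{rem:measure} is stated purely for subtrees at incomparable nodes and does not explicitly incorporate a condition on the root label. I would handle this by noting that the root cylinder $\{t : t(\varepsilon) = a\}$ is determined by a coordinate disjoint from those used by the left and right subtrees, so by the product structure of $\mu_0$ on $A^{\{\dL,\dR\}^*}$ the root event is independent of the two subtree events and contributes a clean factor $1/|A|$ on top of the product supplied by Remark~\ref{rem:measure}.
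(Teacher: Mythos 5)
Your proposal is correct and follows essentially the same route as the paper's proof: both rest on Lemma~\ref{lem:ind-for-S-R-Delta}, the decomposition of $\{t : \CS^n_{i+1}[t]=P\}$ into the pairwise disjoint events indexed by $\Delta^{-1}(P)$, and Remark~\ref{rem:measure} together with the independence of the root-label event (the paper merely writes the chain of equalities in the reverse direction, starting from $\CF(\Fmu(\CS^n_i))(P)$). Your explicit remark about the root cylinder contributing an independent factor $1/|A|$ matches the paper's step~(3) exactly.
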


\begin{proof}
Take $P\in\powerset(Q)$ and observe that
\begin{align*}
\CF\Big(\Fmu\big(\CS^n_i\big)\Big)(P) &\eqext{1} \frac{1}{|A|}\cdot\sum_{(P_\dL,a,P_\dR)\in\Delta^{-1}(P)}\ \Fmu\big(\CS^n_i\big)(P_\dL)\cdot  \Fmu\big(\CS^n_i\big)(P_\dR)\\
&\eqext{2} \sum_{(P_\dL,a,P_\dR)\in\Delta^{-1}(P)}\ \mu_0\big\{t_\dL\mid \CS^n_i[t_\dL]{=}P_\dL\big\}\cdot \frac{1}{|A|}\cdot\mu_0\big\{t_\dR\mid \CS^n_i[t_\dR]{=}P_\dR\big\}\\
&\eqext{3} \sum_{(P_\dL,a,P_\dR)\in\Delta^{-1}(P)}\ \mu_0\big\{t\mid \CS^n_i[t\restr_\dL]{=}P_\dL\land t(\varepsilon){=}a\land \CS^n_i[t\restr_\dR]{=}P_\dR\big\}\\
&\eqext{4}\mu_0\left(\bigcup_{(P_\dL,a,P_\dR)\in\Delta^{-1}(P)}\ \big\{t\mid \CS^n_i[t\restr_\dL]{=}P_\dL\land t(\varepsilon){=}a\land \CS^n_i[t\restr_\dR]{=}P_\dR\big\}\right)\\
&\eqext{5} \mu_0\Big\{t\in\trees_A\mid \Delta\big(\CS^n_i[t\restr_\dL], t(\varepsilon), \CS^n_i[t\restr_\dR]\big){=}P\Big\}\\
&\eqext{6} \mu_0\Big\{t\in\trees_A\mid \CS^n_{i+1}[t]{=}P\Big\} \eqext{7} \Fmu\big(\CS^n_{i+1}\big)(P),
\end{align*}
where:~(1) is~just the definition of
$\CF\Big(\Fmu\big(\CS^n_i\big)\Big)$; (2) follows from the definition
of~$\Fmu\big(  \CS^n_i \big)$; (3) follows from 
Remark~\ref{rem:measure} and the independence of~$\{ t(\varepsilon )=a \}
$  from the other events in~consideration;
(4)~follows from the fact that the measured sets are pairwise
disjoint; (5)~follows 
%from the fact that $\Delta$ is a~function; 
simply from the definition of~$\Delta$;
(6)~follows from Lemma~\ref{lem:ind-for-S-R-Delta}; and (7) is~just the
definition of~$\Fmu \big(\CS^n_{i+1}\big)$.

The proof for $\CR^n_{i+1}$ follows the same steps, except it~uses the $\CR^n_i$ variant of~Lemma~\ref{lem:ind-for-S-R-Delta}.
\end{proof}

Now, recall that $Q_{\geq n}$ and $Q_{<n}$ are sets of~states
of~respective priorities. Let 
the functions~$\fun{\CQ_{<n},\allowbreak\CQ_{\geq n}}{\Fdis\powerset(Q)}{\Fdis\powerset(Q)}$ be~defined by
\begin{eqnarray}
%\begin{align}
\CQ_{< n}(\beta)(P)\eqdef&\!\!\sum_{P'\colon P'\cap Q_{< n}=P}\beta(P'),\label{eq:def-CQ-1}
\\
\CQ_{\geq n}(\beta)(P)\eqdef&\!\!\sum_{P'\colon  P'\cup Q_{\geq n}=P} \beta(P').\label{eq:def-CQ-2}
%\end{align}
\end{eqnarray}
\noindent
Again, the formulae guarantee that $\CQ_{< n}(\beta)$ and $\CQ_{\geq n}(\beta)$ are both probabilistic distributions in~$\Fdis\powerset(Q)$.
The following lemma shows the relation between these functions and the limit distributions $\Fmu\big(\CS^{n-1}_{\infty}\big)$ and $\Fmu\big(\CR^{n-1}_{\infty}\big)$.

\begin{lemma}
\label{lem:compute-apply-s-r}
For each $n\in\N$ we~have
\begin{align*}
\CQ_{< n}\Big(\Fmu\big(\CS^{n-1}_\infty\big)\Big)&=\Fmu\big(\CR^n_0\big)&\text{if $n$ is odd,}\\
\CQ_{\geq n}\Big(\Fmu\big(\CR^{n-1}_\infty\big)\Big)&=\Fmu\big(\CS^n_0\big)&\text{if $n$ is even.}
\end{align*}
\end{lemma}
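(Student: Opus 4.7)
The plan is to reduce each measure equation to a pointwise set identity between the $Q$\=/indexed families, verified tree\=/by\=/tree, and then lifted through $\Fmu$. Concretely, I would first prove that for every $t\in\trees_A$,
\[\CR^n_0[t]=\CS^{n-1}_\infty[t]\cap Q_{<n}\text{ (odd }n),\qquad \CS^n_0[t]=\CR^{n-1}_\infty[t]\cup Q_{\geq n}\text{ (even }n).\]

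For the first identity I fix a state $p$ and split on $\Omega(p)$. If $\Omega(p)\geq n$ then Lemma~\ref{lem:S-R-limit-steps} gives $\CR^n_0(p)=\emptyset$, so $p\notin\CR^n_0[t]$; on the other side $p\notin Q_{<n}$ either, so both sides miss $p$. If $\Omega(p)<n$ the same lemma collapses both $\CR^n_0(p)$ and $\CS^{n-1}_\infty(p)$ (the latter using the odd parity of $n$) to the common value $\lang(\CA,p)$, so $p\in\CR^n_0[t]\Leftrightarrow p\in\CS^{n-1}_\infty[t]$, while $p\in Q_{<n}$ is automatic. The second identity is dual: $\trees_A$ replaces $\emptyset$, and the even\=/$n$ clauses of Lemma~\ref{lem:S-R-limit-steps} are used in place of the odd ones.

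With the pointwise identities in hand, each measure equation is a short regrouping, using that the events $\bigl\{t\mid\CS^{n-1}_\infty[t]=P'\bigr\}$ are pairwise disjoint and measurable by Proposition~\ref{pro:measurability}:
\begin{align*}
\CQ_{<n}\bigl(\Fmu(\CS^{n-1}_\infty)\bigr)(P)
&=\sum_{P'\colon P'\cap Q_{<n}=P}\mu_0\bigl\{t\mid\CS^{n-1}_\infty[t]=P'\bigr\}\\
&=\mu_0\bigl\{t\mid\CS^{n-1}_\infty[t]\cap Q_{<n}=P\bigr\}=\mu_0\bigl\{t\mid\CR^n_0[t]=P\bigr\}=\Fmu(\CR^n_0)(P).
\end{align*}
The even case is symmetric, using $\CQ_{\geq n}$ and the second set identity. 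I do not expect a serious obstacle: the genuine substance is already packaged into Lemma~\ref{lem:S-R-limit-steps}, and the remaining step is pure book\=/keeping. The only place requiring momentary care is matching the parity of $n$ with the subscript $n-1$, so that $S^{n-1}_\infty$ and $W_P$ really do agree on runs starting at a state of priority strictly less than $n$; this alignment is precisely the odd/even dichotomy encoded in that lemma.
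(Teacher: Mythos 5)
Your proposal is correct and follows essentially the same route as the paper: the paper likewise extracts the pointwise identity $\CS^n_0[t]=\CR^{n-1}_\infty[t]\cup Q_{\geq n}$ (resp.\ its odd-$n$ dual) from Lemma~\ref{lem:S-R-limit-steps} and then lifts it through $\Fmu$ by regrouping the sum over the disjoint events $\{t\mid\CR^{n-1}_\infty[t]=P'\}$. Your case split on $\Omega(p)\geq n$ versus $\Omega(p)<n$ just spells out the implication the paper leaves implicit.
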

This lemma follows from Lemma~\ref{lem:S-R-limit-steps} in~a~similar way as~Lemma~\ref{lem:step-apply-F} follows from Lemma~\ref{lem:ind-for-S-R-Delta}.

\begin{proof}
Consider the case of even $n$ and a~tree $t\in\trees_A$. We need to show that
\[\CQ_{\geq n}\Big(\Fmu\big(\CR^{n-1}_\infty\big)\Big)=\Fmu\big(\CS^n_0\big).\]
Lemma~\ref{lem:S-R-limit-steps} implies that 
\begin{equation}
\CS^n_0[t] = \big(\CR^{n-1}_\infty[t]\big)\cup Q_{\geq n}.
\end{equation}
Therefore, for each $P\in\powerset(Q)$ we have
\begin{align*}
\Fmu\big(\CS^n_0\big)(P)&=\mu_0\big\{t\in\trees_A\mid \CS^n_0[t] = P\big\}\\
&=\mu_0\big\{t\in\trees_A\mid \big(\CR^{n-1}_\infty[t]\big)\cup Q_{\geq n} = P\big\}\\
&=\mu_0\left(\bigcup_{P'\colon P'\cup Q_{\geq n}=P}\ \big\{t\in\trees_A\mid\CR^{n-1}_\infty[t]=P'\big\}\right)\\
&=\sum_{P'\colon P'\cup Q_{\geq n}=P}\ \mu_0\big\{t\in\trees_A\mid\CR^{n-1}_\infty[t]=P'\big\}\\
&=\sum_{P'\colon P'\cup Q_{\geq n}=P}\ \Fmu\big(\CR^{n-1}_\infty\big)(P')\\
&= \CQ_{\geq n}\big(\Fmu\big(\CR^{n-1}_\infty\big)\big)(P)
\end{align*}
The case of odd $n$ is entirely analogous.
\end{proof}

The two above lemmata express the properties of~the operators $\CF$, $\CQ_{< n}$, and $\CQ_{\geq n}$ as~depicted on~Figure~\ref{fig:construction}.

%are enough to perform the following computations in $\Fdis\powerset(Q)$:
%\begin{align*}
%\CF\colon\ \quad\, \Fmu\big(\CS^n_{i}\big)\ &\longmapsto\ \Fmu\big(\CS^n_{i+1}\big),\\
%\CF\colon\ \quad \Fmu\big(\CR^n_{i}\big)\ &\longmapsto\ \Fmu\big(\CR^n_{i+1}\big),\\
%\CQ_{< n}\colon\ \, \Fmu\big(\CS^{n-1}_\infty\big)\  &\longmapsto\ \Fmu\big(\CR^n_0\big)&\text{for $n$ odd,}\\
%\CQ_{\geq n}\colon\ \Fmu\big(\CR^{n-1}_\infty\big)\ &\longmapsto\ \Fmu\big(\CS^n_0\big)&\text{for $n$ even.}
%\end{align*}
%Therefore, the only distributions for which we need to~provide a~computable definitions 
%are the limit distributions~$\Fmu\big(\CS^{n}_\infty\big)$ 
%and $\Fmu\big(\CR^{n}_\infty\big)$. To~accomplish this task, we~connect 
%the two views of~$\Fdis \powerset(Q)$: as~an~ordered set with respect to~${\preceq}$,  and as~a~subspace of~the
%topological space~$\R^{\powerset(Q)}$.

\section{Limit distributions \texorpdfstring{$\Fmu\big(\CS^n_\infty\big)$}{mu(Sin)} and \texorpdfstring{$\Fmu\big(\CR^n_\infty\big)$}{mu(Rin)}}
\label{sec:limit}

%The above section shows how to go from $\Fmu\big(\CS^n_i\big)$ to $\Fmu\big(\CS^n_{i+1}\big)$ and how to obtain $\Fmu\big(\CS^n_0\big)$ from $\Fmu\big(\CR^{n-1}_\infty\big)$ (and dually for $\CS$) within $\Fdis\powerset(Q)$. Thus, the only thing we lack is a way to compute $\Fmu\big(\CS^n_\infty\big)$ in an effective way.
In~this section we show how to represent the distributions~$\Fmu\big(\CS^n_\infty\big)$ and
$\Fmu\big(\CR^n_\infty\big)$ 
as~fixed points. We~begin by~proving that 
%the distributions~$\Fmu\big(\CS^n_\infty\big)$ and $\Fmu\big(\CR^n_\infty\big)$ 
these distributions
are limits in~$\R^{\powerset(Q)}$ of~the sequences of~vectors $\big(\Fmu\big(\CS^n_i\big)\big)_{i\in\N}$ and $\big(\Fmu\big(\CR^n_i\big)\big)_{i\in\N}$ respectively. This 
is~a~consequence  
% stems 
of~Lemmata~\ref{lem:S-R-monotone} and~\ref{lem:limit-steps-S-R}.

\begin{lemma}
\label{granica}
For each $n\in\N$ and $P\in\powerset(Q)$ we~have
\[\lim_{i\to\infty} \Fmu\big(\CS^n_i\big)(P)=\Fmu\big(\CS^n_\infty\big)(P)\text{ and }\lim_{i\to\infty} \Fmu\big(\CR^n_i\big)(P)=\Fmu\big(\CR^n_\infty\big)(P).\]
\end{lemma}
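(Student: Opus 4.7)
The plan is to reduce the statement to pointwise convergence of indicator functions on $\trees_A$ and then apply bounded convergence. Recall that for a $Q$\=/family $\CL$, by definition
\[
\Fmu(\CL)(P) \;=\; \int_{\trees_A} \mathbf{1}_{\{\CL[t]=P\}} \, d\mu_0(t),
\]
and all the sets appearing as $\{t : \CL[t]=P\}$ are Borel by Proposition~\ref{pro:measurability}, so the integrals make sense.

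First, I would fix a tree $t \in \trees_A$ and examine the sequence $\big(\CS^n_i[t]\big)_{i \in \N}$ inside the finite powerset $\powerset(Q)$. By Lemma~\ref{lem:S-R-monotone} applied state by state, this is a monotone decreasing sequence of subsets of $Q$. Since $Q$ is finite, any such sequence is eventually constant, and its stable value equals $\bigcap_{i} \CS^n_i[t]$, which by Lemma~\ref{lem:limit-steps-S-R} equals $\CS^n_\infty[t]$. Symmetrically, $\big(\CR^n_i[t]\big)_i$ is monotone increasing in $\powerset(Q)$, hence eventually constant at $\bigcup_i \CR^n_i[t] = \CR^n_\infty[t]$. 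Consequently, for every tree $t$ and every $P \in \powerset(Q)$, the scalar sequence $\mathbf{1}_{\{\CS^n_i[t]=P\}}$ is eventually constant at $\mathbf{1}_{\{\CS^n_\infty[t]=P\}}$, and similarly for $\CR$.

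Then I would apply the bounded convergence theorem on the probability space $(\trees_A, \mu_0)$: each indicator is dominated by the integrable constant $1$, so the limit can be exchanged with the integral to yield
\[
\lim_{i\to\infty} \Fmu\big(\CS^n_i\big)(P) \;=\; \Fmu\big(\CS^n_\infty\big)(P),
\]
and analogously for $\CR$.

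There is no serious obstacle: the real content has already been delivered by Lemma~\ref{lem:limit-steps-S-R}, and the finiteness of $Q$ upgrades that set-theoretic limit to pointwise-eventual-constancy of the indicator, which is much stronger than what bounded convergence requires. An elementary alternative that bypasses Lebesgue machinery is to use continuity of the measure directly: the events $\{t : P' \subseteq \CS^n_i[t]\} = \bigcap_{q \in P'} \CS^n_i(q)$ form a decreasing sequence of Borel sets whose intersection is $\{t : P' \subseteq \CS^n_\infty[t]\}$, so upward continuity of $\mu_0$ gives convergence of their measures; Möbius inversion over the lattice of supersets of $P$ then transfers the convergence to $\Fmu(\CS^n_i)(P)$. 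The $\CR$ case is symmetric, using downward continuity of measure on the increasing sequence $\{t : P' \subseteq \CR^n_i[t]\}$.
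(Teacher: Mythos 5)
Your proof is correct, and it rests on exactly the same core observation as the paper's: by Lemma~\ref{lem:S-R-monotone} the sequence $\big(\CS^n_i[t]\big)_i$ is decreasing in the finite lattice $\powerset(Q)$, hence eventually constant, and by Lemma~\ref{lem:limit-steps-S-R} its stable value is $\CS^n_\infty[t]$. Where you diverge is the measure-theoretic closing move. The sets $\{t\mid \CS^n_i[t]=P\}$ are not monotone in $i$ for a fixed $P$, so the paper cannot apply continuity of measure to them directly; instead it first proves convergence of $\sum_{P'\in U}\Fmu(\CS^n_i)(P')$ for every upward-closed family $U$ (where the corresponding events \emph{are} decreasing), and then recovers the single value at $P$ by writing $\{P\}$ as the difference of the two upward-closed families $\{P'\supseteq P\}$ and $\{P'\supsetneq P\}$ --- a slightly slicker device than the M\"obius inversion you sketch in your alternative. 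You sidestep the monotonicity issue altogether by invoking bounded convergence, which only needs pointwise convergence of the indicators; this is arguably more direct. The one thing the paper's route buys that yours does not is the intermediate identity for upward-closed families (Equation~\eqref{eq:upward-agrees}), which is reused verbatim in the proof of Proposition~\ref{pro:compute-limits} to pass inequalities $\sum_{P\in U}\beta(P)\leq\sum_{P\in U}\Fmu(\CS^n_i)(P)$ to the limit; with your proof one would have to re-derive that statement (easy, but an extra step).
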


\begin{proof}
We~consider case of~$\Fmu\big(\CS^n_\infty\big)$, the case of~$\Fmu\big(\CR^n_\infty\big)(P)$ is~entirely dual.
First, we show that the respective limits agree when taking sums over any upward closed family $U\subseteq\powerset(Q)$, see~\eqref{eq:upward-agrees} below. For $i\in \N$ let $X_i=\bigcup_{P'\in U}\{t\in\trees_A\mid \CS^n_i[t]=P'\}$ and $X_\infty = \bigcup_{P'\in U}\{t\in\trees_A\mid \CS^n_\infty[t]=P'\}$. Lemma~\ref{lem:S-R-monotone} together with the fact that $U$ is~upward\=/closed imply that $X_0\supseteq X_1\supseteq\ldots\supseteq X_\infty$. Lemma~\ref{lem:limit-steps-S-R} and finiteness of~$Q$ imply that for every tree~$t$ there exists an~index~$J$ such that $\CS^n_J[t]\subseteq \CS^n_\infty[t]$. Therefore, $\bigcap_{i\in \N} X_i= X_\infty$. By continuity of~the measure~$\mu_0$ we get that $\lim_{i\to\infty} \mu_0(X_i)=\mu_0(X_\infty)$. This means that
\begin{equation}
\label{eq:upward-agrees}
\lim_{i\to\infty} \sum_{P'\in U}\Fmu\big(\CS^n_i\big)(P')=\lim_{i\to\infty} \mu_0(X_i) = \mu_0(X_\infty)= \sum_{P'\in U}\Fmu\big(\CS^n_\infty\big)(P').
\end{equation}

Clearly, $\{P\}=\{P'\in\powerset(Q)\mid P'\supseteq P\}\setminus \{P'\in\powerset(Q)\mid P'\supsetneq P\}$ with both these families upward closed. Therefore, we~can apply~\eqref{eq:upward-agrees} twice and obtain the desired equation.
%
%Let
%Follows directly from Lemmata~\ref{lem:S-R-monotone} and
%~\ref{lem:limit-steps-S-R} and continuity of measure: if
%$(X_i)_{i\in\N}$ is an increasing or decreasing sequence of sets, 
%then $\lim_{i\to\infty} \mu_0(X_i) $ equals
%$\mu_0\big(\bigcup_{i\in\N} X_i\big)$ or $\mu_0\big(\bigcap_{i\in\N}
%X_i\big)$, respectively.
\end{proof}

%The following corollary follows directly from Lemmata~\ref{lem:order-of-families} and~\ref{lem:S-R-monotone}.

%\begin{corollary}
%\label{cor:mus-are-monotone}
%For each $n\in\N$ and $i\in\N$ we have
%\[\Fmu\big(\CS^n_i\big)\succeq \Fmu\big(\CS^n_{i+1}\big)\text{ and }\Fmu\big(\CR^n_i\big)\preceq \Fmu\big(\CR^n_{i+1}\big).\]
%\end{corollary}

The monotonicity of~$\Delta$, see Lemma~\ref{lem:Delta-mono}, implies the following lemma.

\begin{lemma}
\label{lem:monotone-F}
The operator~$\fun{\CF}{\Fdis{\powerset(Q)}}{\Fdis{\powerset(Q)}}$, see~Equation~\eqref{eq:def-F}, is~monotone in~${\preceq}$.
\end{lemma}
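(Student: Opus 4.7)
Fix $\beta \preceq \beta'$ in $\Fdis{\powerset(Q)}$; the goal is to show $\CF(\beta) \preceq \CF(\beta')$, i.e., for every upward-closed $U \subseteq \powerset(Q)$, $\sum_{P \in U} \CF(\beta)(P) \leq \sum_{P \in U} \CF(\beta')(P)$. The first step is bookkeeping: unfold the definition~\eqref{eq:def-F} and swap the order of summation, introducing $V_a \eqdef \{(P_\dL, P_\dR) \in \powerset(Q)^2 \mid \Delta(P_\dL, a, P_\dR) \in U\}$ for each $a \in A$, so that
\[
\sum_{P \in U} \CF(\beta)(P) = \frac{1}{|A|} \sum_{a \in A} \sum_{(P_\dL, P_\dR) \in V_a} \beta(P_\dL)\,\beta(P_\dR),
\]
and similarly for $\beta'$. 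Lemma~\ref{lem:Delta-mono} says $\Delta$ is monotone in both set coordinates, and $U$ is upward-closed, so every $V_a$ is upward-closed in $\powerset(Q) \times \powerset(Q)$ under componentwise inclusion. Hence it is enough to establish the following \emph{lifting claim}: for any upward-closed $V \subseteq \powerset(Q) \times \powerset(Q)$,
\[
\sum_{(P_\dL, P_\dR) \in V} \beta(P_\dL)\,\beta(P_\dR) \;\leq\; \sum_{(P_\dL, P_\dR) \in V} \beta'(P_\dL)\,\beta'(P_\dR).
\]

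To prove the lifting claim I would first record the standard reformulation of $\preceq$: the relation $\beta \preceq \beta'$ is equivalent to $\sum_P \beta(P)\, h(P) \leq \sum_P \beta'(P)\, h(P)$ for every monotone (i.e.\ ${\subseteq}$-non\-decreasing) function $\fun{h}{\powerset(Q)}{[0,1]}$. This follows because any such $h$ can be written as a nonnegative linear combination of indicator functions of upward-closed sets, using the telescoping $h = \sum_{k} (c_k - c_{k-1}) \mathbf{1}_{\{h \geq c_k\}}$ over the finite range $c_0 < c_1 < \ldots$ of $h$. Now I apply this twice in a ``coordinate-by-coordinate'' fashion: for fixed $P_\dR$, the function $P_\dL \mapsto \mathbf{1}_V(P_\dL, P_\dR)$ is monotone in $P_\dL$ because $V$ is upward-closed, so for each $P_\dR$ one gets $\sum_{P_\dL} \beta(P_\dL)\mathbf{1}_V(P_\dL, P_\dR) \leq \sum_{P_\dL} \beta'(P_\dL)\mathbf{1}_V(P_\dL, P_\dR)$; multiplying by $\beta(P_\dR)$ and summing over $P_\dR$ yields
\[
\sum_{(P_\dL, P_\dR) \in V} \beta(P_\dL)\beta(P_\dR) \;\leq\; \sum_{P_\dR} \beta(P_\dR)\, h'(P_\dR),
\qquad h'(P_\dR) \eqdef \sum_{P_\dL} \beta'(P_\dL)\mathbf{1}_V(P_\dL, P_\dR).
\]

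The function $h'$ is monotone in $P_\dR$, once again because $V$ is upward-closed in the second coordinate; a second application of the monotone characterisation of $\preceq$ therefore upgrades the outer $\beta$ to $\beta'$, giving $\sum_{P_\dR} \beta(P_\dR) h'(P_\dR) \leq \sum_{P_\dR} \beta'(P_\dR) h'(P_\dR) = \sum_{(P_\dL, P_\dR) \in V} \beta'(P_\dL)\beta'(P_\dR)$, as required. The main obstacle is really just this lifting claim; conceptually it is the statement that the product of two independent copies of a stochastically dominated distribution is dominated in the product order, which is equivalent to producing a monotone coupling of $\beta$ and $\beta'$ and taking its independent product, but the two-step substitution above avoids invoking Strassen's theorem explicitly and keeps the argument elementary.
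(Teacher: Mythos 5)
Your proof is correct and follows essentially the same route as the paper's: both reduce to the per-letter set $V_a$ (the paper's $O_a$), observe it is upward-closed via Lemma~\ref{lem:Delta-mono}, and then apply the domination $\beta\preceq\beta'$ twice, once per coordinate. The only difference is cosmetic: where you justify the second substitution through the layer-cake characterisation of $\preceq$ by monotone $[0,1]$-valued test functions, the paper simply re-sections the double sum over the other coordinate so that the raw definition of $\preceq$ (sums over upward-closed sections) suffices in both steps.
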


\begin{proof}
%\[\CF(\beta)(P)= \frac{1}{|A|}\cdot\sum_{(P_\dL,a,P_\dR)\in\Delta^{-1}(P)}\ \beta(P_\dL)\cdot \beta(P_\dR).\]
We need to prove that $\CF$ is monotone w.r.t. the order ${\preceq}$. Thus, for every $\alpha\preceq \beta\in\Fdis{\powerset(Q)}$ and an~upward\=/closed family $U\subseteq \powerset(Q)$ we should have $\sum_{P\in U} \CF(\alpha)(P)\leq \sum_{P\in U} \CF(\beta)(P)$.

After multiplying by $\frac{1}{|A|}$ and splitting the sum over separate letters $a\in A$ (see the definition of $\CF$, cf.~\eqref{eq:def-F}), it is enough to show that for each $a\in A$ and $O_a\eqdef \{(P_\dL,P_\dR)\mid\Delta(P_\dL,a,P_\dR)\in U\}$ we have
\[\sum_{(P_\dL, P_\dR)\in O_a}\alpha(P_\dL)\cdot \alpha(P_\dR)\leq\sum_{(P_\dL, P_\dR)\in O_a}\beta(P_\dL)\cdot \beta(P_\dR).\]
Now, by monotonicity of $\Delta$ (see Lemma~\ref{lem:Delta-mono}) and the fact that $U$ is upward\=/closed, we know that if $P_\dL\subseteq P'_\dL$, $P_\dR\subseteq P'_\dR$, and $(P_\dL,P_\dR)\in O_a$ then also $(P'_\dL,P'_\dR)\in O_a$. By $P_{\dL}^{-1}\cdot O_a$ and $O_a\cdot P_\dR^{-1}$ we will denote the sections $\{P_\dR\mid (P_\dL,P_\dR)\in O_a\}$ and $\{P_\dL\mid (P_\dL,P_\dR)\in O_a\}$ respectively. Notice that both of them are upward\=/closed. Thus, using the assumption that $\alpha\preceq \beta$ twice, we obtain
\begin{align*}
\sum_{(P_\dL, P_\dR)\in O_a}\alpha(P_\dL)\cdot \alpha(P_\dR)&= \sum_{P_\dL\in\powerset(Q)} \alpha(P_\dL)\cdot \left(\sum_{P_\dR \in P_\dL^{-1}\cdot O_a} \alpha(P_\dR)\right)\\
&\leq \sum_{P_\dL\in\powerset(Q)} \alpha(P_\dL)\cdot \left(\sum_{P_\dR \in P_\dL^{-1}\cdot O_a} \beta(P_\dR)\right)\\
&=\sum_{(P_\dL, P_\dR)\in O_a}\alpha(P_\dL)\cdot \beta(P_\dR)=\sum_{(P_\dL, P_\dR)\in O_a}\beta(P_\dR)\cdot\alpha(P_\dL)\\
&=\sum_{P_\dR\in\powerset(Q)} \beta(P_\dR)\cdot \left(\sum_{P_\dL \in O_a\cdot P_\dR^{-1}} \alpha(P_\dL)\right)\\
&\leq\sum_{P_\dR\in\powerset(Q)} \beta(P_\dR)\cdot \left(\sum_{P_\dL \in O_a\cdot P_\dR^{-1}} \beta(P_\dL)\right)\\
&=\sum_{(P_\dL, P_\dR)\in O_a}\beta(P_\dR)\cdot \beta(P_\dL)=\sum_{(P_\dL, P_\dR)\in O_a}\beta(P_\dL)\cdot\beta(P_\dR).
\end{align*}
\end{proof}

\begin{comment}
\noindent
A~detailed proof of~the lemma is~provided in~Appendix~\ref{ap:pro-F-mono}.
\end{comment}

With the two lemmata above, we~are ready to~conclude this section:
we characterise the distributions $\Fmu\big(\CS^n_\infty\big)$ and $\Fmu\big(\CR^n_\infty\big)$, see Figure~\ref{fig:construction}, by~a~specialised
variant of~the Knaster\=/Tarski fixed point theorem.

%The following corollary follows from Lemma~\ref{lem:def-of-fixpoint}, using Lemmata~\ref{lem:step-apply-F} and~\ref{lem:monotone-F} and Corollary~\ref{cor:mus-are-monotone} to check the respective properties of $\Fmu\big(\CS^n_\infty\big)$ and $\Fmu\big(\CR^n_\infty\big)$.

\begin{proposition}
\label{pro:compute-limits}
For each $n\in\N$ the distribution $\Fmu\big(\CS^n_\infty\big)$ is the ${\preceq}$\=/greatest distribution~$\beta$ satisfying $\beta=\CF(\beta)$ and $\beta \preceq \Fmu\big(\CS^n_0\big)$.
Similarly, $\Fmu\big(\CR^n_\infty\big)$ is the ${\preceq}$\=/least distribution~$\beta $ satisfying $\beta=\CF(\beta)$ and $\beta \succeq \Fmu\big(\CR^n_0\big)$.
\end{proposition}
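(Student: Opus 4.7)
The plan is to recognise this as a Knaster--Tarski-style argument specialised to the finite-dimensional space $\R^{\powerset(Q)}$, exploiting the fact (established earlier in the section) that $\Fmu\big(\CS^n_\infty\big)$ is the pointwise limit of the $\preceq$-decreasing sequence $\big(\Fmu(\CS^n_i)\big)_{i\in\N}$, while $\Fmu\big(\CR^n_\infty\big)$ is the pointwise limit of the $\preceq$-increasing sequence $\big(\Fmu(\CR^n_i)\big)_{i\in\N}$. I will treat the safety case in detail; the reachability case is entirely dual.

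First I would verify that $\Fmu\big(\CS^n_\infty\big)$ is a fixed point of $\CF$ lying below $\Fmu\big(\CS^n_0\big)$. The operator $\CF$ is a polynomial in the coordinates $\beta(P)$ (see~\eqref{eq:def-F}), so it is continuous as a map $\R^{\powerset(Q)}\to\R^{\powerset(Q)}$. Combining this with Lemma~\ref{lem:step-apply-F} and Lemma~\ref{granica} yields
\[
\CF\!\left(\Fmu\big(\CS^n_\infty\big)\right)=\CF\!\left(\lim_{i\to\infty}\Fmu\big(\CS^n_i\big)\right)=\lim_{i\to\infty}\CF\!\left(\Fmu\big(\CS^n_i\big)\right)=\lim_{i\to\infty}\Fmu\big(\CS^n_{i+1}\big)=\Fmu\big(\CS^n_\infty\big).
\]
Moreover, Lemma~\ref{lem:S-R-monotone} gives $\CS^n_\infty(p)\subseteq \CS^n_0(p)$ for every state $p$, and Lemma~\ref{lem:order-of-families} then yields $\Fmu\big(\CS^n_\infty\big)\preceq \Fmu\big(\CS^n_0\big)$.

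Next I would prove maximality. Let $\beta\in\Fdis{\powerset(Q)}$ satisfy $\beta=\CF(\beta)$ and $\beta\preceq \Fmu\big(\CS^n_0\big)$. Using monotonicity of $\CF$ (Lemma~\ref{lem:monotone-F}) and Lemma~\ref{lem:step-apply-F}, a straightforward induction on $i$ shows
\[
\beta=\CF^{i}(\beta)\;\preceq\;\CF^{i}\!\left(\Fmu\big(\CS^n_0\big)\right)=\Fmu\big(\CS^n_i\big)\qquad\text{for every }i\in\N.
\]
To conclude $\beta\preceq \Fmu\big(\CS^n_\infty\big)$, I pass to the limit: for every upward-closed $U\subseteq\powerset(Q)$ and every $i$, the definition of $\preceq$ gives $\sum_{P\in U}\beta(P)\leq \sum_{P\in U}\Fmu\big(\CS^n_i\big)(P)$, and taking $i\to\infty$ the right-hand side converges to $\sum_{P\in U}\Fmu\big(\CS^n_\infty\big)(P)$ by Lemma~\ref{granica}, so the inequality persists in the limit.

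The case of $\Fmu\big(\CR^n_\infty\big)$ is symmetric: one verifies $\Fmu\big(\CR^n_\infty\big)=\CF\big(\Fmu(\CR^n_\infty)\big)$ by the same continuity argument, and $\Fmu\big(\CR^n_\infty\big)\succeq \Fmu\big(\CR^n_0\big)$ via Lemmata~\ref{lem:S-R-monotone} and~\ref{lem:order-of-families}; then for any fixed point $\beta\succeq \Fmu\big(\CR^n_0\big)$, induction with $\CF$-monotonicity yields $\beta\succeq \Fmu\big(\CR^n_i\big)$ for every $i$, and passage to the limit gives $\beta\succeq \Fmu\big(\CR^n_\infty\big)$. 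I do not anticipate any real obstacle here: the only point requiring a moment of care is the interchange of $\CF$ with the limit (handled by the polynomial continuity of $\CF$) and the preservation of $\preceq$ under pointwise limits (handled by the fact that $\preceq$ is defined by finitely many non-strict linear inequalities, each of which is a closed condition).
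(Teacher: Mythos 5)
Your proposal is correct and follows essentially the same route as the paper's own proof: both establish the fixed-point property via polynomial continuity of $\CF$ combined with Lemmata~\ref{granica} and~\ref{lem:step-apply-F}, obtain $\Fmu\big(\CS^n_\infty\big)\preceq\Fmu\big(\CS^n_0\big)$ from Lemmata~\ref{lem:S-R-monotone} and~\ref{lem:order-of-families}, and prove extremality by iterating the monotone operator and passing to the limit over upward-closed sets. No gaps.
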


\begin{proof}
Consider the case of~$\Fmu\big(\CS^n_\infty\big)$. Observe that $\CF$ is~continuous in~$\R^{\powerset(Q)}$. Indeed, it~is~given by a~vector of~quadratic polynomials from $\R^{\powerset(Q)}$ to $\R^{\powerset(Q)}$. Now, take $P\in\powerset(Q)$ and observe that
\begin{align*}
\Fmu\big(\CS^n_\infty\big)(P)=
\lim_{i\to\infty} \Fmu\big(\CS^n_i\big)(P)=
&\lim_{i\to\infty} \CF\big(\Fmu\big(\CS^n_i\big)\big)(P)=\\
&\CF\Big(\lim_{i\to\infty} \Fmu\big(\CS^n_i\big)(P)\Big)=
\CF\Big(\Fmu\big(\CS^n_\infty\big)(P)\Big)
\end{align*}
where the first equality follows from Lemma~\ref{granica}; the second from Lemma~\ref{lem:step-apply-F}; the third from continuity of~$\CF$; and the last from Lemma~\ref{granica}, again. Therefore, $\beta= \Fmu\big(\CS^n_\infty\big)$ satisfies $\beta=\CF(\beta)$. Moreover, Lemmata~\ref{lem:S-R-monotone}
and~\ref{lem:order-of-families} imply that $\beta\preceq \Fmu\big(\CS^n_0\big)$.

Consider now any distribution $\beta\in\Fdis{\powerset(Q)}$ such that $\beta=\CF(\beta)$ and $\beta\preceq \Fmu\big(\CS^n_0\big)$. We~need to prove that $\beta\preceq\Fmu\big(\CS^n_\infty\big)$. Lemma~\ref{lem:monotone-F} states that $\CF$ is~monotone. Therefore, by~inductively applying Lemma~\ref{lem:step-apply-F} for $i=0,\ldots$, we~infer that
\[\beta=\CF(\beta)\leq \CF\Big(\Fmu\big(\CS^n_i\big)\Big)=\Fmu\big(\CS^n_{i+1}\big).\]
Take any upward\=/closed family $U\subseteq\powerset(Q)$. The above inequality implies that for each $i\in\N$ we~have $\sum_{P\in U} \beta(P)\leq \sum_{P\in U} \Fmu\big(\CS^n_i\big)(P)$. By~taking the limit as~in~Lemma~\ref{granica} we~obtain that $\sum_{P\in U} \beta(P)\leq \sum_{P\in U} \Fmu\big(\CS^n_\infty\big)(P)$. This implies that $\beta\preceq \Fmu\big(\CS^n_\infty\big)$.

The case of~$\Fmu\big(\CR^n_\infty\big)$ is~similar, we utilise the opposite monotonicity $\beta\succeq \Fmu\big(\CR^n_{i+1}\big)$.
\end{proof}
%The above proposition provides an~implicit definition of~the limit
%distributions~$\Fmu\big(\CS^n_\infty\big)$ and
%$\Fmu\big(\CR^n_\infty\big)$. 
%In~the next section, we~describe how to~actually compute these distributions.

%\section{Computations in \texorpdfstring{$\Fdis X$}{DX}}
\section{Computing measures}
\label{sec:computations}

In~this section, we~conclude our solution to~Problem~\ref{pro:comp-mso} for weak alternating automata. This is~achieved by~a~reduction to~the 
first\=/order theory of~the real numbers ${\cal R} = \langle \R, 0,1,{+}, {\cdot}\rangle$.
The theory is~famously decidable thanks to~Tarski\=/Seidenberg  theorem, see e.g.~\cite{tarski_decision}. 
%To~be~more precise, for a~given weak alternating automaton~$\CA$ we~construct a~first\=/order formula that uniquely describes the number $\mu_0\big(\lang(\CA)\big)$; we~justify this approach further in~Subsection~\ref{ssec:algebra} below.

Throughout this section, we assume that the reader is~familiar with the syntax and semantics of~first\=/order logic.  
We~say that a~formula~$\varphi (x_1, \ldots ,x_k) $ {\em represents} a~relation~$ r \subseteq \R^k$ if~it~holds in~$\R$ according to~an~evaluation~$v$ of~the free  variables~$ x_1,\ldots ,x_n$,  precisely when the tuple~$\langle v(x_1),\ldots , v(x_k)\rangle  $  belongs to~$ r$.  For example, the formula~$ \exists z.\ x + (z {\cdot} z) = y$ represents the standard ordering~${\leq}$  on~real numbers.  A~formula represents a~number~$ a \in \R$ if~it~represents the singleton~$\{ a\} $; for example the formula~$ x {\cdot} x = 1 {+} 1 \land \exists z.\ x = z {\cdot} z$, represents the number~$\sqrt{2}$.

%Due to decidability results~\cite{tarski_decision,collins_algebraic_decomposition,ben_tarski_complexity} we treat $\psi_\CA(x)$ as a~representation of $\mu_0\big(\lang(\CA)\big)$, see Subsection~\ref{ssec:algebra} below.

%Recall that \emph{first\=/order logic} is a~formal language defined by the grammar:
%\[\psi::= \lnot\psi\vbar \psi\lor \psi'\vbar \psi\land \psi'\vbar\exists x.\ \psi(x)\vbar\forall x.\ \psi(x)\vbar R(x_1,\ldots,x_n),\]
%where $R$ is a~relational symbol from the given signature $\Sigma$. In this article we fix the signature of real numbers, i.e.~of the structure $\R=\langle \R, 0,1,{+}, {\cdot}\rangle$.

\begin{theorem}
\label{thm:exists-psi}
Given a~weak alternating automaton $\CA$ one can compute a~formula $\psi_\CA(x)$ that 
represents the number  $ \mu_0\big(\lang(\CA)\big) $.
%holds in $\R$ for a~unique number $x=\mu_0\big(\lang(\CA)\big)$. 
Moreover, the formula is in a~prenex normal form, its size is exponential in the size of $\CA$, and the quantifier alternation of $\psi_\CA(x)$ is constant.
\end{theorem}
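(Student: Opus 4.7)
The plan is to follow the bottom\=/up construction depicted in Figure~\ref{fig:construction}. Starting from the trivial distribution $\Fmu\bigl(\CS^0_0\bigr)$, which places all mass on $Q$ itself, I alternate between (i) the fixed\=/point passage supplied by Proposition~\ref{pro:compute-limits}, producing $\Fmu\bigl(\CS^n_\infty\bigr)$ or $\Fmu\bigl(\CR^n_\infty\bigr)$ from the current base distribution, and (ii) the priority\=/shift operations $\CQ_{<n}$ and $\CQ_{\geq n}$ from Lemma~\ref{lem:compute-apply-s-r}, which supply the next base distribution. After enough priority levels, namely once $n$ exceeds every priority appearing in~$\CA$, Lemma~\ref{lem:S-R-limit-steps} identifies the final limit distribution with $\Fmu\bigl(\lang(\CA, \cdot)\bigr)$. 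The desired probability is then recovered as the linear sum $\mu_0\bigl(\lang(\CA)\bigr) = \sum_{P \ni q_\init} \Fmu\bigl(\lang(\CA, \cdot)\bigr)(P)$.

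To translate this into first\=/order logic over $\langle \R, 0, 1, {+}, {\cdot}\rangle$, I encode each distribution $\beta \in \Fdis{\powerset(Q)}$ by a tuple $(x_P)_{P \in \powerset(Q)}$ of $2^{|Q|}$ reals, which already forces the exponential size. The conditions $\beta = \CF(\beta)$, $\beta = \CQ_{<n}(\gamma)$, and $\beta = \CQ_{\geq n}(\gamma)$ read off from \eqref{eq:def-F}--\eqref{eq:def-CQ-2} are quantifier\=/free systems of polynomial equations. The relation $\alpha \preceq \beta$ can be written succinctly by existentially guessing a coupling on $\powerset(Q) \times \powerset(Q)$ supported on $\subseteq$, introducing $2^{2|Q|}$ extra variables subject to quantifier\=/free constraints. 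The fixed\=/point characterisation from Proposition~\ref{pro:compute-limits} then takes the form $\beta = \CF(\beta) \wedge \beta \preceq \gamma \wedge \forall \beta'.\bigl(\beta' = \CF(\beta') \wedge \beta' \preceq \gamma \Rightarrow \beta' \preceq \beta\bigr)$, with an analogous dual assertion for $\Fmu\bigl(\CR^n_\infty\bigr)$.

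The formula $\psi_\CA(x)$ is then assembled in one outer existential block introducing witnesses $\vec\beta_0, \ldots, \vec\beta_{N-1}$ for every intermediate limit distribution in the chain (together with the coupling witnesses needed by positive occurrences of $\preceq$); a single inner universal block carries the challenger distributions $\vec\beta'_0, \ldots, \vec\beta'_{N-1}$ (one per fixed\=/point stage) for the maximality/minimality checks; and a final existential block supplies the couplings required to refute the premises of the maximality implications. Pulling every quantifier to the front produces a prenex formula with a small constant number of blocks, independent of how many priority levels the chain contains. The quantifier\=/free body asserts, in addition, that $x$ equals the prescribed sum of entries of the last existential witness.

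The main obstacle is keeping the quantifier alternation constant even though the chain of fixed\=/point stages has length $\Theta(|\CA|)$: a naive nested formulation would stack one alternation per stage. The solution, as sketched above, is to decouple the stages by introducing all intermediate distributions as existential witnesses simultaneously and by sharing a single universal block among all challenger distributions; this is legitimate because Proposition~\ref{pro:compute-limits} uniquely characterises each $\Fmu\bigl(\CS^n_\infty\bigr)$ and $\Fmu\bigl(\CR^n_\infty\bigr)$ from its predecessor, so the existential witnesses are forced to coincide with the intended distributions. A routine inspection then yields exponential size and constant alternation, as required.
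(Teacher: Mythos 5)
Your overall architecture coincides with the paper's: the same chain $\Fmu(\CS^0_0)\to\Fmu(\CS^0_\infty)\to\Fmu(\CR^1_0)\to\cdots$ driven by Proposition~\ref{pro:compute-limits}, Lemma~\ref{lem:compute-apply-s-r} and Lemma~\ref{lem:S-R-limit-steps}, the same final sum over $P\ni q_\init$, and the same decoupling trick for constant alternation --- introduce \emph{all} intermediate distributions in one outer existential block, share one universal block of challengers among all fixed\=/point stages, and use the uniqueness part of Proposition~\ref{pro:compute-limits} to argue the witnesses are forced. The paper's formula has exactly this shape, with four quantifier blocks.

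The one real divergence is how you express ${\preceq}$, and it contains a slip. The paper encodes upward\=/closed sets as $0/1$ indicator vectors and uses the definition of ${\preceq}$ in both polarities: positive occurrences become universal quantification over indicators, negative occurrences become existential quantification over an indicator violating the inequality. You instead use Strassen\=/type couplings, which give an \emph{existential} definition of ${\preceq}$ --- fine for positive occurrences --- but you then assert that ``a final existential block supplies the couplings required to refute the premises'' $\beta'\preceq\gamma$ of the maximality implications. A coupling certifies ${\preceq}$; it cannot existentially certify ${\npreceq}$, whose coupling formulation is a universal statement (no coupling works). As written that part of your formula says the wrong thing. The repair is routine: either put the negated premises under a trailing universal block over couplings (the disjunction $(\forall\lambda.\,\neg\cdots)\vee(\exists\lambda'.\,\cdots)$ still prenexes into constantly many blocks), or witness ${\npreceq}$ existentially by an upward\=/closed set, as the paper does. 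With that fix, and with the soundness/completeness of the forcing argument actually spelled out (the paper devotes two inductive lemmata to it rather than ``routine inspection''), your construction yields the stated size and alternation bounds.
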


\begin{proof}
Fix a~weak alternating automaton~$\CA=\langle A, Q, q_\init, \delta, \Omega\rangle$. Let $N>\Omega(q_\init)$ be an~even number (either $\Omega(q_\init){+}1$ or $\Omega(q_\init){+}2$). Fix an~enumeration $\{P_1,\ldots,P_K\}$ of $\powerset(Q)$ with $K=2^{|Q|}$. We will identify a~distribution $\alpha\in\Fdis{\powerset(Q)}$ with its representation $\alpha=(a_1,\ldots,a_K)\in\R^K$ as a~vector of real numbers. Following this identification, $\alpha(P_k)$ stands for $a_k$. Clearly the properties that $\CF(\alpha)=\beta$, $\CQ_{< n}(\alpha)=\beta$, and $\CQ_{\geq n}(\alpha)=\beta$ are definable by quantifier free formulae of size polynomial in~$K$.

The following formula defines the fact that  $\alpha\in\Fdis{\powerset(Q)}$.
\begin{align*}
\mathrm{dist}(\alpha)\equiv \sum_{k=1}^K\alpha(P_k)=1\ \land\ \bigwedge_{k=1}^K 0\leq \alpha(P_k)\leq 1
\end{align*}

Analogously to our representation of distributions, every 
%upward closed set 
subset
$U\subseteq\powerset(Q)$ can be represented by its indicator: a~vector of numbers $\iota=(i_1,\ldots,i_K)$ such that $\iota(P)$ is either~$0$ (if $P\notin U$) or~$1$ (if $P\in U$).  Note that if $ U$ is upward closed then whenever $P\subseteq P'$ and $\iota(P)=1$ then $\iota(P')=1$.  The following formula defines the fact that $ \iota$  represents an upward-closed set.
\begin{align*}
\mathrm{upward}(\iota)\equiv \bigwedge_{k=1}^K \big(\iota(P_k){=}0\lor\iota(P_k){=}1\big)\land \bigwedge_{P\subseteq P'} \iota(P){=}1\rightarrow \iota(P'){=}1.
\end{align*}

Thus, to check if $\alpha\preceq \beta$ one can use the following formula (see Claim~\ref{cl:order-by-minor} below)
\begin{align*}
\mathrm{minor}(\alpha,\beta,\iota)\equiv \sum_{k=1}^K \alpha(P_k)\cdot \iota(P_k)\leq \sum_{k=1}^K \beta(P_k)\cdot \iota(P_k).
\end{align*}

Notice that all the above formulae: $\mathrm{dist}(\alpha)$, $\mathrm{upward}(\iota)$, and $\mathrm{minor}(\alpha,\beta,\iota)$ are quantifier free: the $\bigwedge$ there are just explicitly written as finite conjunctions. Therefore, these formulae can be used to relativise quantifiers in a~prenex normal form of a~formula: for instance we write $\forall \alpha\colon \mathrm{dist}(\alpha).\ \exists \beta\colon \mathrm{dist}(\beta).\ \psi(\alpha,\beta)$ to denote $\forall \alpha.\ \exists \beta.\ \mathrm{dist}(\alpha)\rightarrow\big(\mathrm{dist}(\beta)\land\psi(\alpha,\beta)\big)$.

\begin{claim}
\label{cl:order-by-minor}
Given two distributions $\alpha$ and $\beta$, we have $\alpha\preceq\beta$ if and only if
\[\forall \iota\colon \mathrm{upward}(\iota).\ \mathrm{minor}(\alpha,\beta,\iota).\]
\end{claim}

\begin{figure}
\centering
\begin{tikzpicture}
\tikzstyle{nod} = [scale=0.9]
\tikzstyle{dtz} = [scale=1.2]
\tikzstyle{arr} = [draw, -{Latex}]
\evalFloat{\x}{3}

\node[nod] (u0) at (0*\x, +1) {$\alpha_0=\Fmu(\CS^0_0)$};
\node[nod] (l0) at (0*\x, -1) {$\beta_0=\Fmu(\CS^0_\infty)$};

\draw (u0) edge[arr] node[left] {$\heartsuit$} (l0);

\node[nod] (u1) at (1*\x, +1) {$\beta_1=\Fmu(\CR^0_\infty)$};
\node[nod] (l1) at (1*\x, -1) {$\alpha_1=\Fmu(\CR^0_0)$};

\draw (l1) edge[arr] node[left] {$\spadesuit$} (u1);

\draw (l0) edge[arr] node[above] {$\CQ_{<1}$} (l1);

% -----

\node[nod] (u2) at (2*\x, +1) {$\alpha_2=\Fmu(\CS^0_0)$};
\node[nod] (l2) at (2*\x, -1) {$\beta_2=\Fmu(\CS^0_\infty)$};

\draw (u2) edge[arr] node[left] {$\heartsuit$} (l2);

\node[nod] (u3) at (3*\x, +1) {$\beta_3=\Fmu(\CR^0_\infty)$};
\node[nod] (l3) at (3*\x, -1) {$\alpha_3=\Fmu(\CR^0_0)$};

\draw (l3) edge[arr] node[left] {$\spadesuit$} (u3);

\draw (l2) edge[arr] node[above] {$\CQ_{<3}$} (l3);

\draw (u1) edge[arr] node[below] {$\CQ_{\geq 2}$} (u2);
% ----

\node[nod] (u4) at (4*\x-0.75, +1) {};

\node[dtz] at (4*\x, +1) {$\cdots$};
\node[dtz] at (4*\x, +0) {$\cdots$};
\node[dtz] at (4*\x, -1) {$\cdots$};

\draw (u3) edge[arr] node[below] {$\CQ_{\geq 4}$} (u4);
\end{tikzpicture}
\caption{A diagram of the distributions $\alpha_n$ and $\beta_n$ in the formula $\psi_\CA(x)$, cf.~Figure~\ref{fig:construction}. The symbol $\heartsuit$ represents applications of Proposition~\ref{pro:compute-limits} in the case of $\CS^n_0$ and $\CS^n_\infty$; while $\spadesuit$ corresponds to the dual case of $\CR^n_0$ and $\CR^n_\infty$.}
\label{fig:how-ab-are-chosen}
\end{figure}

The formula $\psi_\CA(x)$ is indented to specify the distributions $(\alpha_n,\beta_n)_{n=0,\ldots,N}$ in a~way depicted on Figure~\ref{fig:how-ab-are-chosen}. The value $\Fmu\big(\CS^0_0(P))$ is~$1$ if~$P=Q$ and $0$ otherwise, see Lemma~\ref{lem:S-R-limit-steps}.
Proposition~\ref{pro:compute-limits} allows us~to~define $\Fmu\big(\CS^n_\infty)$ (resp.~$\Fmu\big(\CR^n_\infty)$) using $\Fmu\big(\CS^n_0)$ (resp.~$\Fmu\big(\CR^n_0)$) as specific fixed points of the operation~$\CF$.
Finally, Lemma~\ref{lem:compute-apply-s-r} allows us~to~define $\Fmu\big(\CR^n_0\big)$ using $\CQ_{<n}$, and $\Fmu\big(\CS^n_0\big)$ using $\CQ_{\geq n}$. The value of $x$ is related to those distributions based on~Lemma~\ref{lem:S-R-limit-steps} which implies that $\mu_0\big(\lang(\CA)\big)=\sum_{P\colon q_\init\in P\in\powerset(Q)}\ \Fmu\big(\CS^N_0\big)(P)$.

The following equation defines the formula $\psi_\CA(x)$.
\begingroup
\allowdisplaybreaks
\begin{align}
\psi_\CA(x)\equiv&\ \exists \alpha_0,\beta_0\colon \mathrm{dist}(\alpha_0), \mathrm{dist}(\beta_0), \beta_0 {=} \CF(\beta_0).\nonumber\\
&\ \quad\vdots\label{eq:it-ex-ab}\\
&\ \exists \alpha_N,\beta_N\colon \mathrm{dist}(\alpha_N), \mathrm{dist}(\beta_N), \beta_N {=} \CF(\beta_N).\nonumber\\
&\ \quad \forall \theta\colon \mathrm{dist}(\theta), \theta {=} \CF(\theta). \label{eq:it-ex-t}\\
&\ \qquad \exists \iota_0\colon \mathrm{upward}(\iota_0).\nonumber\\
&\ \qquad \quad\vdots\label{eq:it-ex-i}\\
&\ \qquad \exists \iota_N\colon \mathrm{upward}(\iota_N).\nonumber\\
&\ \qquad \quad \forall \gamma_0\colon \mathrm{upward}(\gamma_0).\nonumber\\
&\ \qquad \quad \quad\vdots\label{eq:it-fa-t}\\
&\ \qquad \quad \forall \gamma_N\colon \mathrm{upward}(\gamma_N).\nonumber\\
&\ \qquad \qquad \left(\alpha_0(Q)=1\land\bigwedge_{P\neq Q} \alpha_0(P)=0\right)\ \land\label{eq:it-alphaO}\\
&\ \qquad \qquad \left(\bigwedge_{n=1}^N [\text{$n$ is odd}]\rightarrow \alpha_n = \CQ_{<n}(\beta_{n-1})\right)\ \land\label{eq:it-b-to-a-o}\\
&\ \qquad \qquad \left(\bigwedge_{n=1}^N [\text{$n$ is even}]\rightarrow \alpha_n = \CQ_{\geq n}(\beta_{n-1})\right)\ \land\\
&\ \qquad \qquad \left(\bigwedge_{n=0}^N [\text{$n$ is odd}]\rightarrow \mathrm{minor}(\alpha_n,\beta_n,\gamma_n)\right)\ \land\label{eq:it-order-ab-o}\\
&\ \qquad \qquad \left(\bigwedge_{n=0}^N [\text{$n$ is even}]\rightarrow \mathrm{minor}(\beta_n,\alpha_n,\gamma_n)\right)\ \land\label{eq:it-order-ab-e}\\
&\ \qquad \qquad \left(\bigwedge_{n=0}^N [\text{$n$ is odd}]\rightarrow\Big(\lnot\mathrm{minor}(\alpha_n,\theta,\iota_n)\lor \mathrm{minor}(\beta_n,\theta,\gamma_n)\Big)\right)\ \land\label{eq:it-order-bt-o}\\
&\ \qquad \qquad \left(\bigwedge_{n=0}^N [\text{$n$ is even}]\rightarrow\Big(\lnot\mathrm{minor}(\theta,\alpha_n,\iota_n)\lor \mathrm{minor}(\theta,\beta_n,\gamma_n)\Big)\right)\ \land\label{eq:it-order-bt-e}\\
&\ \qquad \qquad \left(\sum_{P\ni q_\init} \alpha_N(P)=x\right)\label{eq:it-x}
\end{align}
\endgroup

Observe that the size of this formula is polynomial in $K$ and $N$ (in fact it is $\CO(N\cdot K^2)$), i.e.~exponential in the size of the automaton $\CA$. Moreover, the formula is in prenex normal form and its quantifier alternation is $4$ (the sub\=/formulae that involve $\bigwedge$ are written explicitly as conjunctions).

We begin by proving soundness of the formula: we assume that $\psi_\CA(x)$ holds and show that $x=\mu_0\big(\lang(\CA)\big)$. Consider a~sequence of distributions $(\alpha_n,\beta_n)_{n=0,\ldots,N}$ witnessing~\eqref{eq:it-ex-ab}. The following two lemmata prove inductively that for $n=0,\ldots,N$ we have
\begin{align}
\alpha_n = \Fmu(\CS^n_0)&\text{ and }\beta_n = \Fmu(\CS^n_\infty)&\text{for even $n$,}\label{eq:req-for-ab}\\
\alpha_n = \Fmu(\CR^n_0)&\text{ and }\beta_n = \Fmu(\CR^n_\infty)&\text{for odd $n$.}\nonumber
\end{align}

\begin{lemma}
\label{lem:ind-step-a-to-b}
Using the above notations and the assumption that $\psi_\CA(x)$ holds:
\begin{align*}
\text{for even $n$, if }\alpha_n = \Fmu(\CS^n_0)&\text{ then }\beta_n = \Fmu(\CS^n_\infty),\\
\text{for odd $n$, if }\alpha_n = \Fmu(\CR^n_0)&\text{ then }\beta_n = \Fmu(\CR^n_\infty).
\end{align*}
\end{lemma}

\begin{proof}
Both claims follow from Proposition~\ref{pro:compute-limits}. Take $n$ odd and assume that $\alpha_n=\Fmu(\CR^n_0)$. We know that $\beta_n=\CF(\beta_n)$ by~\eqref{eq:it-ex-ab}. Moreover, by Claim~\ref{cl:order-by-minor}, the arbitrary choice of $\gamma_n$, and~\eqref{eq:it-order-ab-o} we know that $\alpha_n\preceq \beta_n$. It is enough to prove that if $\theta$ is any distribution satisfying $\alpha_n\preceq \theta$ and $\theta=\CF(\theta)$ then $\beta_n\preceq\theta$.

Assume contrarily that $\theta$ is a~distribution such that $\alpha_n\preceq \theta$ and $\theta=\CF(\theta)$ but $\beta_n\npreceq\theta$. We know that $\theta$ must satisfy the sub\=/formula in~\eqref{eq:it-ex-t}. Take the upward closed sets $(\iota_\ell)_{\ell=0,\ldots,N}$ given by~\eqref{eq:it-ex-i}. Now let $(\gamma_\ell)_{\ell=0,\ldots,N}$ be any sequence of upward closed sets such that $\gamma_n$ witnesses the fact that $\beta_n\npreceq\theta$, i.e.~$\lnot\mathrm{minor}(\beta_n,\theta,\gamma_n)$ holds. But this is a~contradiction with~\eqref{eq:it-order-bt-o} because $\mathrm{minor}(\alpha_n,\theta,\iota_n)$ is true as $\alpha_n\preceq\theta$ and $\mathrm{minor}(\beta_n,\theta,\gamma_n)$ is false.

The case of even $n$ is analogous.
\end{proof}

%The following lemma together with Lemma~\ref{lem:ind-step-a-to-b} conclude the proof of~\eqref{eq:req-for-ab}.

\begin{lemma}
Using the above notations and the assumption that $\psi_\CA(x)$ holds:
\begin{align*}
\alpha_n = \Fmu(\CS^n_0)\text{ for even $n$}\quad\text{and}\quad
\alpha_n = \Fmu(\CR^n_0)\text{ for odd $n$.}
\end{align*}
\end{lemma}

\begin{proof}
The proof is inductive in $n$. First, $\alpha_0=\Fmu(\CS^n_0)$ because of~\eqref{eq:it-alphaO} and the statement for $n=0$ in Lemma~\ref{lem:S-R-limit-steps} (we can take $\theta=\beta_0$ and $\gamma_\ell=\iota_\ell$ for $\ell=0,\ldots,N$ to check that Condition~\eqref{eq:it-alphaO} holds).

Now assume that the above conditions are true for $n{-}1$ for some $n\in\{1,\ldots,N\}$. Again, by the symmetry we assume that $n$ is odd, i.e.~$\alpha_{n-1} = \Fmu(\CS^{n-1}_0)$. By Lemma~\ref{lem:ind-step-a-to-b} we know that $\beta_{n-1} = \Fmu(\CS^{n-1}_\infty)$. Condition~\eqref{eq:it-b-to-a-o} says that $\alpha_n = \CQ_{< n}(\beta_{n-1})=\CQ_{< n}\big(\Fmu(\CS^{n-1}_\infty)\big)$. Now Lemma~\ref{lem:compute-apply-s-r} implies that $\CQ_{< n}\big(\Fmu(\CS^{n-1}_\infty)\big)=\Fmu\big(\CR^n_0\big)$ and the induction step is complete.
\end{proof}

Equation~\eqref{eq:req-for-ab} together with Condition~\eqref{eq:it-x}, imply that $x=\mu_0\{t\in\trees_A\mid q_\init\in\CS^N_0[t]\}$. Since $N>\Omega(q_\init)$ is even, Lemma~\ref{lem:S-R-limit-steps} implies that $\CS^N_0(q_\init)=\lang(\CA,q_\init)$ and therefore, $q_\init\in\CS^N_0[t]$ if and only if $t\in\lang(\CA)$. This guarantees that $x=\mu_0\big(\lang(\CA)\big)$.

We will now prove completeness of the formula: if $x=\mu_0\big(\lang(\CA)\big)$ then $\psi_\CA(x)$ holds. Choose the distributions $(\alpha_n,\beta_n)_{n=0,\ldots,N}$ in~\eqref{eq:it-ex-ab} as in~\eqref{eq:req-for-ab}. We will show that then the rest of the formula holds. Consider any distribution $\theta$. For each $n=0,\ldots,N$ let $\iota_n$ be an~upward\=/closed set witnessing that $\alpha_n\npreceq\theta$ for $n$ odd (resp.~$\theta\npreceq \alpha_n$ for $n$ even); or any upward closed set if the respective inequality holds.

Take any $(\gamma_n)_{n=0,\ldots,N}$ that are upward closed. We need to check that the sub\=/formula starting in~\eqref{eq:it-alphaO} holds. Conditions~\eqref{eq:it-alphaO} ---~\eqref{eq:it-order-ab-e} and~\eqref{eq:it-x} hold by the same lemmata as mentioned in the previous section. To check Conditions~\eqref{eq:it-order-bt-o} and~\eqref{eq:it-order-bt-e} one again invokes Proposition~\ref{pro:compute-limits}: either $\iota_n$ witnesses that $\alpha_n\npreceq\theta$ (resp. $\theta\npreceq \alpha_n$) or, if $\iota_n$ was chosen arbitrarily, then Proposition~\ref{pro:compute-limits} implies that also the respective inequality with $\beta_n$ holds.
\end{proof}

\section{Branching processes}
\label{sec:branching}

For the sake of~simplicity we~define only binary branching processes, the case of~a~fixed higher arity can be~solved analogously. A~\emph{branching process} is~a~tuple $\CP = \langle A, \tau, \alpha_\init   \rangle$ 
where $A$ is~a~finite alphabet; $\fun{\tau}{A}{\Fdis{A^2}}$ a~\emph{branching function} that assigns a~probability distribution over~$A^2$ to~every letter in~$A$;  and $\alpha_\init \in \Fdis{A}$  an~\emph{initial distribution}. 
We~assume that all probabilities occurring in these distributions are rational. 
By~the \emph{size} of~$\CP$ we~understand the size of its binary representation.

A~branching process~$\CP$ can be~seen as~a~generator of~random trees: it~defines a~complete Borel measure~$\mu_\CP$ over the set of~infinite trees in~the following way.
Let $\fun{f}{\dom(f)}{A}$ be~a~complete finite tree of~depth~$d\geq 0$ i.e.~$\dom(f)=\{u\in \{\dL,\dR\}^*\mid |u|\leq d\}=\{\dL,\dR\}^{< d+1}$. Then the measure~$\mu_{\CP}$ of~the basic set~$U_{f}$, see Section~\ref{sec:basic}, is~defined by
\begin{equation}
\mu_\CP(U_{f}) \eqdef \alpha_\init(f(\varepsilon)) \cdot \prod_{u \in \{\dL,\dR\}^{<d}} \tau(f(u))\big(f(u\dL),f(u \dR)\big).
\label{eq:def-of-process-measure}
\end{equation}
Now, $\mu_\CP$~can be~extended in~a~standard way to~a~complete Borel measure on~the set of~all infinite trees~$\trees_A$. Intuitively, a~tree~$t\in\trees_A$ that is~chosen according to~$\mu_\CP$ is~generated in~a~top\=/down fashion: the root label~$t(\varepsilon)$ is~chosen according to~the initial distribution~$\alpha_\init$; and the labels of~the children~$u\dL$ and $u\dR$ of~a~node $u$ are chosen according to~the distribution $\tau(t(u))\in\Fdis{A^2}$ defined %by
for the label of~their parent~$u$.

Observe that the uniform measure~$\mu_0$ over trees~$\trees_A$ equals the measure~$\mu_{\CP_0}$ defined by~the branching process
$\CP_0 = \langle A, \tau_0, \alpha_0 \rangle$,
where $\alpha_0(a) = |A|^{-1}$ and $\tau_0(a)(a_\dL,a_\dR) = |A|^{-2}$ for each $a,a_\dL,a_\dR\in A$.

\begin{theorem}
\label{thm:exists-psi-branching}
Given a~weak alternating automaton~$\CA$ and a~branching process~$\CP$ one can compute a~formula~$\psi_{\CA,\CP}(x)$ that 
%holds in $\R$ for a~unique number $x=\mu_{\CP}\big(\lang(\CA)\big)$. 
represents the number~$\mu_{\CP}\big(\lang(\CA)\big) $. 
Moreover, the formula is~in a~prenex normal form; its size is~exponential in~the size of~$\CA$ and polynomial in~the size of~$\CP$; and the quantifier alternation of~$\psi_{\CA,\CP}$ is~constant.
\end{theorem}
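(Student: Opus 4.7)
The plan is to lift the proof of Theorem~\ref{thm:exists-psi} by tracking, for each letter $a\in A$, a separate distribution over $\powerset(Q)$ that records acceptance patterns of a subtree conditioned on its root label being $a$. Concretely, for $a\in A$ let $\mu^a_\CP$ denote the measure generated by $\CP$ when the initial distribution is concentrated at $a$, so that $\mu_\CP=\sum_{a\in A}\alpha_\init(a)\cdot\mu^a_\CP$. For a $Q$-family $\CL$ define $\Fmu^a_\CP(\CL)(P)\eqdef\mu^a_\CP\{t\in\trees_A\mid \CL[t]=P\}$, and work in the product space $(\Fdis{\powerset(Q)})^A$ ordered componentwise by ${\preceq}$. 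The branching-process analogue of Remark~\ref{rem:measure} still holds: conditional on $t(\varepsilon)=a$, the subtrees $t\restr_\dL$ and $t\restr_\dR$ are independent with distributions $\mu^{a_\dL}_\CP$ and $\mu^{a_\dR}_\CP$ drawn jointly from $\tau(a)$.

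Next, I would replace the operator $\CF$ by $\vec\CF\colon (\Fdis{\powerset(Q)})^A\to(\Fdis{\powerset(Q)})^A$ defined by
\begin{equation*}
\vec\CF\big((\beta_b)_{b\in A}\big)_a(P)\eqdef \sum_{(a_\dL,a_\dR)\in A^2}\tau(a)(a_\dL,a_\dR)\cdot \sum_{(P_\dL,P_\dR)\,:\,\Delta(P_\dL,a,P_\dR)=P}\beta_{a_\dL}(P_\dL)\cdot\beta_{a_\dR}(P_\dR).
\end{equation*}
The operators $\CQ_{<n}$ and $\CQ_{\geq n}$ act componentwise. With these definitions each structural step from Sections~\ref{sec:measures} and~\ref{sec:limit} goes through mutatis mutandis: Lemma~\ref{lem:step-apply-F} becomes $\Fmu^{\cdot}_\CP(\CS^n_{i+1})=\vec\CF\big(\Fmu^{\cdot}_\CP(\CS^n_i)\big)$ (using the conditional independence above instead of the uniform-measure independence), Lemma~\ref{lem:compute-apply-s-r} survives unchanged, Lemma~\ref{lem:monotone-F} holds since the $\tau(a)(a_\dL,a_\dR)$ are nonnegative weights and the monotonicity argument factors through them, and Lemma~\ref{granica} and Proposition~\ref{pro:compute-limits} transfer verbatim to the product order on $(\Fdis{\powerset(Q)})^A$.

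The formula $\psi_{\CA,\CP}(x)$ is then built exactly as $\psi_\CA(x)$, with each quantified distribution variable $\alpha_n,\beta_n,\theta$ replaced by an $|A|$-tuple $(\alpha_{n,a})_{a\in A}$, etc., and the constant $\frac{1}{|A|}$ inside $\CF$ replaced by the rational constants coming from $\tau$, encoded as ratios of binary integers appearing in $\CP$. The base case~\eqref{eq:it-alphaO} remains $(\alpha_0)_a(Q){=}1$ and $(\alpha_0)_a(P){=}0$ for $P\neq Q$ for every $a$, and Condition~\eqref{eq:it-x} is replaced by
\begin{equation*}
x=\sum_{a\in A}\alpha_\init(a)\cdot\sum_{P\,:\,q_\init\in P}(\alpha_N)_a(P).
\end{equation*}
Soundness and completeness follow by the same induction on $n$, invoking the generalised Lemmata and Proposition.

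For the size and complexity bounds, note that the indexing by $A$ multiplies the number of free variables by $|A|$ (and the number of quadratic monomials in $\vec\CF$ by $|A|^2$), while the rational constants $\tau(a)(a_\dL,a_\dR)$ and $\alpha_\init(a)$ contribute additively in the binary size of $\CP$. Hence the formula remains in prenex normal form with quantifier alternation $4$, its size is $\mathcal{O}\big(N\cdot K^2\cdot|A|^2\cdot\|\CP\|\big)$ where $K=2^{|Q|}$, i.e.\ exponential in $\CA$ and polynomial in $\CP$. The main technical obstacle is to confirm that the Markov conditional-independence identity for $\mu^a_\CP$ is strong enough to replicate the chain of equalities in the proof of Lemma~\ref{lem:step-apply-F}; once that is in place, the remaining order-theoretic and fixed-point arguments transfer mechanically since they depend only on monotonicity and pointwise continuity of $\vec\CF$, both of which are inherited from the nonnegativity of the weights $\tau(a)(a_\dL,a_\dR)$.
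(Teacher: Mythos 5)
Your proposal is correct and follows exactly the route the paper indicates: lifting the construction to the product space $(\Fdis{\powerset(Q)})^A$ with the coordinatewise order, replacing $\CF$ by a $\tau$-weighted operator, and adjusting the output condition by $\alpha_\init$. The paper itself only sketches this step and defers the details to a full version, so your fleshed-out argument (in particular the conditional-independence identity for $\mu^a_\CP$ and the verification that monotonicity and continuity are inherited) is a faithful completion of the intended proof.
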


If~one does not care about the complexity, the above result can be~obtained directly, by~interpreting the branching process $\CP$~in an~automaton $\CA$. More precisely, there exists an~algorithm that, given a~weak alternating automaton $\CA$ and a~branching process~$\CP$, computes another weak alternating automaton~$\CA_{\CP}$ such that
\[\mu_{\CP}\big(\lang(\CA)\big)=\mu_{0}\big(\lang(\CA_\CP)\big).\]
Therefore, the decidability part of~Theorem~\ref{thm:exists-psi-branching} follows directly from Theorem~\ref{thm:exists-psi}.
A~construction of~$\CA_\CP$ is given in Subsection~\ref{ssec:reduction}. Another advantage of the construction given there is that it deals explicitly with branching processes of arbitrary branching (possibly non\=/binary). However, it~is~possible to~provide a~direct way of~constructing the formula~$\psi_{\CA,\CP}$ with the size of~the formula polynomial in~the size of~$\CP$, see Subjection~\ref{ssec:direct}.

\subsection{Encoding branching processes in automata}
\label{ssec:reduction}

This section shows how to use the expressive power of weak MSO to simulate branching processes within the uniform measure.

An~\emph{$\ell$\=/branching tree} over an~alphabet $A$ is a~function $\fun{t}{\{\dD_1,\ldots,\dD_\ell\}^*}{A}$, where $\dD_1$,\ldots,$\dD_\ell$ are $\ell$ distinct symbols (we assume that $\dL=\dD_1$ and $\dR=\dD_2$). The set of all such trees is denoted $\trees_A^{(\ell)}$.

Similarly, an~\emph{$\ell$\=/branching process} $\CP=\langle A, \tau, \alpha_\init\rangle$ is defined analogously to a~branching process, except that a~branching function $\fun{\tau}{A}{\Fdis{A^\ell}}$ randomly produces $\ell$\=/tuples of letters. This implies that the measure $\mu_\CP$ is a~Borel measure over the set of $\ell$\=/branching trees $\trees_A^{(\ell)}$.

An~\emph{$\ell$\=/branching} alternating automaton $\CA$ is again analogous to a~standard alternating automaton but the atoms $(d,q')$ in the transition formulae satisfy $d\in\{\dD_1,\ldots,\dD_\ell\}$. If $t$ is an~$\ell$\=/branching tree and $\CA$ is an~$\ell$\=/branching automaton, then the game $\CG(t,p)$ is defined analogously as in Section~\ref{sec:basic}. Thus, the language $\lang(\CA)$ is a~subset of $\trees_A^{(\ell)}$.

According to the above definitions, standard trees, branching processes, and automata, as defined in the main body of this article, are $2$\=/branching.

\begin{proposition}
\label{pro:reductionToUniformMeasure}
Let $\CA$ be a~weak $\ell$\=/branching alternating automaton over an~alphabet $A$ and $\CP$ be~a~$\ell$\=/branching process. Let $A_0$ be any alphabet with at least two symbols. Then, one can construct a~weak $2$\=/branching alternating automaton $\CA_\CP$ over the alphabet $A_0$ such that $\mu_{\CP}\big(\lang(\CA)\big) = \mu_0\big(\lang(\CA_\CP)\big)$, where $\mu_0$ is the uniform measure over $2$\=/branching trees $\trees_{A_0}$.
\end{proposition}

Notice that for $\ell=2$ this reduction is made redundant by the results of Subsection~\ref{ssec:direct}, which allows us to directly compute $\mu_\CP\big(\lang(\CA)\big)$. Moreover, the construction provided there has better complexity: the obtained formula $\psi_{\CA,\CP}$ is only polynomial in the size of $\CP$. However, we provide the present reduction because it shows that the class of languages recognisable by weak alternating automata is robust. In particular, if one does not care about the size of the respective formulae, then Theorem~\ref{thm:exists-psi-branching} can be obtained via the above reduction directly from Theorem~\ref{thm:exists-psi}. Also, this is the only place in the article when we explicitly deal with branching processes of higher branching than $2$.

We~start with an~encoding of~rational numbers.

\begin{lemma}
\label{lem:encoding-rational-in-words}
Let $X$ be a~finite set, $A_0$ any alphabet with at least two symbols, and $\alpha\in\Fdis{X}$ a~probabilistic distribution with rational values. Then there exists a~weak alternating automaton $\CA_\alpha$ over the alphabet $A_0$ with a~set of states $Q_\alpha$ and a~function $\fun{j}{X}{Q_\alpha}$ such that:
\begin{itemize}
\item for $x\neq x'\in X$ the languages $\lang(\CA_\alpha,j(x))$ and $\lang(\CA_\alpha,j(x'))$ are disjoint;
\item the union $\bigcup_{x\in X} \lang(\CA_\alpha,j(x))$ is the set of all trees $\trees_{\{0,1\}}$;
\item for every $x\in X$ the measure $\mu_0\big(\lang(\CA_\alpha,j(x))\big)$ equals $\alpha(x)$.
\end{itemize}
\end{lemma}

\begin{proof}
Without loss of generality we can assume that $A_0=\{0,\ldots, | A_0| -1\}$.
Assume that $X=\{x_1,\ldots,x_K\}$. Fix rational numbers $r_k\eqdef\sum_{k'\leq k}\alpha(x_{k'})$ for $k=0,\ldots, K$. We know that $r_0=0$ and $r_K=1$. For each $k=0,\ldots,K$ let $e_k$ be the $M$\=/ary expansion of $r_k$, i.e.~$e_k \in A_0^{\omega}$ is~a~word such that~$r_k = 0.e_k$. Since each of the numbers $r_k$ is rational, the words $e_k$ are ultimately periodic, i.e.~of the form $u\cdot v\cdot v \cdot v \cdots$

Each tree $t\in\trees_{A_0}$ induces a~real number $r(t)\in[0,1]$ that
is obtained by reading the left\=/most branch of $t$ and treating it 
as an~$|A_0|$\=/ary expansion of $r(t)$.

Let $e,e'\in A_0^\omega$ be two expansions of rational numbers with
$0.e< 0.e'$. It is now standard 
%\MS{add citation?} 
to construct a~weak deterministic automaton $\CA_{e,e'}$ with an~initial state $q_{e,e'}$ that accepts a~tree $t\in\trees_{A_0}$ if and only if $0.e\leq r(t)<0.e'$.

Now, to obtain the automaton $\CA_\alpha$ it is enough to take the disjoint union of the automata $\CA_{e_{k-1},e_k}$ for $k=1,\ldots,K$ and define $j(x_k)=q_{e_{k-1},e_k}$.
\end{proof}

We now move to the proof of Proposition~\ref{pro:reductionToUniformMeasure}. Take a~weak $\ell$\=/branching alternating automaton over an alphabet $A$ and an~$\ell$\=/branching process $\CP$ over the same alphabet. For the sake of simplicity assume that the initial distribution $\alpha_\init$ of $\CP$ is concentrated in a~single letter $a_\init \in A$.

The above construction will be used to simulate the random choice represented by the distributions $\tau(a)\in \Fdis{A^\ell}$. The automaton $\CA_\CP$ is defined as~a disjoint union of the automata $\CA_{\tau(a)}$ for each $a\in A$ together with a~modified copy of $\CA$. This modified copy of $\CA$ has states of the following two forms:
\begin{itemize}
\item pairs $(q,a)$ where $q$ is a state of $\CA$ and $a\in A$;
\item triples $(d,q,a)$ where $d\in\{\dD_1,\ldots,\dD_\ell\}$, $q$ is a~state of $\CA$, and $a\in A$.
\end{itemize}

Given a~transition $\delta(q,a)$ of the automaton $\CA$ and a~vector $\vec{a}\in A^\ell$ let $\bar{\delta}(q,a,\vec{a})$ be defined as the same formula as $\delta(q,a)$, except that each atom $(d,q)$ is replaced by $\big(\dR,(d,q,\vec{a}(d))\big)$ --- a~transition to the right in a~tree to the state $(d,q,\vec{a}(d))$ of $\CA_\CP$. Now, the automaton $\CA_\CP$, together with all the transitions of $\CA_{\tau(a)}$ for $a\in A$ has the following transitions for $b\in A_0$:
\begin{align*}
\delta\big((q,a),b\big)&\eqdef \bigvee_{\vec{a}\in A^\ell}\ \big(\dL, j(\vec{a})\big)\land \bar{\delta}(q,a,\vec{a})\\
&\qquad\text{where $j(\vec{a})$ is the respective state of the automaton $\CA_{\tau(a)}$}\\
&\qquad\text{such that $\mu_0\big(\lang(\CA_{\tau(a);}, j(\vec{a})\big)=\tau(a)(\vec{a})$}\\
\delta\big((\dD_1,q,a),b\big)&\eqdef \big(\dL, (q,a)\big)\\
\delta\big((\dD_{k+1},q,a),b\big)&\eqdef \big(\dR, (\dD_k,q,a)\big)\qquad\text{for $k=1,\ldots,\ell{-}1$.}\\
\end{align*}
The priority mapping of $\CA_\CP$ is taken from $\CA_{\tau(a)}$ and $\CA$ respectively, i.e.~$\Omega(q,a)=\Omega(d,q,a)=\Omega(q)+2$ --- we need this shift because the initial states of $\CA_{\tau(a)}$ have priority $2$. Let the initial state of $\CA_\CP$ be $(q_\init,a_\init)$.

The automaton $\CA_\CP$ is designed in such a~way to treat each tree $t\in\trees_{A_0}$ as an~encoded version of a~tree $t\in\trees_{A}$. To formally prove this fact, we first need to define that encoding. For this purpose, we define a~family of functions $T_a$ from $\trees_{A_0}$ into $\trees_{A}^{(\ell)}$ indexed by letters $a\in A$. Consider $a\in A$ and a~tree $t\in\trees_{A_0}$. Let $t'\eqdef t\restr_\dL$ be the left subtree of $t$. Similarly, for $k=1,\ldots,\ell$ let $t_k\eqdef t\restr_{\dR^k\dL}$. Let $\vec{a}\in A^\ell$ be the unique vector of letters such that $t'\in\lang\big(\CA_{\tau(a)},j(\vec{a})\big)$. Notice that since $t$ was chosen randomly, the probability distribution of the vectors $\vec{a}$ defined here is exactly $\tau(a)$. Then, let the resulting tree $T_a(t)$ have the root labelled $a$ and for $k=1,\ldots,\ell$ let the $\dD_k$\=/th subtree of $T_a(t)$ equal $T_{\vec{a}(k)}(t_k)$. See Figure~\ref{fig:the-operator-T} for a~depiction of that definition.

\begin{figure}
\centering
\begin{tikzpicture}[scale=0.6]

\newcommand{\subtr}[3]{
\coordinate (#1) at (#2) {};

\draw (#1) -- ++(-0.5,-1);
\draw (#1) -- ++(+0.5,-1);
\node[anchor=center] at ($(#1)+(0,-0.8)$) {#3};
}

\newcommand{\subTR}[3]{
\coordinate (#1) at (#2) {};

\draw (#1) -- ++(-1.2,-2.4);
\draw (#1) -- ++(+1.2,-2.4);
\node[anchor=center] at ($(#1)+(0,-2.0)$) {#3};
}

\node (t) at (0,+1.5) {$t$};
\node (T) at (10,+1.5) {$T_a(t)$};
\draw (t) edge[|-Latex, bend left=10] (T);

\tikzstyle{treeN} = [draw,circle]
\tikzstyle{treeE} = [draw]
\tikzstyle{treeD} = [line cap=round,thick,dash pattern=on \pgflinewidth off 12pt]

\node[treeN] (r) at (0,0) {$\_$};
\subtr{tl}{$(r)+(-1.5,-2)$}{$t'$}
\draw[treeE] (r) -- (tl);

\node[treeN] (t1) at ($(r)+(+1,-2)$) {$\_$};
\draw[treeE] (r) -- (t1);
\subtr{tr1}{$(t1)+(-1,-2)$}{$t_1$}
\draw[treeE] (t1) -- (tr1);
\draw[treeE] (t1) -- ++(0.6,-1.2);
\coordinate (dds) at ($(t1)+(+1.2,-2.4)$);

\draw[treeD] ($(dds)+(-0.3,+0.6)$) -- ($(dds)+(+0.3,-0.6)$);

\coordinate (dds) at ($(t1)+(+1.2,-2.4)+(-1,-2)$);

\draw[treeD] ($(dds)+(-0.3,+0.6)$) -- ($(dds)+(+0.3,-0.6)$);

\node[treeN] (t2) at ($(t1)+(+2,-4)$) {$\_$};

\subtr{tr2}{$(t2)+(-1,-2)$}{$t_\ell$}
\subtr{te}{$(t2)+(+1,-2)$}{$\_$}

\draw[treeE] (t2) -- (tr2);
\draw[treeE] (t2) -- (te);

\node[treeN] (r) at (10,0) {$a$};
\subTR{tr1}{$(r)+(-2.5,-2)$}{$T_{\vec{a}_1}(t_1)$};
\subTR{tr2}{$(r)+(+2.5,-2)$}{$T_{\vec{a}_\ell}(t_\ell)$};
\draw[treeE] (r) -- (tr1);
\draw[treeE] (r) -- (tr2);

\draw[treeD] ($(r)+(-1,-2.5)$) -- ($(r)+(+1,-2.5)$);
\end{tikzpicture}
\caption{An~illustration of an~operation $T_a$ for $a\in A$. Nodes and the subtree marked with $\_$ are irrelevant in this construction. The subtree $t'$ is used to determine which vector $\vec{a}\in A^\ell$ to use --- it simulates the random choice of that vector using $\tau(a)$. Then the subtrees $t_k$ for $k=1,\ldots,\ell$ are recursively decoded by $T_{\vec{a}_k}$ according to the chosen letters of $\vec{a}$.}
\label{fig:the-operator-T}
\end{figure}

\begin{claim}
\label{cl:T-a-red}
Given a~tree $t\in\trees_{A_0}$ the automaton $\CA_\CP$ accepts $t$ from a~state $(q,a)$ if and only if $\CA$ accepts the tree $T_a(t)$ from $q$. In other words,
\[\lang\big(\CA_\CP,(q,a)\big)=T_a^{-1}\big(\lang(\CA,q)\big).\]
\end{claim}

\begin{proof}
First observe that Lemma~\ref{lem:encoding-rational-in-words} implies that whenever $\CA_\CP$ takes a~transition of the form $\delta\big((q,a),b\big)$ then there is exactly one candidate of $\vec{a}\in A^\ell$ such that the left subtree under the current node can be accepted from the state $j(\vec{a})$. Therefore, player $\eve$ in the game $\CG\big(t,(q,a)\big)$ is always forced to choose that disjunct there. If the proper disjunct is chosen, then the choice of the atom $\big(\dL,j(\vec{a})\big)$ is losing for \adam because the respective subtree $t'$ belongs to $\lang\big(\CA_{\tau(a)},j(\vec{a})\big)$. Thus, we can assume that \adam never chooses this atom.

Under the two above assumptions, the game $\CG\big(t,(q,a)\big)$ given by the automaton $\CA_\CP$ becomes equivalent to the game $\CG(T_a(t), q)$ given by the automaton $\CA$.
\end{proof}

The next lemma states that the mapping $T_{a_\init}$ for the initial symbol $a_\init\in A$ allows to move between the measures $\mu_0$ and $\mu_\CP$. Recall that we have assumed that $\alpha_\init(a_\init)=1$.

\begin{lemma}
\label{lem:T-a-pres}
The mapping $T_{a_\init}$ preserves the measure: for every measurable subset $L\subseteq \trees_A^{(\ell)}$ and its pre\=/image $L'\eqdef T_{a_\init}^{-1}(L)$ we have $\mu_0(L')=\mu_\CP(L)$.
\end{lemma}

\begin{proof}
It is enough to check this on a~basic set $L$ as in~\eqref{eq:def-of-process-measure}. But in that case it follows from Lemma~\ref{lem:encoding-rational-in-words} and the fact that the subtrees $t'$ used to choose the respective vectors $\vec{a}$ have pairwise\=/incomparable roots.
\end{proof}

By applying Claim~\ref{cl:T-a-red} and Lemma~\ref{lem:T-a-pres} we obtain that \[\mu_0\big(\lang(\CA_\CP)\big)=\mu_0\big(\lang\big(\CA_\CP,(q_\init,a_\init)\big)\big)=\mu_0\big(T_{a_\init}^{-1}\big(\lang(\CA,q_\init)\big)\big)=\mu_\CP\big(\lang(\CA,q_\init)\big)=\mu_\CP\big(\lang(\CA)\big).\]
This concludes the proof of Proposition~\ref{pro:reductionToUniformMeasure}.

\subsection{Branching processes - direct construction}
\label{ssec:direct}

\newcommand{\muBP}{\mu_{\CP}}
\newcommand{\muBPvec}{\vec{\mu}_{\CP}}
\newcommand{\alphavec}{\alpha_\CP}
\newcommand{\betavec}{\beta_\CP}

In this section we want to show how to extend our main result, of computing the uniform measure of a~weak\=/MSO recognisable language, to
measures generated by arbitrary branching processes.

The core of the proof will stay the same as in the main part of~the article, we will define two types of operators $\CF$, $\CQ$,
and explain, how the measure can be computed using those operators.

Let us fix a~regular language of trees $L$ and a~weak alternating automaton $\CA$ such that $\lang(\CA) = L$.

Let us fix a~branching process $\CP=\langle A,\tau,\alpha_\init\rangle$. We want to distinguish between the alphabet $A$ treated as the set of labels of trees, from $A$ treated as vertices of the branching process $\CP$. Thus, we put $V=A$ and use the symbol $v\in V$ to denote letters generated by $\CP$. This means that $\fun{\tau}{V}{\Fdis{V^2}}$ and $\alpha_\init\in\Fdis{V}$.

By $\muBP(v)$, where $v \in V$,  we understand the measure induced by the process $\CP$ with the initial distribution $\alpha'_\init$ concentrated in $v$, i.e.~$\alpha'_\init(v) = 1$ and $\alpha'_\init(v')=0$ for $v'\neq v$.

By a~simple calculation, we have that
\begin{equation}
\label{eq:measure-bp-to-dirac-bp}
\mu_{\CP}(L) = \sum_{v \in V} \alpha_\init(v)\cdot \mu_{\CP}(v)(L).
\end{equation}

Thus, we only need to determine the values of $\muBPvec(v)(L)$ for $v\in V$.

The measure defined in a subtree, unlike in Remark~\ref{rem:measure}, is not always uniform and may non\=/trivially depend on~the label of the root of the subtree.
This implies that the distributions~$\beta$ used in the whole procedure may depend on the initial vertex and, thus, this information has to be included.
It turns out that simply lifting distributions to tuples indexed by the origin point in the branching process is enough.
We lift distributions to tuples of distributions by defining $\fun{{\betavec}}{V}{\Fdis \powerset(Q)}$. In other words, the basic space that we work is, instead of $\Fdis\powerset(Q)$ is now $\big(V\to \Fdis\powerset(Q)\big)$. Let the order ${\preceq}$ be defined on $V\to \Fdis\powerset(Q)$ coordinate-wise: $\alphavec\preceq \betavec$ if for every $v\in V$ we have $\alphavec(v) \preceq \betavec(v)$.

Now, our definitions of previously used operations have to be adjusted accordingly.
By slight abuse of notation, we will simply overload the definitions. This will not produce confusion, since we~will not use the old definitions in this part.

Take a~$Q$\=/indexed family $\CL$. Define the distribution ${\muBPvec}\in V \to \Fdis P(Q)$.
\begin{equation}
\muBPvec\big(\CL\big){(v)}(P)\eqdef\muBP{(v)}\big\{t\in\trees_A\mid \CL[t]=P\big\}
\end{equation}
Notice that the set of trees with root labelled $v$ is of full ${\mu}_{\CP}(v)$ measure. Thus
\begin{equation}
\label{eq:root-label-fixed}
\mu_{\CP}(v)\big\{t\in\trees_A\mid \CL[t]=P\big\} = \mu_{\CP}(v)\big\{t\in\trees_A\mid \CL[t]=P \land t(\varepsilon) = v\big\}
\end{equation}
Also, the measure $\mu_{\CP}$ satisfies the following independence property similar to Remark~\ref{rem:measure}.

\begin{remark}
\label{rem:CP-local-ind}
Let $L_\dL, L_\dR\subseteq\trees_A$ be two Borel sets and $v\in V$. Then
\[\mu_{\CP}(v)\big\{t\mid t\restr_\dL {\in} L_\dL\land t(\varepsilon){=}v\land t\restr_\dR{\in} L_\dR\big\}=\sum_{v_\dL, v_\dR\in V^2}\tau(v)(v_\dL,v_\dR)\cdot\mu_{\CP}(v_\dL)\big(L_\dL\big)\cdot \mu_{\CP}(v_\dR)\big(L_\dR\big).\]
\end{remark}

%If the process $\CP$ is clear from the context, we may drop it and simply write $\vec{\mu}_{v}$ and $\mu_{v}$.
As before, the sets in consideration are measurable thanks to Proposition~\ref{pro:measurability}.

\begin{lemma}
\label{lem:order-of-families-bp}
Fix $v \in V$.	If for every $q\in Q$ we have $\CL(q)\subseteq \CL'(q)$ then $\vec{\mu}_{\CP}(\CL)\preceq \vec{\mu}_{\CP}(\CL')$ in $V\to\Fdis{\powerset(Q)}$.
\end{lemma}
\noindent
The proof is the same as the proof of Lemma~\ref{lem:order-of-families}, as it depends on general properties of~measures.

Now, we examine the sequences of distributions $\vec{\mu}_{\CP}\big(\CS^n_i\big)$, $\vec{\mu}_{\CP}\big(\CR^n_i\big)$, $\vec{\mu}_{\CP}\big(\CS^n_\infty\big)$, and $\vec{\mu}_{\CP}\big(\CR^n_\infty\big)$ arising from the $Q$\=/families introduced before.
Our aim again is to bind them by some equations computable within $V\to\Fdis{\powerset(Q)}$.
As an analogue to the operation $\CF$, we introduce the function  $\fun{\CF_{\CP}}{(V \to \Fdis{\powerset(Q)})}{(V \to \Fdis{\powerset(Q)})}$ defined for $v \in V$, $\betavec \in V \to \Fdis{\powerset(Q)}$, and $P\in \powerset(Q)$ by
\begin{equation}
\label{eq:def-F-bp}
\CF_{\CP}(\betavec)(v)(P)\eqdef \sum_{(P_\dL,v,P_\dR)\in\Delta^{-1}(P)}\ %
                  \sum_{(v_\dL, v_\dR) \in V^2} \tau(v)(v_\dL, v_\dR)\big(\betavec(v_\dL)(P_\dL)%
                  \cdot \betavec(v_\dR)(P_\dR)\big)
\end{equation}

As before, the formula  guarantees that $\CF_{\CP}(\betavec)(v)$ is indeed a~probabilistic distribution in $\Fdis\powerset(Q)$.
The operator $ \CF$  will allow us to transfer the inductive
definitions of the $Q$\=/families $\CS^n_{i+1}$ and
$\CR^n_{i+1}$ given by Lemma~\ref{lem:ind-for-S-R-Delta}, to the level
of probability distributions.

From now on, we omit the index $\CP$ in $\CF_{\CP}$.

\begin{lemma}
	\label{lem:step-apply-F-bp}
		For each $n\in\N$ and $i\in\N$ we have
	\[\muBPvec\big(\CS^n_{i+1}\big)=\CF\Big(\muBP\big(\CS^n_i\big)\Big)\text{ and }\muBPvec\big(\CR^n_{i+1}\big)=\CF\Big(\muBPvec\big(\CR^n_i\big)\Big).\]
\end{lemma}

\begin{proof}
	Take $P\in\powerset(Q)$ and $v \in V$ observe that
%	{\tiny
	\begin{align*}
		\CF\Big(\muBPvec\big(\CS^n_i\big)\Big)(v)(P) &\eqext{1} \sum_{(P_\dL,v,P_\dR)\in\Delta^{-1}(P)}\ %
		\sum_{(v_\dL, v_\dR) \in V^2} \tau(v)(v_\dL, v_\dR)\cdot\\
		&\qquad\qquad\qquad
		\Big(\muBPvec\big(\CS^n_i\big)({v_\dL})(P_\dL)%
		\cdot \muBPvec\big(\CS^n_i\big)({v_\dR})(P_\dR)\Big)\\
		%%%%%%%%%%%%%%%%%%%%%%%%%%%%%%%%%%%%%%%%%%%%%%%%%%%%%%%%%%%%%%%%%%%%%%%%%%%%%%
		& \eqext{2} \sum_{(P_\dL,v,P_\dR)\in\Delta^{-1}(P)}\ %
		\sum_{(v_\dL, v_\dR) \in V^2} \tau(v)(v_\dL, v_\dR)\cdot\\
		&\qquad\qquad\qquad
		\Big(\muBP({v_\dL})\big\{t_\dL\mid \CS^n_i[t_\dL]{=}P_\dL\big\}%
		\cdot \muBP({v_\dR})\big\{t_\dR\mid \CS^n_i[t_\dR]{=}P_\dR\big\}\Big)\\
		%%%%%%%%%%%%%%%%%%%%%%%%%%%%%%%%%%%%%%%%%%%%%%%%%%%%%%%%%%%%%%%%%%%%%%%%%%%%%%
		&\eqext{3} \sum_{(P_\dL,v,P_\dR)\in\Delta^{-1}(P)}\ \mu_\CP(v)\big\{t\mid \CS^n_i[t\restr_\dL]{=}P_\dL\land t(\varepsilon){=}v\land \CS^n_i[t\restr_\dR]{=}P_\dR\big\}\\
		%%%%%%%%%%%%%%%%%%%%%%%%%%%%%%%%%%%%%%%%%%%%%%%%%%%%%%%%%%%%%%%%%%%%%%%%%%%%%%
		&\eqext{4}\muBP(v)\left(\bigcup_{(P_\dL,v,P_\dR)\in\Delta^{-1}(P)}\big\{t\mid \CS^n_i[t\restr_\dL]{=}P_\dL\land t(\varepsilon){=}v\land \CS^n_i[t\restr_\dR]{=}P_\dR\big\}\right)\\
		%%%%%%%%%%%%%%%%%%%%%%%%%%%%%%%%%%%%%%%%%%%%%%%%%%%%%%%%%%%%%%%%%%%%%%%%%%%%%%
		&\eqext{5} \muBP(v)\Big\{t\in\trees_A\mid \Delta\big(\CS^n_i[t\restr_\dL], t(\varepsilon), \CS^n_i[t\restr_\dR]\big){=}P\land t(\varepsilon)=v\Big\}\\
		%%%%%%%%%%%%%%%%%%%%%%%%%%%%%%%%%%%%%%%%%%%%%%%%%%%%%%%%%%%%%%%%%%%%%%%%%%%%%%
		&\eqext{6} \muBP(v)\Big\{t\in\trees_A\mid \CS^n_{i+1}[t]{=}P\land t(\varepsilon)=v\Big\}\\
		&\eqext{7} \muBP(v)\Big\{t\in\trees_A\mid \CS^n_{i+1}[t]{=}P\Big\}\eqext{8}\muBPvec\big(\CS^n_{i+1}\big)(v)(P),
	\end{align*}
%	}%
where:~(1) is just the definition of
$\CF\Big({\muBPvec}\big(\CS^n_i\big)\Big)$; (2) follows from the definition
of $\muBPvec\big(  \CS^n_i \big)$; (3) follows from 
the definition of $\muBP$ and Remark~\ref{rem:CP-local-ind};
(4)~follows from the fact that the measured sets are pairwise
disjoint; (5)~follows 
%from the fact that $\Delta$ is a~function; 
simply from the definition of $\Delta$;
(6)~follows from Lemma~\ref{lem:ind-for-S-R-Delta}; (7) follows from~\eqref{eq:root-label-fixed}; and (7) is just the definition of $\muBPvec \big(\CS^n_{i+1}\big)$.
	
The proof for $\CR^n_{i+1}$ is entirely analogous (we use the $\CR^n_i$ variant of Lemma~\ref{lem:ind-for-S-R-Delta} instead).
\end{proof}

Now, recall that $Q_{\geq n}$ and $Q_{<n}$ are sets of states of
respective priorities. Let 
the functions $\fun{\CQ_{\geq n},\allowbreak\CQ_{<n}}{(V \to\Fdis\powerset(Q))}{(V \to\Fdis\powerset(Q))}$ be defined by
\begin{align*}
	\CQ_{\geq n}({\betavec})(v)(P)&\eqdef \sum_{P'\colon  P'\cup Q_{\geq n}=P} {\betavec}(v)(P'),\\
	\CQ_{< n}({\betavec})(v)(P)&\eqdef \sum_{P'\colon P'\cap Q_{< n}=P}{\betavec}(v)(P').
\end{align*}

\noindent
Again, the formulas guarantee that $\CQ_{\geq n}({\betavec})(v)$ and $\CQ_{< n}({\betavec})(v)$ are both probabilistic distributions in $\Fdis\powerset(Q)$.
The following lemma shows the relation between these functions and the limit distributions ${\muBPvec}\big(\CS^{n-1}_{\infty}\big)$ and ${\muBPvec}\big(\CR^{n-1}_{\infty}\big)$.

\begin{lemma}
	\label{lem:compute-apply-s-r-bp}
	For each $n\in\N$ we have
	\begin{align*}
		\CQ_{< n}\Big({\muBPvec}\big(\CS^{n-1}_\infty\big)\Big)&={\muBPvec}\big(\CR^n_0\big)&\text{if $n$ is odd,}\\
		\CQ_{\geq n}\Big({\muBPvec}\big(\CR^{n-1}_\infty\big)\Big)&={\muBPvec}\big(\CS^n_0\big)&\text{if $n$ is even.}
	\end{align*}
\end{lemma}

\begin{proof}
Note that the proof of Lemma~\ref{lem:compute-apply-s-r} does not depend on the underlying measure, and therefore it carries over.
\end{proof}

Again, the two above lemmata show that the operators $\CF$, $\CQ_{< n}$, and $\CQ_{\geq n}$ are enough to~perform the respective computations in $V\to\Fdis\powerset(Q)$ as they do on Figure~\ref{fig:construction}.

In Section~\ref{sec:limit} we prove the connection between the limit distributions ${\muBPvec}\big(\CS^{n}_\infty\big)$, ${\muBPvec}\big(\CR^{n}_\infty\big)$ (unary versions) and fixed points of the operator $\CF$, see Lemma~\ref{granica}.
The same line of proof works in the case of tuples, if we apply the reasoning point\=/wise, i.e.~we work now in $(\R^{\powerset(Q)})^V=\R^{\powerset(Q)\times V}$.
The only missing ingredient is the monotonicity of the new operator~$\CF_\CP$.

\begin{lemma}
\label{lem:props-of-F}
The operator $\fun{\CF}{(V \to \Fdis{\powerset(Q)})}{(V \to \Fdis{\powerset(Q)})}$ is point\=/wise monotone in the order~${\preceq}$ and continuous in $\R^{\powerset(Q)\times V}$.
\end{lemma}

\begin{proof}
Continuity is again trivial. The fact that $\CF$ is monotone follows from the monotonicity of $\Delta$ and the point\=/wise definition of the order as follows. Recall the definition of $\CF_\CP$, cf.~\eqref{eq:def-F-bp}:
\[
\CF(\betavec)(v)(P)= \sum_{(P_\dL,v,P_\dR)\in\Delta^{-1}(P)}\ %
\sum_{(v_\dL, v_\dR) \in V^2} \tau(v)(v_\dL, v_\dR)\big(\betavec(v_\dL)(P_\dL)%
\cdot \betavec(v_\dR)(P_\dR)\big)
\]
We need to prove that for a~fixed $v\in V$ function $\CF$ is monotone w.r.t. the order ${\preceq}$. Thus, for every $\alphavec\preceq \betavec\in V\to\Fdis{\powerset(Q)}$ and an~upward\=/closed family $U\subseteq \powerset(Q)$ we should have $\sum_{P\in U} \CF_\CP(\alphavec)(v)(P)\leq \sum_{P\in U} \CF_\CP(\betavec)(v)(P)$. After splitting the sum over separate letters $v$, $v_{\dL}$, $v_{\dR}\in V$, it is enough to show that for 
$O_v\eqdef \{(P_\dL,P_\dR)\mid\Delta(P_\dL,v,P_\dR)\in U\}$ we have
\[\sum_{(P_\dL, P_\dR)\in O_v}\alphavec(v_{\dL})(P_\dL)\cdot \alphavec(v_{\dR})(P_\dR)\leq\sum_{(P_\dL, P_\dR)\in O_v}\betavec(v_{\dL})(P_\dL)\cdot \betavec(v_{\dR})(P_\dR).\]
		
The set $O_v$ is again upward\=/closed on both coordinates, as in the proof of Lemma~\ref{lem:monotone-F}. We use the notation used there to denote the sections of that set.
Thus, using the assumption that $\alphavec\preceq \betavec$ twice (once for $v_\dL$ and once for $v_\dR$), we obtain

\begingroup
\allowdisplaybreaks
	\begin{align*}
		\sum_{(P_\dL, P_\dR)\in O_v}\alphavec(v_{\dL})(P_\dL)\cdot \alphavec(v_{\dR})(P_\dR)&= \sum_{P_\dL\in\powerset(Q)} \alphavec(v_{\dL})(P_\dL)\cdot \left(\sum_{P_\dR \in P_\dL^{-1}\cdot O_v} \alphavec(v_{\dR})(P_\dR)\right)\\
		&\leq \sum_{P_\dL\in\powerset(Q)} \alphavec(v_{\dL})(P_\dL)\cdot \left(\sum_{P_\dR \in P_\dL^{-1}\cdot O_v} \betavec(v_{\dR})(P_\dR)\right)\\
		&=\sum_{(P_\dL, P_\dR)\in O_v}\alphavec(v_{\dL})(P_\dL)\cdot \betavec(v_{\dR})(P_\dR)\\
		&=\sum_{(P_\dL, P_\dR)\in O_v}\betavec(v_{\dR})(P_\dR)\cdot \alphavec(v_{\dL})(P_\dL)\\
		&=\sum_{P_\dR\in\powerset(Q)} \betavec(v_{\dR})(P_\dR)\cdot \left(\sum_{P_\dL \in O_v\cdot P_\dR^{-1}} \alphavec(v_{\dL})(P_\dL)\right)\\
		&\leq\sum_{P_\dR\in\powerset(Q)} \betavec(v_{\dR})(P_\dR)\cdot \left(\sum_{P_\dL \in O_v\cdot P_\dR^{-1}} \betavec(v_{\dL})(P_\dL)\right)\\
		&=\sum_{(P_\dL, P_\dR)\in O_v}\betavec(v_{\dR})(P_\dR)\cdot \betavec(v_{\dL})(P_\dL)\\
		&=\sum_{(P_\dL, P_\dR)\in O_v}\betavec(v_{\dL})(P_\dL)\cdot\betavec(v_{\dR})(P_\dR).
	\end{align*}%
\endgroup
\end{proof}

Since $\CF$ is continuous and monotone, ${\muBPvec}\big(\CS^{n}_\infty\big)$ 
and ${\muBPvec}\big(\CR^{n}_\infty\big)$ are the greatest\=/ and least\=/fixed points of the appropriate operations.
This observation allows us to compute the values $\mu_{\CP}(v)(L)$ and given Equation~\eqref{eq:measure-bp-to-dirac-bp} 
we obtain the measure $\mu_{\CP}(L)$.

\section{Representing algebraic numbers}
\label{sec:algebra}

\begin{comment}
In~this short subsection we~argue how to~use the formulae~$\psi_\CA$ and $\psi_{\CA,\CP}$ from the above sections to~estimate and even represent the measures of~the language~$\lang(\CA)$.
\end{comment}
We now use  the  formulae~$\psi_\CA$ and $\psi_{\CA,\CP}$ constructed above to find the measure of the language~$\lang(\CA)$.
%Firstly, Theorem~\ref{thm:main-std} is based on the following result.
We~use the celebrated result of~Tarski~\cite{tarski_decision} and its two algorithmic improvements.

%\begin{theorem}[\!\! see~\cite{basu_algebraic_computation}, and~\cite{ben_tarski_complexity}]
\begin{theorem}[\!\!\cite{basu_algebraic_computation,ben_tarski_complexity}]
\label{thm:ben-tarski-MC}
Given a~formula $\psi$ of first\=/order logic over $\R$, one can decide if~$\psi$ holds in deterministic exponential space. Moreover, if~$\psi$ is~in~a~prenex normal form and the alternation of quantifiers $\forall$ and $\exists$ in~$\psi$ is~bounded then the algorithm works in~single exponential time in~the size of~$\psi$.
\end{theorem}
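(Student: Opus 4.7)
The plan is to invoke the well-developed machinery of effective quantifier elimination for the first-order theory of real closed fields; this is a classical result and I would essentially be repackaging the arguments of the cited references rather than producing anything new. At the coarsest level, the Tarski--Seidenberg principle already guarantees decidability through a recursive quantifier-elimination procedure, but the original version gives no reasonable complexity bound, so the real content lies in the algorithmic refinements.

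For the deterministic exponential-space bound, I would use a recursive, depth-first quantifier-elimination procedure in the style of Collins' cylindrical algebraic decomposition (or Grigoriev--Heintz-type projection algorithms). Eliminating a single variable produces a Boolean combination of sign conditions on polynomials whose degrees and number grow by a single-exponential factor, but the key observation is that the recursion can be organised so that only one branch of the search tree is kept in memory at a time. A careful accounting shows that the space used along any single branch remains simply exponential in the input, even though the total running time is doubly exponential; this is what yields EXPSPACE.

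For the second statement, when the alternation depth is bounded, I would follow the block-by-block elimination approach of Ben-Or--Kozen--Reif, later sharpened by Renegar and by Basu--Pollack--Roy. Writing the prenex formula as $Q_1\bar x_1 Q_2\bar x_2 \cdots Q_k\bar x_k\,\phi$ with polynomial data of size $s$ and degree $d$, these algorithms eliminate entire quantifier blocks at once with a cost of roughly $(sd)^{O(n_1)\cdot O(n_2)\cdots O(n_k)}$, where $n_i$ is the size of the $i$th block. When $k$ is held fixed, this expression collapses to a single exponential in the total input size, giving the claimed time bound.

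The main obstacle in a self-contained presentation would be the delicate algebraic accounting: bounding the number of realisable sign conditions on a family of polynomials via Thom's lemma or the Oleinik--Petrovskii--Milnor--Thom estimates, together with ensuring that intermediate polynomials are represented in a way (e.g.\ by Thom encodings of real algebraic numbers) that keeps the bit complexity under control. Since the excerpt only needs this theorem as a black box, I would cite the existing proofs in~\cite{basu_algebraic_computation,ben_tarski_complexity} rather than reproduce this machinery here.
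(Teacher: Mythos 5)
The paper offers no proof of this statement: it is imported verbatim as a black box from the cited works of Ben-Or--Kozen--Reif and Basu (et al.), exactly as you ultimately propose to do. Your sketch of the underlying machinery (quantifier elimination, space-efficient depth-first organisation for the EXPSPACE bound, and block-by-block elimination giving single-exponential time for a bounded number of quantifier blocks) is a faithful summary of those references, so your approach matches the paper's.
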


%Based on that we can prove our main Theorem~\ref{thm:main-branching}.

%\begin{proof}[Proof of Theorem~\ref{thm:main-std}]
\begin{proof}[Proof of Theorem~\ref{thm:main-branching}]
Input a~weak alternating automaton $\CA$, a~branching process $\CP$, and a~rational number~$q$. Consider the formula $\psi\equiv \exists x.\ \psi_{\CA,\CP}(x)\land q\bowtie x$, where $\bowtie$ is~one of~${<}$, ${=}$, or ${>}$. Notice that $\psi$ is in prenex normal form; its size is exponential in~the size of~$\CA$ and polynomial in the size of $\CP$; and its quantifier alternation is~constant. Apply the algorithm from Theorem~\ref{thm:ben-tarski-MC} to~check whether $\psi$ is~true in~$\R$.
\end{proof}

%The same method works for Theorem~\ref{thm:main-bra}.

\begin{comment}
However, one may also ask for an~algorithm computing a~representation of~the measure~$\mu_\CP\big(\lang(\CA)\big)$. The celebrated result of~Tarski~\cite{tarski_decision} provides a~\emph{quantifier elimination} procedure: given a~formula $\psi(x_1,\ldots,x_n)$ one can construct an~equivalent quantifier\=/free formula $\widehat{\psi}(x_1,\ldots,x_n)$. Moreover, such a~quantifier free formula can be represented by a~\emph{semialgebraic} set, see~\cite[Chapter~2]{bochnak_real_algebraic}.
\end{comment}

%\noindent
We can also compute  a~representation of~the measure~$\mu_\CP\big(\lang(\CA)\big)$. The 
\emph{quantifier elimination} procedure due to Tarski~\cite{tarski_decision} transforms 
%celebrated result of~ provides a~: given 
a~formula $\psi(x_1,\ldots,x_n)$ into  an~equivalent quantifier\=/free formula $\widehat{\psi} (x_1,\ldots,x_n)$, which moreover
%a~quantifier free formula    $\widehat{\psi} $
can be represented by a~\emph{semialgebraic} set, see  \cite[Chapter~2]{bochnak_real_algebraic}.

\begin{theorem}[\!\!\cite{collins_algebraic_decomposition}]
Given a~formula $\psi(x_1,\ldots,x_n)$ of first\=/order logic over $\R$, one can construct a~representation of the set of tuples $(x_1,\ldots,x_n)$ satisfying $\psi$, as a~semialgebraic set. Moreover, this algorithm works in time doubly\=/exponential in the size of $\psi$.
\end{theorem}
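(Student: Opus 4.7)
My plan is to invoke Collins's method of cylindrical algebraic decomposition (CAD). Given a~first\=/order formula $\psi(x_1,\ldots,x_n)$ over $\R$, I first extract the finite set $F$ of polynomials that appear in atomic subformulas. The approach constructs a~finite partition of $\R^n$ into connected semialgebraic \emph{cells} on which each polynomial in $F$ has constant sign; such a~partition is called \emph{$F$\=/invariant}. Since the truth of each atomic formula is constant on a~cell, the truth of $\psi$ is too, so the set defined by $\psi$ is a~union of cells, which yields the desired semialgebraic representation.

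The construction proceeds in two phases. In the \emph{projection} phase, one iteratively computes, from the current set of polynomials in $k$ variables, a~larger set in $k{-}1$ variables using resultants, discriminants, leading coefficients, and selected subresultant coefficients. This set is engineered so that any $F$\=/invariant CAD of $\R^{k-1}$ with respect to the projected polynomials lifts to an $F$\=/invariant CAD of $\R^k$ with respect to the original ones (the key property preserved is \emph{delineability} of the real roots of the lifted polynomials over each cell below). After $n{-}1$ projections we reach univariate polynomials in $x_1$, whose real roots immediately partition $\R$ into finitely many points and open intervals; this is the base decomposition. The \emph{lifting} phase then reconstructs decompositions of $\R^2, \R^3, \ldots, \R^n$ level by level: over each cell $C$ in $\R^{k-1}$, the level\=/$k$ polynomials restrict to continuous functions of $x_k$ whose real roots vary continuously along $C$, and their graphs together with the strips between them carve $C\times\R$ into finitely many cells, each of which can be represented semialgebraically by referring to the already\=/constructed cells of $\R^{k-1}$.

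The main obstacle is correctness of the projection operator: one must choose \emph{enough} auxiliary polynomials to guarantee delineability of the roots of the original polynomials over every cell of the lower\=/dimensional CAD, which is the substantive algebraic content of the construction and relies on classical results about subresultants and parametric real root counting. For the complexity bound, each projection step roughly squares the number of polynomials and multiplies their degrees by a~constant factor, so after $n$ such steps the number of cells is bounded by $d^{2^{O(n)}}$ with $d$ an~initial degree bound, giving the claimed doubly\=/exponential running time. Reading off the final semialgebraic representation of $\{(x_1,\ldots,x_n)\mid \psi\}$ is then routine: pick a~sample point in every top\=/level cell, evaluate $\psi$ there using the sign information propagated from the base case by the lifting, and take the union of the accepting cells, using the defining polynomials of those cells to produce a~quantifier\=/free description of the satisfying set.
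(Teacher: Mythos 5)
The paper gives no proof of this statement---it is quoted directly from Collins's cylindrical algebraic decomposition paper---and your proposal is a faithful outline of exactly that cited argument: projection via resultants/discriminants/subresultants to ensure delineability, lifting to build sign-invariant cells, truth evaluation at sample points, and the doubly-exponential cell count. So your approach coincides with the one the paper relies on, and the sketch is correct at the level of detail appropriate for a cited classical result.
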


Theorems~\ref{thm:exists-psi} and~\ref{thm:exists-psi-branching} together with the above results imply the following claim. 

\begin{corollary}
Given  a  weak alternating automaton $\CA$ of~size~$n$, one can compute a~representation of~the value~$\mu_0\big(\lang(\CA)\big)$ as~a~singleton semialgebraic set in~time triply exponential in~$n$. Moreover, given a~branching process of size~$m$, one can compute a~representation of~the value~$\mu_\CP\big(\lang(\CA)\big)$ as a~singleton semialgebraic set in~time triply exponential in~$n$ and doubly exponential in~$m$.
\end{corollary}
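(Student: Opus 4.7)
The plan is to obtain the semialgebraic representations by a direct concatenation of Theorems~\ref{thm:exists-psi} and~\ref{thm:exists-psi-branching} with the Collins quantifier elimination procedure cited immediately above. For the uniform measure case, I would first invoke Theorem~\ref{thm:exists-psi} to construct, in time exponential in $n$, the formula $\psi_\CA(x)$ in prenex normal form; by the soundness and completeness proved there, $\psi_\CA$ defines exactly the singleton $\{\mu_0(\lang(\CA))\}\subseteq\R$. I would then feed $\psi_\CA$ into Collins' algorithm to obtain an equivalent quantifier\=/free formula $\widehat\psi_\CA(x)$; such a quantifier\=/free formula is, by definition, a representation of its solution set as a semialgebraic subset of $\R$, and hence a representation of the required singleton.

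The remaining work is a routine complexity accounting. Collins' algorithm runs in doubly\=/exponential time in the size of the input formula. Since $|\psi_\CA|$ is bounded by $2^{O(n)}$, the running time of the combined procedure is bounded by $2^{2^{2^{O(n)}}}$, that is, triply exponential in $n$; and the size of the output semialgebraic representation is bounded by the same quantity.

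For the branching\=/process case I would apply exactly the same argument to the formula $\psi_{\CA,\CP}(x)$ from Theorem~\ref{thm:exists-psi-branching}, whose size is bounded by $2^{O(n)}\cdot\mathrm{poly}(m)$. Collins' procedure then runs in time at most $2^{2^{2^{O(n)}\cdot\mathrm{poly}(m)}}$, which as a function of $n$ alone is triply exponential and, once $n$ is considered fixed, as a function of $m$ is doubly exponential, as claimed.

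There is no real conceptual obstacle here; the corollary is essentially a packaging statement. The one thing worth checking explicitly, and which I would verify in the write\=/up, is that the constant quantifier alternation of $\psi_\CA$ and $\psi_{\CA,\CP}$ is not being silently used to tighten the bound: Collins' theorem as stated does not exploit bounded alternation, so the doubly\=/exponential dependence on formula size is what one should plug in, yielding precisely the triply/doubly\=/exponential bounds in the corollary.
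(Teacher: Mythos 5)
Your proposal is correct and matches the paper's (implicit) argument exactly: the corollary is obtained by feeding the prenex formulae $\psi_\CA$ and $\psi_{\CA,\CP}$ of Theorems~\ref{thm:exists-psi} and~\ref{thm:exists-psi-branching} into Collins' doubly\-/exponential quantifier\-/elimination procedure and composing the size bounds. Your complexity accounting, including the observation that bounded quantifier alternation is not exploited here, is accurate.
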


\begin{comment}
Assume a~given weak alternating automaton $\CA$ of~size~$n$. One can compute a~representation of~the value~$\mu_0\big(\lang(\CA)\big)$ as~a~singleton semialgebraic set in~time triply exponential in~$n$. Moreover, given a~branching process of size~$m$, one can compute a~representation of~the value~$\mu_\CP\big(\lang(\CA)\big)$ as a~singleton semialgebraic set in~time triply exponential in~$n$ and doubly exponential in~$m$.
\end{comment}

\section{Conclusions}
\label{sec:conclusions}

We have~shown  how to~compute the probability measure
%$\mu_0(L)$ 
of~a~tree language~$L$ recognised by~a~weak alternating automaton.
The crucial trait is \emph{continuity} of~certain approximations of
the measure of $L$
%~$\mu_0(L)$
in~a~properly chosen order~${\preceq}$, see Lemma~\ref{granica}. This 
continuity relies on~K\"onig's lemma, cf.~Lemma~\ref{lem:limit-steps-S-R}.
In terms of $\mu$\=/calculus, it stems from both the
absence of 
alternation between least and greatest fixed points  in formulae and
the boundedness of branching  in models
(for a study of continuity in $\mu $\=/calculus see~\cite{fontaine_continuous}).

Whether our  techniques can be extended beyond  weak automata---hopefully 
 to~all tree automata or, equivalently, full MSO logic, or full $\mu$\=/calculus---remains open.  The question is of interest as, e.g.~translation of the logic CTL* into 
 $ \mu$-calculus requires 
%a positive number of alternations 
at least one alternation
between  least and greatest fixed points
(cf.~\cite{demri-temporal-book}, Exercise~10.13).  On the other hand,
fixed point  formulas over binary trees are not continuous in~general, and may require~$\omega_1$ iterations to~reach stabilisation, already on the second level of the fixed\=/point hierarchy.

This problem has been already successfully tackled 
%in~the context of~measurability of~these languages ---
%in the proof 
in the context  of~measurability of regular tree languages---Mio~\cite{mio_game_semantics} uses Martin's axiom to~control the
behaviour of~measure when taking limits of~sequences of~length
$\omega_1$. Such behaviour cannot be~directly simulated in~$\Fdis X$,
because each 
%strictly ascending 
well\=/founded chain of~distributions has a~countable length. However,
this need not be~an~absolute obstacle as~it~might be~the case that the values of~the measure of the iterations stabilise before the actual fixed point is~reached, possibly in~$\omega$ steps.

\clearpage

\bibliography{mskrzypczak}

\end{document}